\numberwithin{equation}{section} %% Comment out for sequentially-numbered
\numberwithin{figure}{section} %% Comment out for sequentially-numbered
\theoremstyle{plain}
\theoremstyle{plain}
\newtheorem{thm}{Theorem}
  \theoremstyle{plain}
  \newtheorem{lem}{Lemma}
  \theoremstyle{plain}
  \newtheorem{prop}{Proposition}
  \theoremstyle{remark}
  \newtheorem*{note*}{Note}
  \theoremstyle{remark}
  \newtheorem*{conclusion*}{Conclusion}
  \theoremstyle{remark}
  \newtheorem{note}{Remark}
 \theoremstyle{definition}
  \newtheorem{defc}{Definition}
  \theoremstyle{plain}
  \newtheorem{cor}{Corollary}
\newcommand{\g}{\mathfrak{g}}
\newcommand{\kf}{\mathfrak{k}}
\newcommand{\pf}{\mathfrak{p}}
\newcommand{\dif}{\textrm{\textbf{d}}}
\newcommand{\cI}{{\mathcal I}}
\newcommand{\cL}{{\mathcal L}}
\newcommand{\mR}{\mathbb{R}}
\newcommand{\mC}{\mathbb{C}}
\newcommand{\wedcol}{\,\overset{\scriptscriptstyle\wedge}{,}\,}
\newcommand{\weddot}{\,\overset{\scriptscriptstyle\wedge}{\cdot}\,}
\begin{document}

\title{\sc Griffiths variational multisymplectic formulation for Lovelock gravity}
\author{{S. Capriotti}\thanks{(Corresponding author) Dept. Matem\'atica, Univ. Nacional del Sur, 8000 Bah\'ia Blanca, Argentina. {\it e-mail}: santiago.capriotti@uns.edu.ar.},
{J. Gaset}\thanks{Dept. Physics, Univ. Aut\`onoma de Barcelona, 08193 Bellaterra, Spain. {\it e-mail}: jordi.gaset@uab.cat.},
{N. Rom\'an-Roy}\thanks{Dept. Mathematics.
   Univ. Polit\`ecnica de Catalunya.
   08034 Barcelona, Spain. {\it e-mail}: narciso.roman@upc.edu.}, 
{L. Salomone}\thanks{Dept. Matem\'atica-CMaLP, Facultad de Ciencias Exactas, UNLP, 1900 La Plata, and CONICET, Argentina. {\it e-mail}: salomone@mate.unlp.edu.ar}}
\date{\today}

\maketitle

\begin{abstract}
This work is mainly devoted to constructing a multisymplectic description 
of Lovelock's gravity, which is an extension of General Relativity.
We establish the Griffiths variational problem for the Lovelock Lagrangian,
obtaining the geometric form of the corresponding field equations.
We give the unified Lagrangian--Hamiltonian formulation of this model
and we study the correspondence between the unified formulations for the 
Einstein--Hilbert and the Einstein--Palatini models of gravity.
\begin{comment}
The characterization of the Lovelock Lagrangian as the only that
satisfies the so-called ``consistent Levi-Civita truncation''
(besides being polynomial in the curvature) is also interpreted geometrically.
\end{comment}
\end{abstract}

\noindent\textbf{Keywords:}
Field theory, 
Lagrangian and Hamiltonian formalisms, jet bundles,
multisymplectic manifolds, Griffiths variational problem, Lovelock gravity, 
Hilbert-Einstein and Einstein-Palatini actions, Einstein equations.

\noindent\textbf{MSC\,2010 codes:}
{\it Primary}: 49S05, 70S05, 83D05; {\it Secondary}:  35Q75, 35Q76, 53D42, 55R10.

%\medskip
\setcounter{tocdepth}{2}
{
\def\baselinestretch{0.97}
\small
% hack per a eliminar l'espai vertical 1em
\def\addvspace#1{\vskip 1pt}
\parskip 0pt plus 0.1mm
\tableofcontents
}

%\newpage

\section{Introduction}

The development of the geometric description of classical field theories
using multisymplectic \cite{CCI-91,EMR-96,GMS-97,GIMMSY-mm,art:Roman09,book:Saunders89} or polysymplectic and $k$-(co)symplectic manifolds
\cite{LeSaVi2016,Gu-87,Ka-98}
has rekindled the interest in doing a totally covariant description 
of many theories in physics and, in particular, General Relativity
and other derived from it.
Many general aspects as well as specific problems and characteristics of
the theory have been studied in this way
(see, for instance, \cite{first,CreTe-2016,art:GR-2016,GIMMSY-mm,Ka-13,Ka-q2,Krupka,KrupkaStepanova,rosado2,rovelli}).

In particular, the multisymplectic techniques have been applied
to describe the most standard models of General Relativity:
the {\sl Einstein-Hilbert}  \cite{GRR} and the {\sl Einstein-Palatini} 
(or {\sl metric-affine\/})   \cite{doi:10.1142/S0219887818500445} models
(see, for instance, \cite{Capriotti:2012gg,doi:10.1142/S0219887818500445,GR-2019,vey1}).
In some of these applications, a unified formalism which joins 
the Lagrangian and Hamiltonian formalisms  into a single one has been used.
This unified Lagrangian-Hamiltonian formalism,
introduced for the first time in the pioneering work 
of R. Skinner and R. Rusk \cite{art:Skinner_Rusk83},
is especially useful in mechanics and field theories \cite{CV,dLMM,ELM+,PMRR}
when the Lagrangian that describes the system is singular. 
For this reason, such formalism finds immediate application 
in the study of both, the Einstein--Hilbert and the Einstein--Palatini  models of gravity.
In the first case, and following the formulation for second order field theories
developed in \cite{PMRR}, the symmetrized jet-multimomentum bundle is used as framework,
which turns out to be a premultisymplectic bundle and therefore 
admits the use of the premultisymplectic constraint algorithm \cite{dLMSM,dLMSM+} 
for the study of the field equations. 
In the second case, an indirect path for the construction of unified formalism is invoked:
 first, the field theory corresponding to the Einstein--Palatini model is formulated in \cite{Capriotti:2012gg} as a Griffiths variational problem \cite{book:852048}. Subsequently, and inspired by the work of Gotay \cite{GotayCartan}, a unified formalism is constructed as a Lepage-equivalent problem related to the latter \cite{doi:10.1142/S0219887818500445}. Although it is known that the Einstein--Palatini
and the Hilbert-Einstein Lgrangians essentially lead to the same 
field equations \cite{DP1} (the Einstein equations), the way in which the unified formulations correspond to each other is unknown.

In the last decades, new models that extend General Relativity 
have emerged in theoretical physics \cite{CdL-2011,CGM-16,Fe-12}. In particular,
{\sl Lovelock's gravity} is a generalization of General Relativity (in vacuum) 
introduced by D. Lovelock \cite{Lovelock1970,lovel}
(see also \cite{DF-2012} for a previous work on the canonical analysis of Lanczos-Lovelock gravity).
His idea was to characterize all the symmetric tensors of order 2,
without divergence, that can be constructed from the metric tensor and its derivatives 
up to second order. In dimension 4, it turns out that the only tensors that verify 
these properties are the metric and the Einstein tensor. 
In addition Lovelock proved that this tensor encodes the
Euler--Lagrange equations of a Lagrangian density that is a polynomial 
in the (pseudo) Riemannian curvature.
An interesting characterization for the Lovelock Lagrangian is provided in \cite{DP2}: 
it is the only Lagrangian that is a polynomial in the (pseudo) Riemannian curvature 
and is also stable under a procedure called {\sl consistent Levi-Civita truncation}.
Similar considerations can be found in \cite{P,PK}, where the idea consists in considering
the Lagrangian as a function independent of the metric and the curvature, 
and to find relations between the partial derivatives of the Lagrangian 
with respect to these variables, induced by the geometry of the problem. As the other aforementioned models in General Relativity, 
the Lovelock Lagrangian is singular and then the multisymplectic formulation and, in particular,
the unified Lagrangian-Hamiltonian formalism are especially suitable for its study.

The objectives of this paper are to state and prove the most general and precise results on the following aspects:
to study the correspondence between the unified formulations for the Einstein--Hilbert and 
the Einstein--Palatini models of gravity,
to define the Lovelock Lagrangian in the context of multisymplectic geometry, 
to characterize geometrically its properties, 
to establish the Griffiths variational problem for this Lagrangian
and to develop the corresponding Lagrangian--Hamiltonian unified formalism.

The organization of the paper is as follows:
First, in Section \ref{framebundle}, we set the basic definitions, 
notation and canonical structures of the frame bundle,
which is widely used in the work.
In Section \ref{varLov}, the Lovelock Lagrangian is presented and the
corresponding variational problem is stated and analyzed.
Section \ref{fieqs} is devoted to introduce the infinitesimal symmetries
of the system described by the Lovelock Lagrangian and to obtain
the field equations that derive from the Griffiths variational problem for this system.
Finally, in Section \ref{unified}, the premultisymplectic description of the
Lovelock system is carried out using the unified Lagrangian--Hamiltonian formalism.
After the conclusions of Section \ref{concl}, an Appendix is included where different notations are set and several geometric constructions and definitions used throughout the work are collected.

All manifolds are finite-dimensional, real, paracompact, connected and $C^\infty$. All
maps are $C^\infty$. Sum over crossed repeated indices is understood.

%%%%%%%%%%%%%%%%%%%%%%%%%%%

\section{The frame bundle and its canonical forms}
\label{framebundle}

\subsection{Basic definitions and notation}\label{sec:basic}

Consider a space-time manifold $M$ of dimension $m$. The corresponding bundle of frames (see for example \cite{KN1}) $\tau:LM\rightarrow M$ is the set\footnote{Alternatively, we can think of each of the fibers $LM_x$ as the set of ordered bases of the tangent space $T_xM$.}
\[
LM:=\bigcup_{x\in M}\mathop{\text{Iso}}{\left(\mR^m,T_xM\right)},
\]
where $\mathop{\text{Iso}}{\left(V,W\right)}$ is the set of (linear) isomorphisms between vector spaces $V$ and $W$.

It is well-known that the general linear group $G:=GL(m)$ acts naturally on $\mR^m$ by automorphisms. This action in turn induces a $G$-right action on $LM$, according to the formula
\[
u\cdot A:=u\circ A,
\]
for every $u\in LM,A\in G$, endowing $LM$ with a $G$-principal bundle structure.

Let us fix a matrix (any signature can be used in these considerations; the one chosen here follows closely the signature found in General Relativity)
\[
\eta:=\mathop{\text{diag}}{\left(-1,1,\cdots,1\right)};
\]
it can be considered either a map $\eta:\mR^m\rightarrow\left(\mR^m\right)^*$ or a bilinear form $\eta:\mR^m\times\mR^m\rightarrow\mR$.

Associated with the bundle $\tau$, we have the fiber bundle of $1$-jets $J^1\tau$ of sections of $\tau$. Given a section $s\in \Gamma(\tau)$, the 1-jet of $s$ at $x$, denoted $j^1_xs$, is the class of local sections being contact equivalent up to first order at $x$. This space has natural bundle projections $\tau_{10}:J^1\tau\rightarrow LM$ and $\tau_{1}:J^1\tau\rightarrow M$.

For every element $A\in G$, right translation $R_A:LM\rightarrow LM$ is a bundle isomorphism over the identity and so it can be lifted to a bundle isomorphism of $J^1\tau$ by taking the $1$-jet $j^1R_A:J^1\tau\rightarrow J^1\tau$. Accordingly, this defines a right action of $G$ on $J^1\tau$ and it can be checked that the quotient $C(LM):=J^1\tau/G$ is a smooth manifold, making $q:J^1\tau\rightarrow C(LM)$ into a $G$-principal bundle fitting the diagram
\[
\begin{diagram}
  \node[2]{J^1\tau}\arrow[1]{se,t}{\tau_{10}}\arrow[2]{s,l}{\tau_1}\arrow[1]{sw,l}{q}\\
\node{C(LM)}\arrow[1]{se,b}{p}  \node[2]{LM}\arrow{sw,b}{\tau}\\
  \node[2]{M}
\end{diagram}
\]
The bundle $C(LM)$ is called the {\sl bundle of connections of} $LM$ (see \cite{springerlink:10.1007/PL00004852} for an account of the geometry of this bundle). It can be proved that the bundles $J^1\tau\rightarrow C(LM)$ and $C(LM)\times_M LM\rightarrow C(LM)$ are diffeomorphic. It is worth mentioning that, under this identification, $q=pr_1$ and $\tau_{10}=pr_2$; where $pr_i$ is the projection on the $i$-th factor. Furthermore, the action of $G$ reduces to the action in the second factor, i.e. if $\rho\in J^1\tau$, $u=\tau_{10}(\rho)$, $[\rho]_G:=q(\rho)$ and $A\in G$, then $\rho\cdot A=([\rho]_G,u\cdot A)$.

If $\rho\in \tau_{10}^{-1}(u)$, then $\rho$ can also be thought of as a linear map $\rho:T_{\tau(u)}M\rightarrow T_{u}LM$ such that $T_u\tau \circ \rho=Id_{T_{\tau(u)}M}$. The interpretation goes as follows: given a local section $s\in \Gamma(\tau)$ and a tangent vector $X\in T_{x}M$, then $j^1_xs(X)=T_xs(X)$. Accordingly, each element $[\rho]_G\in C(LM)$ can be interpreted as a linear map $[\rho]_G:T_xM\rightarrow \left.(TLM)/G\right|_x$ such that $[T\tau]_G\circ [\rho]_G=Id_{T_{x}M}$, where the action of $G$ in $TLM$ is naturally induced by the action of $G$ on $LM$, and $[T\tau]_G:(TLM)/G\rightarrow TM$ is given by $[T\tau]([X])=T\tau(X)$.

Coordinates in $M$ will be denoted using greek indices $(x^\mu)$ and the related fiber coordinates in $LM$ will be denoted $(x^\mu,e^\nu_k)$, where $u(\mathbf{e}_k)=e^\nu_k\partial/\partial x^\nu$ and $\{\mathbf e_1,\dots,\mathbf e_m\}$ is the canonical basis in $\mathbb R^m$. Accordingly, fiber coordinates in $J^1\tau$ will be denoted $(x^\mu,e^\nu_k,e^\nu_{k\sigma})$. Using these coordinates, it can be seen that $(x^\mu,\Gamma^\mu_{\nu\sigma}:=-e^k_\nu e^\mu_{k\sigma})$ define fiber coordinates in $C(LM)$.

\subsection{The universal principal connection}

Now, we define a principal connection on the bundle $q:J^1\tau\rightarrow C(LM)$ fulfilling a universal property. First, observe that every element $[\rho]_G\in C(LM)$ can be viewed as a pointwise connection, i.e. every $[\rho]_G$ defines a unique family of projections $\Gamma_u:T_uLM\rightarrow V_u\tau$ for every $u\in \tau^{-1}(\tau(\rho))$. Indeed, if $\rho$ is any representative of the class $[\rho]_G$, set $u=\tau_{10}(\rho)$ and define 
\[
\Gamma_{u}:=Id_{T_{u}LM}-\rho\circ T_{u}\tau.
\]
It is immediate to see that $\Gamma_u(X)\in V_u\tau$, for every $X\in T_uLM$. For any other element $u'\in\tau^{-1}(\tau(u))$, we use right translation as follows
\[
\Gamma_{u'}:=T_{u}R_g\circ\Gamma_u\circ T_{u'}R_{g^{-1}},
\]
where $u'=u\cdot g$. It is readily seen that this construction is independent of the choice of the representative $\rho$.
\begin{note}
When treating principal connections, we will use the symbol $\Gamma$ to refer to the family of vertical projections. Furthermore, each principal connection carries a Lie algebra-valued differential form called connection form, which we denote by the symbol $\omega$. In the sequel, we refer to principal connections either through the projections $\Gamma$ or through its connection form $\omega$.
\end{note}
\begin{note}
Taking into account the previous observation, it is clear that the set of principal connections $\Gamma$ on $LM$ is in one-to-one correspondence with the sections of the bundle of connections. We have just seen the correspondence $[\rho]_G\mapsto \Gamma$. The inverse correspondence is given by $\Gamma\mapsto \sigma_\Gamma(x)=[hor^u(x)]_G$, where $hor^u(x):T_xM\rightarrow T_uLM$ is the horizontal lift related to $u\in \tau^{-1}(x)$ and $\Gamma$.
\end{note}

Now denote by $\mathfrak g$ the Lie algebra of $G$ and define $\omega\in \Omega^1(J^1\tau,\mathfrak g)$ as
\[
\left.\omega\right|_{\rho}(Y):=[\rho]_G\left(T_\rho\tau_{10}(Y)\right),
\]
where we are using the identification $LM\times \mathfrak g\rightarrow V\tau$. The fact that this Lie algebra-valued differential form is indeed the connection form of a principal connection can be easily checked.
 
%Among other important properties, it is worth mentioning that $\omega$ generates the contact structure on $J^1\tau$. 
\[
\begin{tikzcd}
  &J^1\tau\ar[dr,"\tau_{10}"']\ar[dl,"q"']&\\
C(LM)\ar[dr,"p"] && LM\ar[dl,"\tau"]\ar[ul, bend right=40, dotted, "\tilde\sigma_\Gamma"']\\
  &M\ar[ul, bend left=40, "\sigma_\Gamma"]&
\end{tikzcd}
\]

To introduce the universal property associated with $\omega$, let us observe that, if $\Gamma$ is a principal connection on $LM$ and $\sigma_{\Gamma}$ is the related section of the bundle of connections, then we can define a section ${\tilde\sigma}_\Gamma\in\Gamma\tau_{10}$ by using the identification $J^1\tau\simeq C(LM)\times_M LM$ as (see the diagram above)
\[
{\tilde\sigma}_\Gamma(u)=(\sigma_\Gamma(\tau(u)),u).
\]
Then, if $\omega_\Gamma$ is the connection form associated with $\Gamma$, it turns out that $\omega_\Gamma=\tilde\sigma_\Gamma^*(\omega)$. In other words, any connection form of a principal connection can be recovered as a pullback of $\omega$ by the section $\tilde\sigma_\Gamma$. In this sense we say that $\omega$ is a universal connection. Accordingly, the universal curvature is given by (see Appendix \ref{sec:vvalform})
\[
\Omega:=\dif\omega+[\omega\wedcol\omega]
\]
and it can be seen that the curvature form associated with $\Gamma$ is $\Omega_\Gamma=\tilde\sigma_\Gamma^*(\Omega)$.

If $\{E^i_j\}$ denotes the canonical basis of $\mathfrak g$ and $\omega=\omega^i_jE^j_i$, then the coordinate expression of the forms $\omega^i_j$ using fiber coordinates is $\omega^i_j=e^i_\mu(\dif e^\mu_j-e^\mu_{j\sigma}\dif x^\sigma)$ and $\Omega^i_j=\dif\omega^i_j+\omega^i_k\wedge\omega^k_j$.

\subsection{The canonical form $\theta$}

In $LM$ we can define a canonical $\mathbb R^m$-valued 1-form $\bar\theta$ as follows. If $X\in T_uLM$, then
\[
\bar\theta(X)=u^{-1}\left(T_u\tau(X)\right).
\]
This allows us to define a similar form in $J^1\tau$, denoted $\theta$, as the pullback $\tau_{10}^*\bar\theta$.

In terms of the canonical basis of $\mR^m$, if we write $\theta=\theta^k\mathbf e_k$, it can be seen that the coordinate expression of the forms $\theta^k$ is given by $\theta^k=e^k_\mu\dif x^\mu$, where $e^k_\mu$ is such that $e^k_\mu e_j^\mu=\delta^k_j$ and $e^j_\nu e_j^\mu=\delta^\mu_\nu$. 

The form $\theta$ turns out to be a tensorial $1$-form of type $\text{Ad}$  (for details you can check \cite{KN1}). We can use the local expressions for $\theta^k$ and $\omega^i_j$ in a trivializing open set $U\subset J^1\tau$ to prove that these forms are linearly independent.

The exterior covariant derivative of $\theta$ with respect to $\omega$ gives rise to another differential form fulfilling a new universal property, called the universal torsion form $T$, i.e. (see Appendix \ref{sec:vvalform})
\[
T=\dif\theta+\omega\weddot\theta.
\]
The universal property in this case arise as follows: if $\Gamma$ is a principal connection on $LM$, then its related torsion form $T_\Gamma$ is recovered as the pullback
\[
T_\Gamma=\tilde\sigma_\Gamma^*(T).
\]
As we did before, we can express $T$ in terms of the canonical basis of $\mR^m$ by writing $T=T^k\mathbf e_k$ with
\[
T^k=\dif \theta^k+\omega^k_i\wedge\theta^i.
\]
A local expression for $T$ in a trivializing open set $U\subset J^1\tau$ can be obtained using those for $\omega$ and $\theta$. In fact
\begin{align*}
T^k=&\dif (e^k_\mu\dif x^\mu)+e^k_\mu(\dif e^\mu_i-e^\mu_{i\sigma}\dif x^\sigma)\wedge e^i_\nu \dif x^\nu\\
=&\dif e^k_\mu\wedge\dif x^\mu+e^i_\nu e^k_\mu \dif e^\mu_i\wedge \dif x^\nu-e^k_\mu e^\mu_{i\sigma}e^i_\nu \dif x^\sigma\wedge \dif x^\nu\\
=&\dif e^k_\mu\wedge\dif x^\mu-e^i_\nu   e^\mu_i \dif e^k_\mu\wedge \dif x^\nu+\frac{1}{2}e^k_\mu (e^\mu_{i\nu}e^i_\sigma-e^\mu_{i\sigma}e^i_\nu )\dif x^\sigma\wedge \dif x^\nu\\
=&\frac{1}{2}e^k_\mu (e^\mu_{i\nu}e^i_\sigma-e^\mu_{i\sigma}e^i_\nu )\dif x^\sigma\wedge \dif x^\nu.
\end{align*}

This last expression shows that on the set $e^\mu_{i\nu}e^i_\sigma-e^\mu_{i\sigma}e^i_\nu =0$ of each trivializing neighbourhood the torsion form $T$ vanishes identically. It turns out that all of these sets can be smoothly glued together and define a submanifold  $T_0\subset J^1\tau$, as the next proposition shows
\begin{prop}
There exists a submanifold $\iota_0:T_0\hookrightarrow J^1\tau$ such that $\iota_0^*T\equiv 0$.
\end{prop}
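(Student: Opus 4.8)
The plan is to produce $T_0$ as the (intrinsically defined) zero locus of the universal torsion $T$ and then check that it is a regular submanifold. Since $T\in\Omega^2(J^1\tau,\mR^m)$ is a globally defined vector‑valued $2$‑form, the set $Z:=\{\rho\in J^1\tau:T_\rho=0\}$ makes sense independently of any chart, so the only real task is to prove that $Z$ is an embedded submanifold. Once this is done we put $T_0:=Z$, with inclusion $\iota_0$, and $\iota_0^*T\equiv 0$ is immediate because $T$ vanishes at every point of $T_0$ (and not merely on vectors tangent to it).

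To show that $Z$ is a submanifold I would work in the coordinates $(x^\mu,\Gamma^\mu_{\nu\sigma},e^\nu_k)$ over a trivializing open set $U$; these are legitimate coordinates on $J^1\tau$ thanks to the diffeomorphism $J^1\tau\simeq C(LM)\times_M LM$ (with $q=pr_1$, $\tau_{10}=pr_2$), combined with the coordinates $(x^\mu,\Gamma^\mu_{\nu\sigma})$ on $C(LM)$ and the fiber coordinates $e^\nu_k$ on $LM$. From $\Gamma^\mu_{\nu\sigma}=-e^k_\nu e^\mu_{k\sigma}$, equivalently $e^\mu_{k\sigma}=-e^\nu_k\Gamma^\mu_{\nu\sigma}$, one gets at once
\[
e^\mu_{i\nu}e^i_\sigma-e^\mu_{i\sigma}e^i_\nu=\Gamma^\mu_{\nu\sigma}-\Gamma^\mu_{\sigma\nu},
\]
so, since the frame matrix $(e^k_\mu)$ is invertible, the local expression for $T^k$ found in the excerpt shows that
\[
Z\cap U=\bigl\{\,\Gamma^\mu_{\nu\sigma}-\Gamma^\mu_{\sigma\nu}=0\ :\ 1\le\nu<\sigma\le m\,\bigr\}.
\]
This is the common zero set of the $m\binom{m}{2}$ functions $\Gamma^\mu_{\nu\sigma}-\Gamma^\mu_{\sigma\nu}$ (with $\nu<\sigma$), which are linear combinations of coordinate functions with everywhere linearly independent differentials; hence $Z\cap U$ is a closed embedded submanifold of $U$ of codimension $m\binom{m}{2}$. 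Since $Z$ itself is globally defined, these local pieces automatically agree on overlaps and glue to an embedded submanifold $T_0=Z$ of $J^1\tau$.

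A more structural packaging of the same argument, which I would at least indicate, is this: because $T$ is a tensorial form it is equivariant for the principal $G$‑action on $J^1\tau$, so $Z$ is $q$‑saturated and $Z=q^{-1}(C_0)$ with $C_0:=q(Z)\subseteq C(LM)$; the computation above identifies $C_0$ with the set of connections having symmetric Christoffel symbols, i.e.\ with the torsion‑free connections, which form an affine subbundle of $C(LM)\to M$ because ``torsion $=0$'' is a coordinate‑independent (tensorial) condition. Then $T_0=q^{-1}(C_0)$ is an embedded submanifold because $q$ is a surjective submersion. The only step that is not pure formality is the mutual consistency of the slicing equations across trivializations, and this is precisely what is guaranteed by the globality of $T$ (or, equivalently, by the tensoriality of the torsion); the remaining verifications are routine.
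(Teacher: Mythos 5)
Your proof is correct, and it goes by a genuinely different route than the paper's. You take $T_0$ to be the intrinsically defined pointwise zero locus of the universal torsion $T$ (which, by the local expression $T^k=\tfrac{1}{2}e^k_\mu(e^\mu_{i\nu}e^i_\sigma-e^\mu_{i\sigma}e^i_\nu)\,\dif x^\sigma\wedge\dif x^\nu$ and invertibility of $(e^k_\mu)$, is exactly the set cut out by the paper's equations \eqref{t0}), so the consistency of the local descriptions across trivializations is automatic from the globality (tensoriality) of $T$; the paper instead defines $T_0$ chart by chart and establishes this consistency by hand, computing the transformation law of the jet coordinates $e^\mu_{k\sigma}$ and checking that the vanishing of $e^k_\sigma e^\mu_{k\nu}-e^k_\nu e^\mu_{k\sigma}$ is chart-independent. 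Your passage to the coordinates $(x^\mu,\Gamma^\mu_{\nu\sigma},e^\nu_k)$, where the defining equations become the linear conditions $\Gamma^\mu_{\nu\sigma}=\Gamma^\mu_{\sigma\nu}$, moreover makes the embedded-submanifold (constant rank) property explicit, a point the paper's proof leaves implicit; and your structural variant ($Z$ is $G$-saturated, $T_0=q^{-1}(C_0)$ with $C_0$ the affine subbundle of torsion-free connections, $q$ a submersion) anticipates the paper's subsequent remark that $G$ preserves $T_0$ and that $T_0\simeq C_0(LM)\times_M LM$. What the paper's computation buys is an explicit verification that needs nothing beyond the jet-coordinate transformation rules; what yours buys is conceptual economy (gluing for free) plus the regularity statement, at the mild cost of invoking the identification $J^1\tau\simeq C(LM)\times_M LM$ and the legitimacy of the $\Gamma$-coordinates, both of which the paper has already established. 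Note only that $\iota_0^*T\equiv 0$ as you prove it is stronger than required (pointwise vanishing of $T$ on $T_0$, not merely vanishing on vectors tangent to $T_0$), which is fine since it implies the stated claim and coincides with the paper's choice of $T_0$.
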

\begin{proof}
As we anticipated, the manifold $T_0$ is given locally by the conditions
\begin{equation}
e^\mu_{i\nu}e^i_\sigma-e^\mu_{i\sigma}e^i_\nu=0,\label{t0}
\end{equation}
for every $\mu,\nu,\sigma$. To see that this is independent of the choice of coordinates, consider another trivializing neighbourhood (having nonempty intersection with the first) with fibered coordinates $(\bar x^\mu,\bar e^\mu_k,\bar e^\mu_{k\sigma})$. Change of coordinates between these two sets is given by
\begin{align*}
\bar e^\sigma_k&=\frac{\partial \bar x^\sigma}{\partial x^\theta}e^\theta_k\\
\bar e^\mu_{k\nu}&=\left(\frac{\partial \bar x^\mu}{\partial x^\tau}e^\tau_{k\rho}+\frac{\partial^2 \bar x^\mu}{\partial x^\tau\partial x^\rho}e^\tau_k\right)\frac{\partial x^\rho}{\partial \bar x^\nu}
\end{align*}
so
\begin{align*}
\bar e^k_\sigma\bar e^\mu_{k\nu}-\bar e^k_\nu\bar e^\mu_{k\sigma}=&\frac{\partial  x^\theta}{\partial \bar x^\sigma}e^k_\theta\left(\frac{\partial \bar x^\mu}{\partial x^\tau}e^\tau_{k\rho}+\frac{\partial^2 \bar x^\mu}{\partial x^\tau\partial x^\rho}e^\tau_k\right)\frac{\partial x^\rho}{\partial \bar x^\nu}-\frac{\partial  x^\theta}{\partial \bar x^\nu}e^k_\theta\left(\frac{\partial \bar x^\mu}{\partial x^\tau}e^\tau_{k\rho}+\frac{\partial^2 \bar x^\mu}{\partial x^\tau\partial x^\rho}e^\tau_k\right)\frac{\partial x^\rho}{\partial \bar x^\sigma}\\
=&\frac{\partial  x^\theta}{\partial \bar x^\sigma}\frac{\partial \bar x^\mu}{\partial x^\tau}\frac{\partial x^\rho}{\partial \bar x^\nu}\left(e^k_\theta e^\tau_{k\rho}-e^k_\rho e^\tau_{k\theta}\right),
\end{align*}
which implies that the vanishing of the expression \eqref{t0} is independent of the particular trivializing set.
\end{proof}

\begin{note}
It is possible to prove that the action of $G$ preserves the manifold $T_0$, i.e. $T_0\cdot A\subset T_0$ for every $A\in G$. This allows us to define a $G$-action on $T_0$, making $T_0\rightarrow T_0/G$ into a principal $G$-bundle. Moreover, denoting $C_0(LM):=T_0/G$ and using the identification $J^1\tau\simeq C(LM)\times LM$, we get the identification $T_0\simeq C_0(LM)\times LM$. If we pullback the universal connection $\omega$ through $\iota_0$, we get a new universal property concernig torsionless connections, instead of arbitrary connections. We will use this fact to write the Griffiths variational principle for Lovelock gravity.
\end{note}

\subsubsection{The Sparling forms $\theta_{i_1\dots i_p}$}
For each $p\leq m$, define
\[
\theta_{i_1\dots i_p}=\frac{1}{(m-p)!}\varepsilon_{i_1\dots i_pi_{p+1}\dots i_m}\theta^{i_{p+1}}\wedge\dots\wedge\theta^{i_{m}}.
\]
It is readily seen that $\theta_{i_1\dots i_p}$ is completely antisymmetric in its indices. Additionaly we have:

\begin{lem}
Define $\sigma_0:=\theta^1\wedge\dots\wedge\theta^m$. Then 
\[
\theta_{i_1\dots i_p}=X_{i_p}\lrcorner\dots\lrcorner X_{i_1}\lrcorner\sigma_0,
\]
for any vector fields $X_{i_k}\in\mathfrak X(J^1\tau)$ projecting to $u(\mathbf e_{i_k})$, i.e. $T\tau_1(X_{i_k}(j^1_xu))=u(\mathbf e_{i_k})$.
\end{lem}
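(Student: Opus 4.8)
The plan is to reduce everything to two elementary facts and then run an induction on $p$. The first fact is the pairing of the canonical form with the given vector fields: since $\theta=\tau_{10}^*\bar\theta$ and $\bar\theta|_u(Y)=u^{-1}(T_u\tau(Y))$, one has $\theta|_\rho(Y)=u^{-1}(T_\rho\tau_1(Y))$ with $u=\tau_{10}(\rho)$, so if $X$ projects to $u(\mathbf e_j)$ then $\theta|_\rho(X)=\mathbf e_j$, i.e. $\theta^k(X)=\delta^k_j$. The second fact is the Leibniz rule for $\lrcorner$ on a wedge of $1$-forms. Before the induction I would observe that the right-hand side is in fact independent of the chosen lifts, which is what makes the statement meaningful: any $\tau_1$-vertical vector is annihilated by every $\theta^k$, hence by $\sigma_0=\theta^1\wedge\dots\wedge\theta^m$; after one contraction one is left with a wedge of $(m-1)$ of the $\theta^k$, so the same remark applies, and iterating shows $X_{i_p}\lrcorner\dots\lrcorner X_{i_1}\lrcorner\sigma_0$ depends only on $u(\mathbf e_{i_1}),\dots,u(\mathbf e_{i_p})$.

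For the induction, the base case $p=0$ is just the identity $\sigma_0=\frac{1}{m!}\,\varepsilon_{j_1\dots j_m}\theta^{j_1}\wedge\dots\wedge\theta^{j_m}$, which is the standard rewriting of $\theta^1\wedge\dots\wedge\theta^m$. For the inductive step, assuming $\theta_{i_1\dots i_{p-1}}=X_{i_{p-1}}\lrcorner\dots\lrcorner X_{i_1}\lrcorner\sigma_0$, it suffices to prove $X_{i_p}\lrcorner\,\theta_{i_1\dots i_{p-1}}=\theta_{i_1\dots i_p}$. I would write $\theta_{i_1\dots i_{p-1}}=\frac{1}{(m-p+1)!}\varepsilon_{i_1\dots i_{p-1}k_p\dots k_m}\theta^{k_p}\wedge\dots\wedge\theta^{k_m}$, expand $\iota_{X_{i_p}}$ of the wedge by the Leibniz rule, and use $\theta^{k_r}(X_{i_p})=\delta^{k_r}_{i_p}$ to collapse the sum to the terms $r=p,\dots,m$. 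In each such term one then moves the index $i_p$ inside $\varepsilon$ from slot $r$ back to slot $p$: this costs $r-p$ transpositions, and the resulting sign $(-1)^{r-p}$ cancels exactly the $(-1)^{r-p}$ coming from the Leibniz expansion. After relabeling the $m-p$ surviving summation indices the $m-p+1$ terms become identical, so the prefactor $\frac{1}{(m-p+1)!}$ is multiplied by $m-p+1$ and becomes $\frac{1}{(m-p)!}$, giving precisely $\theta_{i_1\dots i_p}$.

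The only genuinely delicate point is the sign bookkeeping in that reindexing of the Levi-Civita symbol; the rest is routine. It is worth recording at the outset the total antisymmetry of $\theta_{i_1\dots i_p}$ in $i_1,\dots,i_p$ (already noted in the text), which matches the antisymmetry of $X_{i_p}\lrcorner\dots\lrcorner X_{i_1}\lrcorner$ and makes the induction consistent independently of the order in which the $X_{i_k}$ are contracted.

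Alternatively, one can bypass the induction by evaluating both sides on an arbitrary $(m-p)$-tuple $Y_{p+1},\dots,Y_m\in\mathfrak X(J^1\tau)$: the left-hand side equals $\sigma_0(X_{i_1},\dots,X_{i_p},Y_{p+1},\dots,Y_m)=\det\big(\theta^a(Z_b)\big)$ with $Z_1,\dots,Z_p=X_{i_1},\dots,X_{i_p}$ and $Z_{p+1},\dots,Z_m=Y_{p+1},\dots,Y_m$, and a Laplace expansion of this determinant along the first $p$ columns, which by the first fact above are the coordinate vectors $\mathbf e_{i_1},\dots,\mathbf e_{i_p}$, reproduces $\tfrac{1}{(m-p)!}\varepsilon_{i_1\dots i_p j_{p+1}\dots j_m}\theta^{j_{p+1}}\wedge\dots\wedge\theta^{j_m}$ evaluated on $(Y_{p+1},\dots,Y_m)$. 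I would present the inductive argument as the main proof and mention this determinant computation as a remark.
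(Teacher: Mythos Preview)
Your proposal is correct and follows essentially the same approach as the paper: an induction on $p$ based on the contraction identity $X_j\lrcorner\theta^k=\delta^k_j$ and the Leibniz rule, with the Levi-Civita sign bookkeeping handled exactly as you describe. The only cosmetic differences are that the paper starts the induction at $p=1$ rather than $p=0$, and your extra remarks on well-definedness of the lift and the alternative determinant computation do not appear there.
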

\begin{proof}
Let us proceed by induction on $p$. First, observe that 
\begin{equation}
X_i\lrcorner\theta^j=(u^{-1}\circ u)^j_i=\delta^j_i.\label{contraction}
\end{equation}
Then
\begin{align*}
X_i\lrcorner\sigma_0&=X_i\lrcorner\left(\theta^1\wedge\dots\wedge\theta^m\right)=\frac{1}{m!}X_i\lrcorner\left(\varepsilon_{i_1\dots i_m}\theta^{i_1}\wedge\dots\wedge\theta^{i_m}\right)\\
&=\frac{1}{m!}\sum_{k=1}^m(-1)^{k+1}\delta^{i_k}_i\varepsilon_{i_1\dots i_m}\theta^{i_1}\wedge\dots\wedge\widehat{\theta^{i_k}}\wedge\dots\wedge\theta^{i_m}\\
&=\frac{1}{m!}\sum_{k=1}^m(-1)^{k+1}\varepsilon_{i_1\dots i_{k-1}ii_{k+1}\dots i_m}\theta^{i_1}\wedge\dots\wedge\widehat{\theta^{i_k}}\wedge\dots\wedge\theta^{i_m},
\end{align*}
and renaming the indices
\begin{align*}
X_i\lrcorner\sigma_0&=\frac{1}{m!}\sum_{k=1}^m(-1)^{2k}\varepsilon_{ii_2\dots i_m}\theta^{i_2}\wedge\dots\wedge\theta^{i_m}\\
&=\frac{m}{m!}\varepsilon_{ii_2\dots i_m}\theta^{i_2}\wedge\dots\wedge\widehat{\theta^{i_k}}\wedge\dots\wedge\theta^{i_m}=\frac{1}{(m-1)!}\varepsilon_{ii_1\dots i_m}\theta^{i_1}\wedge\dots\wedge\widehat{\theta^{i_k}}\wedge\dots\wedge\theta^{i_m},
\end{align*}
which proves the case $p=1$. Inductively
\begin{align*}
X_{i_{p+1}}\lrcorner\theta_{i_1\dots i_p}&=X_{i_{p+1}}\lrcorner\left(\frac{1}{(m-p)!}\varepsilon_{i_1\dots  i_pj_{p+1}\dots j_m}\theta^{j_{p+1}}\wedge\dots\wedge\theta^{j_{m}}\right)\\
&=\sum_{k=1}^{m-p}\frac{(-1)^{k+1}}{(m-p)!}\varepsilon_{i_1\dots  i_pj_{p+1}\dots j_m}\delta^{j_{p+k}}_{i_{p+1}}\theta^{j_{p+1}}\wedge\dots\wedge\widehat{\theta^{j_{p+k}}}\wedge\dots\wedge\theta^{j_m}\\
&=\sum_{k=1}^{m-p}\frac{(-1)^{k+1}}{(m-p)!}\varepsilon_{i_1\dots  i_pj_{p+1}\dots j_{p+k-1}i_{p+1}j_{p+k+1}\dots j_m}\theta^{j_{p+1}}\wedge\dots\wedge\widehat{\theta^{j_{p+k}}}\wedge\dots\wedge\theta^{j_m},
\end{align*}
again renaming indices
\begin{align*}
X_{i_{p+1}}\lrcorner\theta_{i_1\dots i_p}&=\sum_{k=1}^{m-p}\frac{(-1)^{2k}}{(m-p)!}\varepsilon_{i_1\dots  i_{p+1}j_{p+2}\dots j_m}\theta^{j_{p+2}}\wedge\dots\wedge\theta^{j_m}\\
&=\frac{m-p}{(m-p)!}\varepsilon_{i_1\dots  i_{p+1}j_{p+2}\dots j_m}\theta^{j_{p+2}}\wedge\dots\wedge\theta^{j_m}=\theta_{i_1\dots i_{p+1}}.\qedhere
\end{align*}
\end{proof}

We use the Sparling forms to write down local expressions for the Lovelock Lagrangian and the equations of motion. To facilitate the related computations it is necessary to know some properties of these forms, so we collect some of them in the next proposition. In the proof (and in the rest of the paper) we use the properties of the Levi-Civita symbol and the generalized Kronecker delta listed in Appendix \ref{sec:levi}.

\begin{prop}\label{proptheta}
The following properties hold (the hat on an index indicates that this index has been suppressed):
\begin{enumerate}
\item for every $r\leq s$, $\theta^{i_1}\wedge\dots\wedge\theta^{i_r}\wedge\theta_{j_1\dots j_s}=\frac{(-1)^{r(s-r)}}{(s-r)!}\delta^{i_1\dots i_s}_{j_1\dots j_s}\theta_{i_{r+1}\dots i_{s}}$.
\item $\theta^k\wedge\theta_{i_1\dots i_p}=\sum_{r=1}^p(-1)^{p+r}\delta^k_{i_r}\theta_{i_1\dots\widehat{i_r} \dots i_p}$.
\item $\dif\theta_{i_1\dots i_p}=T^{l}\wedge\theta_{i_1\dots i_pl}+\sum_{r=1}^{p}(-1)^{p+r}\omega^{l}_{i_r}\wedge\theta_{i_1\dots \widehat{i_r}\dots i_pl}-\omega^{l}_{l}\wedge\theta_{i_1\dots i_p}$.
\end{enumerate}
\end{prop}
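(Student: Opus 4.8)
The plan is to establish the three identities in order, deducing (2) from (1) and (3) from (2), and to reduce everything to manipulations with the Levi-Civita symbol and the generalized Kronecker delta collected in Appendix~\ref{sec:levi}. The key preliminary observation --- essentially the coordinate reading of the preceding Lemma, since $\sigma_0=\theta^1\wedge\dots\wedge\theta^m$ --- is the ``dual'' of the definition of the Sparling forms: for arbitrary indices $a_1,\dots,a_{m-p}$,
\[
\theta^{a_1}\wedge\dots\wedge\theta^{a_{m-p}}=\frac{1}{p!}\,\varepsilon^{b_1\dots b_p a_1\dots a_{m-p}}\,\theta_{b_1\dots b_p}\,,\qquad(\star)
\]
which one checks at once by inserting the definition of $\theta_{b_1\dots b_p}$ on the right and contracting the two Levi-Civita symbols into a generalized Kronecker delta.

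To prove (1), I would write $\theta_{j_1\dots j_s}$ through its definition and apply $(\star)$ --- with $p$ replaced by $s-r$, which is legitimate precisely because $r\le s$ --- to the wedge $\theta^{i_1}\wedge\dots\wedge\theta^{i_r}\wedge\theta^{k_{s+1}}\wedge\dots\wedge\theta^{k_m}$, which has exactly $m-(s-r)$ factors. One is then left with a product of two Levi-Civita symbols sharing the $m-s$ dummy indices $k_{s+1},\dots,k_m$ in matching positions; contracting them yields $\delta^{\,b_1\dots b_{s-r}\,i_1\dots i_r}_{\,j_1\dots\,j_s}$, moving the block $b_1\dots b_{s-r}$ past the block $i_1\dots i_r$ among the upper indices contributes the sign $(-1)^{r(s-r)}$, and renaming $b_1,\dots,b_{s-r}$ as $i_{r+1},\dots,i_s$ gives (1). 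Item (2) is then the case $r=1$, $s=p$ of (1): expanding $\delta^{\,k\,a_2\dots a_p}_{\,i_1\dots\,i_p}$ by cofactors along its first column as $\sum_{r=1}^{p}(-1)^{r-1}\delta^{k}_{i_r}\,\delta^{\,a_2\dots a_p}_{\,i_1\dots\widehat{i_r}\dots i_p}$, using $\frac{1}{(p-1)!}\delta^{\,a_2\dots a_p}_{\,i_1\dots\widehat{i_r}\dots i_p}\theta_{a_2\dots a_p}=\theta_{i_1\dots\widehat{i_r}\dots i_p}$ (total antisymmetry of the Sparling forms), and collecting $(-1)^{p-1}(-1)^{r-1}=(-1)^{p+r}$. (Alternatively, (2) can be proved directly from the definition and (1) then follows by induction on $r$.)

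For (3), the first step is to show that
\[
\dif\theta_{i_1\dots i_p}=\dif\theta^{l}\wedge\theta_{i_1\dots i_p l}\,,
\]
which is obtained by differentiating the defining wedge $\theta^{i_{p+1}}\wedge\dots\wedge\theta^{i_m}$ and checking, using the total antisymmetry of $\varepsilon_{i_1\dots i_m}$, that the $m-p$ resulting terms all coincide after relabelling the dummy indices, so that their sum equals $\frac{1}{(m-p-1)!}\varepsilon_{i_1\dots i_p l\,j_{p+2}\dots j_m}\dif\theta^{l}\wedge\theta^{j_{p+2}}\wedge\dots\wedge\theta^{j_m}$. Then I would substitute the structure equation $\dif\theta^{l}=T^{l}-\omega^{l}_{i}\wedge\theta^{i}$ to obtain $\dif\theta_{i_1\dots i_p}=T^{l}\wedge\theta_{i_1\dots i_p l}-\omega^{l}_{i}\wedge(\theta^{i}\wedge\theta_{i_1\dots i_p l})$, apply (2) to the $(p+1)$-index form $\theta_{i_1\dots i_p l}$ to write $\theta^{i}\wedge\theta_{i_1\dots i_p l}=\sum_{r=1}^{p}(-1)^{p+1+r}\delta^{i}_{i_r}\theta_{i_1\dots\widehat{i_r}\dots i_p l}+\delta^{i}_{l}\,\theta_{i_1\dots i_p}$ (the last term being the one in which the summed index matches $l$), and finally contract the Kronecker deltas, $\omega^{l}_{i}\delta^{i}_{i_r}=\omega^{l}_{i_r}$ and $\omega^{l}_{i}\delta^{i}_{l}=\omega^{l}_{l}$, using $-(-1)^{p+1+r}=(-1)^{p+r}$. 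This is precisely (3).

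The main obstacle is entirely combinatorial: keeping the signs straight, especially the factor $(-1)^{r(s-r)}$ in (1) (coming from reordering the upper indices of the generalized Kronecker delta) together with the correct placement of the contracted dummy indices in the two Levi-Civita symbols, and the analogous bookkeeping in the auxiliary identity $\dif\theta_{i_1\dots i_p}=\dif\theta^{l}\wedge\theta_{i_1\dots i_p l}$. With $(\star)$ and the $\varepsilon$--$\delta$ contraction formulas of Appendix~\ref{sec:levi} in hand, there is no conceptual difficulty.
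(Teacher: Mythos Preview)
Your proposal is correct and follows essentially the same route as the paper: expand $\theta_{j_1\dots j_s}$ by its definition, contract the resulting pair of Levi-Civita symbols into a generalized Kronecker delta (the paper does this via an intermediate antisymmetrization step rather than isolating your formula $(\star)$, but it is the same identity), then specialize to $r=1$ for (2) and combine the structure equation $\dif\theta^{l}=T^{l}-\omega^{l}_{k}\wedge\theta^{k}$ with (2) for (3). The paper even offers the same alternative you mention for (2), namely a direct induction on $p$.
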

\begin{proof}
\begin{enumerate}
\item Let us compute
\begin{align*}
\theta^{i_1}\wedge\dots\wedge\theta^{i_r}\wedge\theta_{j_1\dots j_s}=&\frac{1}{(m-s)!}\theta^{i_1}\wedge\dots\wedge\theta^{i_r}\varepsilon_{j_1\dots j_s j_{s+1}\dots j_m}\theta^{j_{s+1}}\wedge\dots\wedge\theta^{j_m}\\
=&\frac{\varepsilon_{j_1\dots j_s j_{s+1}\dots j_m}}{(m-s)!(m-s+r)!}\delta^{i_1\dots i_rj_{s+1}\dots j_m}_{a_1\dots a_ra_{r+1}\dots a_{m-s+r}}\theta^{a_1}\wedge\dots\wedge\theta^{a_{m-s+r}}\\
=&\frac{\varepsilon_{j_1\dots j_s j_{s+1}\dots j_m}}{(m-s)!(m-s+r)!(s-r)!}\varepsilon^{i_{r+1}\dots i_{s}i_1\dots i_rj_{s+1}\dots j_m}\\
\phantom{=}&\varepsilon_{i_{r+1}\dots i_{s}a_1\dots a_{m-s+r}}\theta^{a_1}\wedge\dots\wedge\theta^{a_{m-s+r}}\\
=&\frac{\varepsilon_{j_1\dots j_s j_{s+1}\dots j_m}}{(m-s)!(s-r)!}\varepsilon^{i_{r+1}\dots i_{s}i_1\dots i_rj_{s+1}\dots j_m}\theta_{i_{r+1}\dots i_{s}}\\
=&\frac{1}{(s-r)!}\delta^{i_{r+1}\dots i_{s}i_1\dots i_r}_{j_1\dots j_s}\theta_{i_{r+1}\dots i_{s}}
=\frac{(-1)^{r(s-r)}}{(s-r)!}\delta^{i_1\dots i_{s}}_{j_1\dots j_s}\theta_{i_{r+1}\dots i_{s}}\ .
\end{align*}
\item From the first point of the proposition and taking $r=1$ and $s=p$,
\begin{align*}
\theta^k\wedge\theta_{i_1\dots i_p}&=\frac{(-1)^{p-1}}{(p-1)!}\delta^{ki_2\dots i_p}_{j_1\dots j_p}\theta_{i_2\dots i_p}=\frac{(-1)^{p-1}}{(p-1)!}\sum_{r=1}^p(-1)^{r+1}\delta^k_{j_r}\delta^{i_2\dots i_p}_{j_1\dots\hat{j_r}\dots j_p}\theta_{i_2\dots i_p}\\
&=\sum_{r=1}^p\frac{(-1)^{p+r}}{(p-1)!}\delta^k_{j_r}\delta^{i_2\dots i_p}_{j_1\dots\hat{j_r}\dots j_p}\theta_{i_2\dots i_p}=\sum_{r=1}^p\frac{(-1)^{p+r}}{(p-1)!}\delta^k_{j_r}(p-1)!\theta_{j_1\dots\hat{j_r}\dots j_p}\\
&=\sum_{r=1}^p(-1)^{p+r}\delta^k_{j_r}\theta_{j_1\dots\hat{j_r}\dots j_p} \ .
\end{align*}

We may also prove this by induction on $p$. The case $p=1$ is just Eq. \eqref{contraction}. Assuming that the formula holds for $p-1$, then
\begin{align*}
\theta^k\wedge\theta_{i_1\dots i_p}&=\theta^k\wedge \left(X_p\lrcorner\theta_{i_1\dots i_{p-1}}\right)\\
&=(X_{i_p}\lrcorner\theta^k)\wedge\theta_{i_1\dots i_{p-1}}-X_{i_p}\lrcorner\left(\theta^k\wedge\theta_{i_1\dots i_{p-1}}\right)\\
&=\delta^k_{i_p}\theta_{i_1\dots i_{p-1}}-X_{i_p}\lrcorner\left(\sum_{r'=1}^{p-1}(-1)^{p-1+r'}\delta^k_{i_{r'}}\theta_{i_1\dots \widehat{i_{r'}}\dots i_{p-1}}\right)\\
&=(-1)^{p+p}\delta^k_{i_p}\theta_{i_1\dots i_{p-1}}+\sum_{r'=1}^{p-1}(-1)^{p+r'}\delta^k_{i_{r'}}\theta_{i_1\dots \widehat{i_{r'}}\dots i_{p-1}i_p}\\
&=\sum_{r=1}^p(-1)^{p+r}\delta^k_{i_r}\theta_{i_1\dots\widehat{i_r} \dots i_p}\ .
\end{align*}
\item
Let us compute now the differential of $\theta_{i_1\dots i_p}$
\begin{align*}
\dif\theta_{i_1\dots i_p}&=\frac{1}{(m-p)!}\varepsilon_{i_1\dots i_pi_{p+1}\dots i_m}\dif\left(\theta^{i_{p+1}}\wedge\dots\wedge\theta^{i_m}\right)\\
&=\frac{1}{(m-p)!}\varepsilon_{i_1\dots i_pi_{p+1}\dots i_m}\sum_{k=1}^{m-p}(-1)^{k+1}\theta^{i_{p+1}}\wedge\dots\wedge\dif\theta^{i_{p+k}}\wedge\dots\wedge\theta^{i_m}\\
&=\sum_{k=1}^{m-p}\frac{(-1)^{2k}}{(m-p)!}\varepsilon_{i_1\dots i_pi_{p+k}i_{p+1}\dots\widehat{i_{p+k}}\dots i_m}\dif\theta^{i_{p+k}}\wedge\theta^{i_{p+1}}\wedge\dots\wedge\widehat{\theta^{i_{p+k}}}\wedge\dots\wedge\theta^{i_m}.
\end{align*}
Renaming the indices and using the first part of the proposition
\begin{align*}
\dif\theta_{i_1\dots i_p}&=\left(T^l-\omega^l_{k}\wedge\theta^k\right)\wedge\theta_{i_1\dots i_pl}\\
&=T^l\wedge\theta_{i_1\dots i_pl}-\omega^l_{k}\wedge\theta^k\wedge\theta_{i_1\dots i_pl}\\
&=T^{l}\wedge\theta_{i_1\dots i_pl}+\sum_{r=1}^{p}(-1)^{p+r}\omega^{l}_{i_r}\wedge\theta_{i_1\dots \widehat{i_r}\dots i_pl}-\omega^{l}_{l}\wedge\theta_{i_1\dots i_p} \ .\qedhere
\end{align*}
\end{enumerate}
\end{proof}

\section{Variational problem for Lovelock gravity}
\label{varLov}

Griffiths variational problems \cite{book:852048} are posed on triples $(\Lambda\xrightarrow{\pi} M,\lambda,\mathcal I)$, where $\Lambda\xrightarrow{\pi}  M$ is a fiber bundle over the space-time $M$, $\lambda$ is an $m$-form that is $\pi$-horizontal (referred to as the Lagrangian form) and an exterior differential system $\mathcal I\subset\Omega^\bullet(\Lambda)$ \cite{BCG} characterizing the admissible sections of the problem.

\begin{defc}
The variational problem associated with a variational triple $(\Lambda\xrightarrow{\pi} M,\lambda,\mathcal I)$ consists in finding the sections $\sigma:M\rightarrow \Lambda$ which are integrals for $\mathcal I$ and are extremals for the functional
\[
S[\sigma]:=\int_M\sigma^*\lambda.
\]
\end{defc}

Remember that $\sigma$ is integral for $\mathcal I$ if and only if $\sigma^*\alpha=0$ for every $\alpha\in \mathcal I$. In particular, this implies that the variations of $S$ must be performed in such a way that the transformed sections remain integrals of $\mathcal I$. Hence, we define
\begin{defc}
Let $\mathcal I\subset\Omega^\bullet(\Lambda)$ be an exterior differential system. A local vector field $X\in\mathfrak X(\Lambda)$ is an infinitesimal symmetry of $\mathcal I$ if and only if
\[
\cL_X\cI\subset\cI.
\]
The set infinitesimal symmetries of $\cI$ is denoted $\text{Symm}(\cI)$.
\end{defc}

With this definition, it can be proved that the solutions to the variational problem associated with the variational triple $(\Lambda\xrightarrow{\pi} M,\lambda,\mathcal I)$ are those sections $\sigma$ integral for $\cI$ such that
\[
\sigma^*(X\lrcorner\dif\lambda)=0\qquad\text{for every }X\in\mathfrak X^{V\pi}(\Lambda)\cap \text{Symm}(\cI)
\]
which are, in turn, the field equations for this problem; here $\mathfrak{X}^{V\pi}(\Lambda)$ indicates the set of vector fields on $\Lambda$ which are vertical respect to the projection $\pi$.

\begin{note}
It could be possible for an exterior differential system to have no infinitesimal symmetries; nevertheless, it will be proved in Section \ref{sec:infin-symm-mathc} that the exterior differential system we will use in the variational problem for Lovelock gravity (see Equation \eqref{eq:EDSForLovelock} below) possesses non trivial infinitesimal symmetries.
\end{note}

\begin{note}Here we are assuming that $M$ is a manifold without boundary. Also, in order for $S$ to be well-defined, the form $\sigma^*\lambda$ must be compactly supported. In the sequel, we will assume that all the integrals we work with exist.\end{note}

\subsection{The Lovelock Lagrangian}

Now we are ready to define a Griffiths variational problem for Lovelock gravity \cite{lovel}. To do that we have to define the corresponding triple introduced in the previous section. The bundle chosen is $\tau_1:T_0\rightarrow M$, where $M$ is the $m$-dimensional smooth manifold representing space-time. Here we are writing $\tau_1$ instead of $\left.\tau_1\right|_{T_0}$ only to simplify the notation (we will do the same with the pullbacks of $\omega$ and $\theta$ through $\iota_0$).

As the exterior differential system restricting the admissible sections we take (see Appendix \ref{sec:cartan})
\[
\cI_{\text{L}}:=\left<\omega_{\mathfrak p}\right>_{diff} \ .
\]
The subscript $diff$ indicates the smallest exterior differential system containing the form $\omega_{\mathfrak p}$.

\begin{comment}
$\mathfrak p$ indicates that we are projecting the values of $\omega$ on one of the Lie subalgebras of the Cartan decomposition associated with the bilinear form $\eta$, given by
\[
A_{\mathfrak p}:=\frac{1}{2}(A+\eta A^T\eta).
\]
\end{comment}

Using the canonical basis of $\mR^m$, we can alternatively describe $\mathcal I_{\text{L}}$ as the exterior differential system generated as follows
\begin{equation}\label{eq:EDSForLovelock}
  \cI_{\text{L}}=\left<\eta^{ik}\omega^j_k+\eta^{jk}\omega^i_k\right>_{diff}.
\end{equation}
It is also useful to define the forms $\omega^{ij}:=\eta^{ik}\omega^j_k$, in terms of which $\cI_{\text{L}}=\left<\omega^{ij}+\omega^{ji}\right>_{diff}$. Then we look for sections $\sigma\in\Gamma(\tau_1)$ fulfilling the condition
\[
\sigma^*\omega_{\mathfrak p}=0.
\]
It is customary to refer to this condition as the \emph{metricity} condition.

\begin{note}
Using the identification $T_0\simeq C_0(LM)\times_M LM$, every (local) section $\sigma\in\Gamma(\tau_1)$ over $U\subset M$ that is integral for $\cI_{\text{L}}$ can be thought of as a couple of sections $\sigma_1:=q\circ \sigma$ and $\sigma_2:=\tau_{10}\circ \sigma$. As we saw in the previous section, if $\Gamma$ is the principal connection induced by $\sigma_1$ on $\tau$, then $\omega_\Gamma=\tilde\sigma_1^*\omega$. Hence, the metricity condition implies that $\omega_\Gamma$ is a torsionless (pseudo) metric connection.
\end{note}

Following the constructions of Appendix \ref{sec:vvalform} about vector-valued differential forms, for every $k\leq n$, we can define a $k$-form with values in $\Lambda^k\mR^n$ given by
\[
\theta^{(k)}:=\underbrace{\theta\wedge\dots\wedge\theta}_{k\text{ times}}.
\]
Hence
\[
\theta^{(k)}(X_1,\dots,X_k):=\theta(X_1)\wedge\dots\wedge\theta(X_k).
\]
Using the canonical basis of $\mR^m$ we can write
\[
\theta^{(k)}=\theta^{i_1}\wedge\dots\wedge\theta^{i_k}\otimes \mathbf{e}_{i_1}\wedge\dots\wedge\mathbf{e}_{i_k}.
\] 
Now we can take the Hodge star operator in the second factor (see Appendix \ref{sec:hodge}), namely
\begin{align*}
\star\left(\theta^{(k)}\right)=&\theta^{i_1}\wedge\dots\wedge\theta^{i_k}\otimes \frac{1}{(m-k)!}\eta_{i_1j_1}\dots\eta_{i_kj_k}\varepsilon^{j_1\dots j_kj_{k+1}\dots j_m}\mathbf{e}_{j_{k+1}}\wedge\dots\wedge\mathbf{e}_{j_m}\\
=&\theta^{i_1}\wedge\dots\wedge\theta^{i_k}\otimes \frac{1}{(m-k)!}\eta_{i_1j_1}\dots\eta_{i_kj_k}\varepsilon^{j_1\dots j_m}\delta^{l_{k+1}}_{j_{k+1}}\dots \delta^{l_{m}}_{j_{m}}\mathbf{e}_{l_{k+1}}\wedge\dots\wedge\mathbf{e}_{l_m}\\
=&\theta^{i_1}\wedge\dots\wedge\theta^{i_k}\otimes \frac{1}{(m-k)!}\underbrace{\eta_{i_1j_1}\dots\eta_{i_kj_k}\eta_{r_{k+1}j_{k+1}}\dots\eta_{r_mj_m}\varepsilon^{j_1\dots j_m}}_{\det(\eta)\varepsilon_{i_1\dots i_kr_{k+1}\dots r_m}}\\
\phantom{=}&\eta^{r_{k+1}l_{k+1}}\dots\eta^{r_ml_m}\mathbf{e}_{l_{k+1}}\wedge\dots\wedge\mathbf{e}_{l_m},
\end{align*}
where we have used the properties of the Levi-Civita symbol (see Appendix \ref{sec:levi}). Now, renaming indices and using the definition of the forms $\theta_{i_1\dots i_p}$, we find
\begin{align}
\star\left(\theta^{(k)}\right)&=\det(\eta)\eta^{i_1j_1}\dots\eta^{i_pj_p}\theta_{i_1\dots i_p}\otimes\mathbf{e}_{j_1}\wedge\dots\wedge\mathbf{e}_{j_p},\label{starteta}
\end{align}
with $p=m-k$.
Notice that $\theta^{(k)}$ is a $k$-form with values in $\Lambda^{k}\mR^m$, while $\star\left(\theta^{(k)}\right)$ is a $k$-form with values in $\Lambda^{m-k}\mR^m$.
Now let $\displaystyle r< \left[\frac{m}{2}\right]$ be an integer, where 
$[\cdot]$ denotes the integral part. 

\begin{defc}
Let $V$ be an $m$-dimensional real vector space. We define
\[
A_r:\Lambda^{2r}V\rightarrow (\Lambda^r V)\otimes (\Lambda^r V)^*\simeq \left(\text{End}(\Lambda^rV)\right)^*
\]
as the unique linear map whose action on elementary $2r$-vectors is given by
\medskip
\begin{align*}
%\scriptstyle
\left(v_{j_1}\wedge\dots\wedge v_{j_r}\right)\wedge&\left(v_{j_{r+1}}\wedge\dots\wedge v_{j_{2r}}\right)\ \mapsto \\
\mapsto&\frac{1}{(2r!)}\sum_{\sigma\in S_{2r}}\text{sgn}(\sigma)\left(v_{j_{\sigma(1)}}\wedge\dots\wedge v_{j_{\sigma(r)}}\right)\otimes\hat\eta^\flat\left(v_{j_{\sigma(r+1)}}\wedge\dots\wedge v_{j_{\sigma(2r)}}\right) \ ,
\end{align*}
\medskip
where $\hat\eta$ is the extension of $\eta$ to $\Lambda^rV$ defined on Appendix \ref{sec:hodge}. It is readily seen that it is in fact well-defined and linear.
\end{defc}

Using the linear map $A_r$, we can construct an $(m-2r)$-form with values in $\left(\text{End}(\Lambda^r\mR^m)\right)^*$ as 
\[
\Xi_r=A_r\left(\star\left(\theta^{(m-2r)}\right)\right).
\]
We can think of $\Xi_r$ as taking values in $(\Lambda^r\mathfrak g)^*$ rather than $\left(\text{End}(\Lambda^r\mR^m)\right)^*$ because the latter can be viewed as a subspace of the former. That is
\begin{equation}\label{eq:WedgeStarInStarWedge}
  \left(\Lambda^r\g\right)^*\supset\left(\text{End}\left(\Lambda^r\mR^m\right)\right)^*,
\end{equation}
which is a consequence of the inclusion $\Lambda^r\mathfrak g\subset\text{End}(\Lambda^r\mR^m)$ given by the monomorphism $\Gamma:\Lambda^r\left(\text{End}(\mR^m)\right)\to\text{End}\left(\Lambda^r\mR^m\right)$ defined as
\[
 A_1\wedge\cdots\wedge A_r\mapsto\left[v_1\wedge\cdots\wedge v_r\mapsto\frac{1}{r!}\sum_{\sigma\in S_r}A_1\left(v_{\sigma\left(1\right)}\right)\wedge\cdots\wedge A_r\left(v_{\sigma\left(r\right)}\right)\right].
\]
\begin{comment}
it defines the inclusion
\[
  \Lambda^r\g=\Lambda^r\left(\text{End}\mR^m\right)\subset\text{End}\left(\Lambda^r\mR^m\right)
\]
so that
\begin{equation}\label{eq:WedgeStarInStarWedge}
  \left(\Lambda^r\g\right)^*\supset\left(\text{End}\left(\Lambda^r\mR^m\right)\right)^*.
\end{equation}
\end{comment}
We can use these considerations to introduce the Lovelock Lagrangian:

\begin{defc}
The Lovelock Lagrangian is the $\tau_1$-horizontal $m$-form
\[
\lambda_{\text{L}}:=\sum_{r<[m/2]}a_r\left<\Xi_r\wedcol \Omega^r\right>,
\]
where $a_r$ are constants and
$\Omega^r=\underbrace{\Omega\wedge\dots\wedge\Omega}_{r\text{ times}}$.
\end{defc}

{%\color{red}
  \begin{note}[Regularity of the Lovelock Lagrangian]
    From the viewpoint of classical second order field theory on the bundle of metrics, the Lovelock Lagrangian is a singular Lagrangian: It follows from the fact that the equations of motion are given by the Einstein tensor, which are of second order, although for a regular Lagrangian in a second order theory, they should have been of fourth order.
    
    Nevertheless, it should be stressed that the variational problem posed by $\lambda_{\text{L}}$ and the constraints $\cI_{\text{L}}$ is not a classical one; in this regard, the extremals of such problem are not necessarily holonomic as sections of the jet bundle $J^1\tau$. In particular, the notion of regularity of a Lagrangian has not a clear generalization to this case; in fact, it would depend on which feature of this concept we want to highlight:
    \begin{enumerate}
    \item For example, the regularity of a Lagrangian can be seen as a sufficient condition for the existence of solutions for the equations of motion (as it allows us to apply Cauchy-Kovalevskaya theorem). From this viewpoint, the fact that the exterior differential system \eqref{eq:EDSForLovelock} representing these equations of motion admits solutions becomes a necessary condition for the regularity of the variational problem determined by the data $\left(\lambda_{\text{L}},\cI_{\text{L}}\right)$.
    \item For classical variational problems, it could be noted that the regularity of a Lagrangian is also tied to the fact that the associated Legendre transformation is a diffeomorphism. It suggests another way to generalize regularity for a variational problem of the type discussed here. The idea is that, from the unified formalism perspective, Legendre transformation becomes part of the equations of motion, and it can be obtained as a consequence of the direct sum structure of the multimomentum bundle (that in our case is determined by Lemma \ref{lem:MultiDecomposition} below). Using the equations of motion \eqref{eq:EDSForLovelock}, it results that the map generalizing Legendre transform in this sense is identically zero.
    \end{enumerate}
    In any case, this generalization would require further research, which is expected to be carried out elsewhere.
\end{note}
}

\subsubsection{Expressions in terms of the canonical basis of $\mR^m$}

If we denote by $\{\mathbf e^1,\dots,\mathbf e^m\}$ the dual basis of the canonical basis in $\mR^m$, we can write
\[
A_r(\mathbf e_{j_1}\wedge\dots\wedge \mathbf e_{j_{2r}})=\frac{1}{(2r)!}\sum_{\sigma\in S_{2r}}\text{sgn}(\sigma)\eta_{j_{\sigma(1)}l_1}\dots\eta_{j_{\sigma(r)}l_r}\mathbf e_{j_{\sigma(r+1)}}\wedge\dots\wedge \mathbf e_{j_{\sigma(2r)}}\otimes\mathbf e^{l_1}\wedge\dots\wedge\mathbf e^{l_r}.
%\delta^{l_1\dots l_r}_{k_1\dots k_r}.
\]

Also, using \eqref{starteta},
\begin{align*}
\Xi_r=&A_r\left(\star\left(\theta^{(m-2r)}\right)\right)\\
=&\det(\eta)\eta^{i_1j_1}\dots\eta^{i_{2r}j_{2r}}\theta_{i_1\dots i_{2r}}\otimes A_r\left(\mathbf{e}_{j_1}\wedge\dots\wedge\mathbf{e}_{j_{2r}}\right)\\
=&\frac{\det(\eta)}{(2r)!}\sum_{\sigma\in S_{2r}}\text{sgn}(\sigma)\eta^{i_1j_1}\dots\eta^{i_{2r}j_{2r}}\eta_{j_{\sigma(1)}l_1}\dots\eta_{j_{\sigma(r)}l_r}\theta_{i_1\dots i_{2r}}\otimes \mathbf e_{j_{\sigma(r+1)}}\wedge\dots\\
\phantom{=}&\dots\wedge \mathbf e_{j_{\sigma(2r)}}\otimes\mathbf e^{l_1}\wedge\dots\wedge\mathbf e^{l_r},
%\delta^{l_1\dots l_r}_{k_1\dots k_r}
\end{align*}
but since $\theta_{i_1\dots i_{2r}}=\text{sgn}(\sigma)\theta_{i_{\sigma(1)}\dots i_{\sigma(2r)}}$,
\begin{align*}
\Xi_r=&\frac{\det(\eta)}{(2r)!}\sum_{\sigma\in S_{2r}}\eta^{i_{\sigma(1)}j_{\sigma(1)}}\dots\eta^{i_{\sigma(2r)}j_{\sigma(2r)}}\eta_{j_{\sigma(1)}l_1}\dots\eta_{j_{\sigma(r)}l_r}\theta_{i_{\sigma(1)}\dots i_{\sigma(2r)}}\otimes \mathbf e_{j_{\sigma(r+1)}}\wedge\dots\\
\phantom{=}&\dots\wedge \mathbf e_{j_{\sigma(2r)}}\otimes\mathbf e^{l_1}\wedge\dots\wedge\mathbf e^{l_r}\\
=&\frac{\det(\eta)}{(2r)!}\sum_{\sigma\in S_{2r}}\eta^{i_{\sigma(1)}j_{\sigma(1)}}\dots\eta^{i_{\sigma(2r)}j_{\sigma(2r)}}\eta_{j_{\sigma(1)}l_1}\dots\eta_{j_{\sigma(r)}l_r}\theta_{i_{\sigma(1)}\dots i_{\sigma(2r)}}\otimes \mathbf e_{j_{\sigma(r+1)}}\wedge\dots\\
\phantom{=}&\dots\wedge \mathbf e_{j_{\sigma(2r)}}\otimes\mathbf e^{l_1}\wedge\dots\wedge\mathbf e^{l_r}\\
=&\frac{\det(\eta)}{(2r)!}\sum_{\sigma\in S_{2r}}\eta^{i_{\sigma(r+1)}j_{\sigma(r+1)}}\dots\eta^{i_{\sigma(2r)}j_{\sigma(2r)}}\theta_{i_{\sigma(1)}\dots i_{\sigma(2r)}}\otimes \mathbf e_{j_{\sigma(r+1)}}\wedge\dots\\
\phantom{=}&\dots\wedge \mathbf e_{j_{\sigma(2r)}}\otimes\mathbf e^{i_{\sigma(1)}}\wedge\dots\wedge\mathbf e^{i_{\sigma(r)}} \ ,
\end{align*}
and, as all are dummy indices, we finally get
\[
\Xi_r=\det(\eta)\eta^{i_{r+1}j_{r+1}}\dots\eta^{i_{2r}j_{2r}}\theta_{i_{1}\dots i_{2r}}\otimes \mathbf e_{j_{r+1}}\wedge\dots\wedge \mathbf e_{j_{2r}}\otimes\mathbf e^{i_{1}}\wedge\dots\wedge\mathbf e^{i_{r}}.
\]

Furthermore, if $\Omega=\Omega^a_b\otimes\mathbf e^b\otimes \mathbf e_a$, we have
\[
\Omega^r=\Omega^{a_1}_{b_1}\wedge\dots\wedge \Omega^{a_r}_{b_r}\otimes\left(\mathbf e^{b_1}\otimes \mathbf{e}_{a_1}\right)\wedge\dots\wedge\left(\mathbf e^{b_r}\otimes\mathbf{e}_{a_r}\right)\in\Omega^{2r}\left(J^1\tau\right)\otimes\Lambda^r\g,
\]
so that, in view of the inclusion \eqref{eq:WedgeStarInStarWedge}, we obtain
\[
\lambda_{\text{L}}=\sum_{r<[m/2]}a_r\theta_{i_{1}\dots i_rl_1\dots l_{r}}\wedge\Omega^{i_1l_1}\wedge\dots\wedge \Omega^{i_rl_r},
\]
where $\Omega^{ab}=\eta^{bq}\Omega^{a}_{q}$ and all the possible multiplicative constants have been absorbed in the constants $a_r$. From now on, we will work with each homogeneous component
\[
\lambda^{(r)}_{\text{L}}=\theta_{i_{1}\dots i_rl_1\dots l_{r}}\wedge\Omega^{i_1l_1}\wedge\dots\wedge \Omega^{i_rl_r},
\]
which will be indicated with the generic symbol $\lambda_L$.
\begin{note}
To simplify the computations, it will be convenient to introduce the following multi-index notation. We use capital letters $I,J$ to denote multi-indices $I=(i_1,\dots,i_p)$, $J=(j_1,\dots,j_p)$. An apostrophe denotes a multi-index formed by removing the first index of a given multi-index, i.e. $I'=(i_2,\dots,i_p)$ if $I=(i_1,\dots,i_p)$. In this case, we use concatenation of indices and multi-indices and write
$I=i_1I'$.
Also, we will write $\Omega^{IJ}=\Omega^{i_1j_1}\wedge\dots\wedge \Omega^{i_rj_r}$ and $\theta_{i_{1}\dots i_rj_1\dots j_{r}}=\theta_{IJ}$. Thus, the Lovelock Lagrangian can be written
\[
\lambda^{(r)}_{\text{L}}=\theta_{IJ}\wedge\Omega^{IJ}.
\]

\end{note}

\subsubsection{Relation with the metric-affine Lagrangian}

To relate $\lambda_{\text{L}}$ with the metric-affine formalism, remember that every principal connection $\Gamma$ gives rise to a linear connection in $TM$ (as an associated vector bundle with fiber $\mR^m$ and canonical action of $G$). Let $\omega_\Gamma$ is the corresponding connection form (obtained as the pullback of the universal connection $\omega$ by a suitable section) and $\Omega_\Gamma$ its related curvature. 

Then, if $\Omega_\Gamma^{ab}=\Omega^{ab}_{\mu\nu}\dif x^\mu\wedge\dif x^\nu$, we have
\begin{equation}
\Omega^{ab}_{\mu\nu}=R_{\mu\nu}^{\sigma\tau}e^a_\sigma e^b_\tau,\label{omegar}
\end{equation}
where $R_{\mu\nu}^{\sigma\tau}=g^{\rho\tau}R_{\rho\mu\nu}^{\sigma}$ are the components of the curvature tensor with respect to the linear connection induced by $\Gamma$, i.e.
\[
R\left(\frac{\partial}{\partial x^\mu},\frac{\partial}{\partial x^\nu}\right)\frac{\partial}{\partial x^\rho}=R_{\rho\mu\nu}^{\sigma}\frac{\partial}{\partial x^\sigma} \ ,
\]
and $g_{\mu\nu}:=e_\mu^a\eta_{ab}e^b_{\nu}$ is the corresponding metric.

Thus, we can compute locally the pullback of $\lambda_{\text{L}}$ by a section as follows
\begin{align*}
\lambda_{\text{L}}=&\theta_{i_{1}\dots i_rl_1\dots l_{r}}\wedge\Omega^{i_1l_1}\wedge\dots\wedge \Omega^{i_rl_r}\\
=&\varepsilon_{i_{1}\dots i_rl_1\dots l_{r}s_1\dots s_k}\theta^{s_1}\wedge\dots\wedge\theta^{s_k}\wedge\Omega^{i_1l_1}\wedge\dots\wedge \Omega^{i_rl_r}\\
=&\varepsilon_{i_{1}\dots i_rl_1\dots l_{r}s_1\dots s_k}e^{s_1}_{\rho_1}\dots e^{s_k}_{\rho_k}\Omega^{i_1l_1}_{\mu_1\nu_1}\dots\Omega^{i_rl_r}_{\mu_r\nu_r}\dif x^{\rho_1}\wedge\dots\wedge\dif x^{\rho_k}\wedge\dif x^{\mu_1}\wedge\dif x^{\nu_1}\wedge\dots\\
\phantom{=}&\dots\wedge\dif x^{\mu_r}\wedge\dif x^{\nu_r}\\
=&\varepsilon_{i_{1}\dots i_rl_1\dots l_{r}s_1\dots s_k}\varepsilon^{\rho_{1}\dots \rho_k\mu_1\nu_1\dots \mu_{r}\nu_r}e^{s_1}_{\rho_1}\dots e^{s_k}_{\rho_k}\Omega^{i_1l_1}_{\mu_1\nu_1}\dots\Omega^{i_rl_r}_{\mu_r\nu_r}\dif^m x\\
=&\varepsilon_{i_{1}\dots i_rl_1\dots l_{r}s_1\dots s_k}\varepsilon^{\rho_{1}\dots \rho_k\mu_1\nu_1\dots \mu_{r}\nu_r}e^{s_1}_{\rho_1}\dots e^{s_k}_{\rho_k}\left(\delta^{i_1}_{c_1}\delta^{l_1}_{d_1}\right)\dots\relax\left(\delta^{i_r}_{c_r}\delta^{l_r}_{d_r}\right)\Omega^{c_1d_1}_{\mu_1\nu_1}\dots\Omega^{c_rd_r}_{\mu_r\nu_r}\dif^m x\\
=&\varepsilon_{i_{1}\dots i_rl_1\dots l_{r}s_1\dots s_k}\varepsilon^{\rho_{1}\dots \rho_k\mu_1\nu_1\dots \mu_{r}\nu_r}e^{s_1}_{\rho_1}\dots e^{s_k}_{\rho_k}\left(e^{i_1}_{\alpha_1}e^{\alpha_1}_{c_1}e^{l_1}_{\beta_1}e^{\beta_1}_{d_1}\right)\dots\\
\phantom{=}&\dots\relax\left(e^{i_r}_{\alpha_r}e^{\alpha_r}_{c_r}e^{l_r}_{\beta_r}e^{\beta_r}_{d_r}\right)\Omega^{c_1d_1}_{\mu_1\nu_1}\dots\Omega^{c_rd_r}_{\mu_r\nu_r}\dif^m x\\
=&\det(e)\varepsilon^{\rho_{1}\dots \rho_k\mu_1\nu_1\dots \mu_{r}\nu_r}\varepsilon_{\alpha_{1}\dots \alpha_r\beta_1\dots \beta_{r}\rho_1\dots \rho_k} R^{\alpha_1\beta_1}_{\mu_1\nu_1}\dots R^{\alpha_r\beta_r}_{\mu_r\nu_r}\dif^m x\\
=&(m-2r)!\det(e)\delta^{\mu_1\nu_1\dots \mu_{r}\nu_r}_{\alpha_{1}\dots \alpha_r\beta_1\dots \beta_{r}}R^{\alpha_1\beta_1}_{\mu_1\nu_1}\dots R^{\alpha_r\beta_r}_{\mu_r\nu_r}\dif^m x\\
=&(-1)^{r!}(m-2r)!\det(e)\delta^{\mu_1\nu_1\dots \mu_{r}\nu_r}_{\alpha_{1}\beta_1\dots \alpha_r\beta_r}R^{\alpha_1\beta_1}_{\mu_1\nu_1}\dots R^{\alpha_r\beta_r}_{\mu_r\nu_r}\dif^m x,
\end{align*}
where we have used Eq. \eqref{omegar} and the identity
\[
\varepsilon_{i_{1}\dots i_rl_1\dots l_{r}s_1\dots s_k}e^{i_1}_{\alpha_1}\dots e^{i_r}_{\alpha_r}e^{l_1}_{\beta_1}\dots e^{l_r}_{\beta_r}e^{s_1}_{\rho_1}\dots e^{s_k}_{\rho_k}=\varepsilon_{\alpha_{1}\dots \alpha_r\beta_1\dots \beta_{r}\rho_1\dots \rho_k}\det(e).
\]
Then, as $\det(e)=\sqrt{-g}$, we recover the usual Lovelock Lagrangian, i.e.
\[
\lambda_{\text{L}}=\alpha\sqrt{-g}\delta^{\mu_1\nu_1\dots \mu_{r}\nu_r}_{\alpha_{1}\beta_1\dots \alpha_r\beta_r}R^{\alpha_1\beta_1}_{\mu_1\nu_1}\dots R^{\alpha_r\beta_r}_{\mu_r\nu_r}\dif^m x.
\]

\section{Field equations}
\label{fieqs}

As we have said in the previous section, to compute the field equations associated with the Lovelock problem we need to characterize the infinitesimal symmetries of the exterior differential system $\mathcal I_{\text{L}}$. We devote the following section to this task.

\subsection{Infinitesimal symmetries of $\mathcal I_{\text{L}}$}
\label{sec:infin-symm-mathc}
To give a characterization of the infinitesimal symmetries of $\mathcal I_{\text{L}}$, it is useful to introduce a convenient basis of vector fields for the vertical bundle $V\tau_1$. Using the identification $J^1\tau\simeq C(LM)\times_M LM$ we can construct this basis using $q$-vertical and $\tau_{10}$-vertical vector fields. 

First, consider the infinitesimal generators associated with the action of $G$ on $J^1\tau$. If $E^i_j$ denote a vector of the canonical basis of $\mathfrak g$, we denote these vector fields by $(E^i_j)_{J^1\tau}$. It can be seen that $T\tau_{10}(E^i_j)_{J^1\tau}=(E^i_j)_{LM}$ and hence they are $\tau_1$-vertical vector fields. From the principal bundle structure of $q:J^1\tau\rightarrow C(LM)$ it is immediate to see that they are also $q$-vertical.

Furthermore, as $J^1\tau\rightarrow LM$ is an affine bundle, we can construct vertical lifts of $\tau$-vertical vector fields. Given a differential form $\alpha\in\Omega(M)$ and a $\tau$-vertical vector field $X$, the vertical lift $(\alpha,X)^V$ is defined as the vector field whose flow is given by 
\[
\theta(t,j^1_xs)=j^1_xs+t\alpha_x\otimes X(s(x)),
\]
where the sign ``$+$'' must be understood as the affine action of $\tau^*\pi_{M}\otimes_{LM}V\tau$ on $J^1\tau$.\footnote{Here $\pi_{M}:T^*M\rightarrow M$ is the cotangent projection.} In other words, 
\[
(\alpha,X)^V(j^1_xs)=\left.\vec{\frac{d}{dt}}\right|_{t=0}\left(j^1_xs+t\alpha_x\otimes X(s(x))\right),
\]

We can adapt this definition replacing the differential forms $\alpha$ by differential forms along $J^1\tau$, i.e. $\alpha\in\Gamma(\tau_1^*\pi_M)$,
\[
(\alpha,X)^V(j^1_xs)=\left.\vec{\frac{d}{dt}}\right|_{t=0}\left(j^1_xs+t\left.\alpha\right|_{j^1_xs}\otimes X(s(x))\right),
\]
In particular, we can use the forms $\theta^r$ and the infinitesimal generators $(E^s_t)_{LM}$, which we denote $(\theta^r,(E^s_t)_{LM})^V$. It is clear that these vector fields are $\tau_{10}$-vertical, and in consequence they are also $\tau_1$-vertical.

It can be proved that the set of vector fields $\left\{(E^s_t)_{J^1\tau},(\theta^r,(E^s_t)_{LM})^V\right\}$ form a basis of the vertical bundle $V\tau_1$ (see \cite{doi:10.1142/S0219887818500445}).

\begin{note}
It is useful to write down local expressions for the vector fields introduced above. In a trivializing open set, it can be seen that
\[
(E^k_l)_{J^1\tau}(x^\mu,e^\mu_i,e^\mu_{i\sigma})=e^\mu_l\frac{\partial}{\partial e^\mu_k}+e^\mu_{l\sigma}\frac{\partial}{\partial e^\mu_{k\sigma}} \ ,
\]
and
\[
(\theta^r,(E^s_t)_{LM})^V(x^\mu,e^\mu_i,e^\mu_{i\sigma})=e^r_\sigma e^\mu_t e^s_\nu e^\nu_i\frac{\partial}{\partial e^\mu_{i\sigma}}\ .
\]
Using these expressions we can check that the vector fields $\left\{(E^s_t)_{J^1\tau},(\theta^r,(E^s_t)_{LM})^{V_0}\right\}$, with
\[
(\theta^r,(E^s_t)_{LM})^{V_0}=(\theta^r,(E^s_t)_{LM})^V+(\theta^s,(E^r_t)_{LM})^V \ ,
\]
form a basis of the bundle $V(\left.\tau_1\right|_{T_0})$.
\end{note}

\begin{note}
If we fix a principal connection (that may be chosen torsionless) on $\tau$, it is possible to complete this basis to a full basis of $TJ^1\tau$ by considering the prolongations of the standard horizontal vector fields on $LM$ (see \cite{KN1}).
\end{note}

\begin{prop}\label{contractions}
The following contractions hold
\begin{align*}
(\theta^r,(E^s_t)_{LM})^V\lrcorner \Omega^k_l&=\delta^k_t\delta^s_l\theta^r \ ,\\
(\theta^r,(E^s_t)_{LM})^V\lrcorner \omega^k_l&=0\ ,\\
(\theta^r,(E^s_t)_{LM})^V\lrcorner \theta^k&=0\ .
\end{align*}
\end{prop}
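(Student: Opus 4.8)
The three contractions are all local computations that follow directly from the coordinate expressions established in the preceding remarks and the universal formulas for $\Omega$, $\omega$, $\theta$. First I would recall the coordinate expression of the vertical lift from the Remark above, namely
\[
(\theta^r,(E^s_t)_{LM})^V = e^r_\sigma e^\mu_t e^s_\nu e^\nu_i \frac{\partial}{\partial e^\mu_{i\sigma}},
\]
so this vector field is $\partial/\partial e^\mu_{i\sigma}$-valued: it only sees the fiber coordinates of $J^1\tau$ over $LM$. The second and third identities are then immediate: $\theta^k=e^k_\mu\dif x^\mu$ involves no $\dif e^\mu_{i\sigma}$ at all, so $(\theta^r,(E^s_t)_{LM})^V\lrcorner\theta^k=0$; and $\omega^k_l=e^k_\mu(\dif e^\mu_l - e^\mu_{l\sigma}\dif x^\sigma)$ likewise has no $\dif e^\mu_{i\sigma}$ component (the $e^\mu_{l\sigma}$ there are coefficients, not differentials), so $(\theta^r,(E^s_t)_{LM})^V\lrcorner\omega^k_l=0$ as well.

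For the first identity, the point is that $\Omega^k_l = \dif\omega^k_l + \omega^k_j\wedge\omega^j_l$, and since the $\omega$'s have no $\dif e^\mu_{i\sigma}$ part, the contraction only hits $\dif\omega^k_l$; more precisely, for any $1$-form $\beta$ with $(\theta^r,(E^s_t)_{LM})^V\lrcorner\beta=0$ we have $(\theta^r,(E^s_t)_{LM})^V\lrcorner(\beta\wedge\gamma) = -\beta\wedge((\theta^r,(E^s_t)_{LM})^V\lrcorner\gamma)$, and applied to $\omega^k_j\wedge\omega^j_l$ this vanishes since both factors are killed. So it remains to compute $(\theta^r,(E^s_t)_{LM})^V\lrcorner\dif\omega^k_l$. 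I would write $\dif\omega^k_l = \dif e^k_\mu\wedge(\dif e^\mu_l - e^\mu_{l\sigma}\dif x^\sigma) + e^k_\mu(-\dif e^\mu_{l\sigma}\wedge\dif x^\sigma)$, observe that the only term containing a $\dif e^\mu_{l\sigma}$ is $-e^k_\mu\,\dif e^\mu_{l\sigma}\wedge\dif x^\sigma$, contract the vertical lift into it using $(\theta^r,(E^s_t)_{LM})^V\lrcorner\dif e^\mu_{i\sigma} = e^r_\sigma e^\mu_t e^s_\nu e^\nu_i$ (reading off the component), and simplify: the $\mu$-index gets contracted against $e^k_\mu$ to give $e^k_\mu e^\mu_t = \delta^k_t$, the $\sigma$ gets contracted against $\dif x^\sigma$ to give $e^r_\sigma\dif x^\sigma$, and the $i,l$ indices force $e^s_\nu e^\nu_l = \delta^s_l$, so I end up with $\delta^k_t\,\delta^s_l\, e^r_\sigma\dif x^\sigma = \delta^k_t\delta^s_l\theta^r$, as claimed. (Alternatively, and more conceptually, one can argue invariantly: the vertical lift $(\alpha,X)^V$ changes the jet by $t\,\alpha\otimes X$, so differentiating the curvature along its flow picks out exactly the part of $\Omega$ linear in the jet variable, which by the structure equation is the part of $\dif\omega$ that is the antisymmetrization of $\partial/\partial e^\mu_{l\sigma}$-directions against $\dif x^\sigma$; this reproduces the same answer without coordinates, using only $\omega|_\rho(Y) = [\rho]_G(T\tau_{10}(Y))$ and the definition of $\Omega$.)

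\textbf{Main obstacle.} There is no real obstacle here — the whole statement is bookkeeping with indices once the right coordinate expressions are in hand. The only place to be a little careful is the sign and index-contraction in the first identity: one must correctly track which term of $\dif\omega^k_l$ carries the $\dif e^\mu_{l\sigma}$ factor and contract the (somewhat notation-heavy) vertical-lift vector field into it without dropping a factor of $e^\nu_i$ or mismatching the free indices $k,l,r,s,t$. A clean way to package this, which I would use in the writeup, is to first prove the auxiliary fact $(\theta^r,(E^s_t)_{LM})^V\lrcorner\dif e^\mu_{i\sigma} = e^r_\sigma e^\mu_t e^s_\nu e^\nu_i$ and the two vanishing contractions $\lrcorner\,\theta^k=0$, $\lrcorner\,\omega^k_l=0$ as lemmas, and then the three displayed formulas drop out in a line or two each.
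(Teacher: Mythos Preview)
The paper states this proposition without proof, so there is no argument to compare against; your local-coordinate approach is the natural one and is exactly what the surrounding remarks set up. The second and third contractions are indeed immediate for the reason you give: the vector field $(\theta^r,(E^s_t)_{LM})^V=e^r_\sigma e^\mu_t e^s_\nu e^\nu_i\,\partial/\partial e^\mu_{i\sigma}$ lives purely in the $\partial/\partial e^\mu_{i\sigma}$ directions, and neither $\theta^k=e^k_\mu\dif x^\mu$ nor $\omega^k_l=e^k_\mu(\dif e^\mu_l-e^\mu_{l\sigma}\dif x^\sigma)$ contains any $\dif e^\mu_{i\sigma}$.

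One point to watch in the first identity: you correctly isolate the only relevant term of $\dif\omega^k_l$ as $-e^k_\mu\,\dif e^\mu_{l\sigma}\wedge\dif x^\sigma$, but in the final line you silently drop the leading minus sign. Carrying it through,
\[
(\theta^r,(E^s_t)_{LM})^V\lrcorner\bigl(-e^k_\mu\,\dif e^\mu_{l\sigma}\wedge\dif x^\sigma\bigr)
=-e^k_\mu\,\bigl(e^r_\sigma e^\mu_t\,\delta^s_l\bigr)\dif x^\sigma
=-\delta^k_t\,\delta^s_l\,\theta^r,
\]
and your invariant argument gives the same thing: the flow shifts $e^\mu_{l\sigma}\mapsto e^\mu_{l\sigma}+\tau\,e^r_\sigma e^\mu_t\,\delta^s_l$, so $\phi_\tau^*\omega^k_l=\omega^k_l-\tau\,\delta^k_t\delta^s_l\theta^r$ and hence $\cL_W\omega^k_l=W\lrcorner\Omega^k_l=-\delta^k_t\delta^s_l\theta^r$. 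So either the proposition as printed carries an overall sign (harmless for the later applications, where it enters uniformly), or there is a convention to reconcile; in any case your method is sound, but your write-up should keep that minus sign and flag the discrepancy rather than absorb it.
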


Before moving on, remember that given a vector field $U\in\mathfrak X(LM)$ the first prolongation of $U$ is the unique vector field $j^1U\in\mathfrak X(J^1\tau)$ that is projectable to $U$ and is an infinitesimal symmetry of the contact exterior differential system. The next lemma shows that prolongations of $G$-invariant vector fields are infinitesimal symmetries of the universal connection {\cite{springerlink:10.1007/PL00004852}:

\begin{lem}
  Let $U\in\mathfrak{X}^{V\tau}\left(LM\right)$ be a vertical $G$-invariant vector field on $LM$. Then
  \[
    \cL_{j^1U}\omega=0.
  \]
\end{lem}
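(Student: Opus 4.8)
The plan is to exploit the universal property of $\omega$ together with the naturality of the first prolongation. Recall that a $G$-invariant vertical vector field $U$ on $LM$ generates a (local) one-parameter group of principal bundle automorphisms $\{\phi_t\}$ of $\tau:LM\to M$ covering the identity on $M$; its prolongation $j^1U$ generates the one-parameter group $\{j^1\phi_t\}$ of automorphisms of $q:J^1\tau\to C(LM)$ (these are precisely the lifts of the $\phi_t$ to the $1$-jet bundle). Since $\cL_{j^1U}\omega = \left.\frac{d}{dt}\right|_{t=0}(j^1\phi_t)^*\omega$, it suffices to show that $(j^1\phi_t)^*\omega = \omega$ for all $t$, i.e. that each $j^1\phi_t$ preserves the universal connection form.

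\textbf{Key steps.} First I would verify that $j^1\phi_t$ acts on $J^1\tau\simeq C(LM)\times_M LM$ by $(j^1\phi_t)([\rho]_G,u) = ([\rho]_G, \phi_t(u))$; that is, because $U$ is $G$-invariant its flow descends to the identity on $C(LM)=J^1\tau/G$, so $j^1\phi_t$ acts only on the $LM$-factor, via $\phi_t$ itself. This is the geometric heart of the argument and the step I expect to be the main obstacle: one must check carefully that $G$-invariance of $U$ forces the induced flow on $C(LM)$ to be trivial — equivalently, that $\phi_t$, being a bundle automorphism over $\id_M$, maps each pointwise connection $[\rho]_G$ to itself under the identification $C(LM)=J^1\tau/G$. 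Concretely, $\phi_t$ sends the section $s$ to $\phi_t\circ s\circ(\text{stuff})$, but since $\phi_t$ covers $\id_M$ and is $G$-equivariant, the jet $j^1_x(\phi_t\circ s)$ lies in the same $G$-orbit as $j^1_x s$, so $q$ is invariant. Then $T\tau_{10}\circ(j^1\phi_t) = \phi_t\circ T\tau_{10}$ on the relevant quotients.

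Second, I would plug this into the defining formula $\left.\omega\right|_\rho(Y) = [\rho]_G(T_\rho\tau_{10}(Y))$. For $Y\in T_\rho J^1\tau$ we compute
\[
\left.(j^1\phi_t)^*\omega\right|_\rho(Y) = \left.\omega\right|_{j^1\phi_t(\rho)}\bigl(T_\rho(j^1\phi_t)(Y)\bigr) = [j^1\phi_t(\rho)]_G\bigl(T\tau_{10}(T(j^1\phi_t)(Y))\bigr).
\]
Using $[j^1\phi_t(\rho)]_G = [\rho]_G$ (from Step 1, the $C(LM)$-component is fixed) and $T\tau_{10}\circ T(j^1\phi_t) = T\phi_t\circ T\tau_{10}$, this becomes $[\rho]_G\bigl(T\phi_t(T\tau_{10}(Y))\bigr)$. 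Here $[\rho]_G$ is viewed as a linear map into $(TLM)/G$, and $T\phi_t$ acts trivially on $(TLM)/G$ precisely because $\phi_t$ is $G$-equivariant — i.e. $[T\phi_t] = \id$ on $(TLM)/G$, again by $G$-invariance of $U$. Hence the expression equals $[\rho]_G(T\tau_{10}(Y)) = \left.\omega\right|_\rho(Y)$, giving $(j^1\phi_t)^*\omega=\omega$ and therefore $\cL_{j^1U}\omega = 0$.

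\textbf{Alternative / cross-check.} One could instead argue infinitesimally and coordinate-free: $\cL_{j^1U}\omega$ is a tensorial $1$-form (a routine consequence of $j^1U$ being a symmetry of the contact system and $U$ being $G$-invariant, so $\cL_{j^1U}$ commutes with the $G$-action), hence determined by its values on $q$-vertical vectors. But $j^1U$ projects to the identity flow on $C(LM)$, so the flow of $j^1U$ preserves the vertical distribution $Vq$ and restricts to it as the canonical vertical action; combined with $\omega$ restricting to the Maurer--Cartan form on fibers of $q$, one gets $(\iota_{j^1U}d\omega + d\iota_{j^1U}\omega)$ vanishing on a spanning set. I would present the flow argument above as the main proof since it is cleanest, and it is exactly the statement recorded in \cite{springerlink:10.1007/PL00004852} that prolongations of $G$-invariant vector fields are symmetries of the universal connection.
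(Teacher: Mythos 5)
Your overall strategy (pass to the flow, show the prolonged flow preserves $\omega$, conclude $\cL_{j^1U}\omega=0$) is the same as the paper's, but the step you yourself single out as the ``geometric heart'' is false, and the proof does not survive it. A $G$-invariant vertical vector field is an infinitesimal gauge transformation, and gauge transformations do \emph{not} act trivially on the bundle of connections: the induced map on $C(LM)=J^1\tau/G$ is the usual affine gauge action, not the identity. Concretely, take $M=\mR$, $G=GL(1)$, $U=f(x)\,e\,\partial/\partial e$, so $\phi_t(x,e)=(x,e\,e^{tf(x)})$. Then $j^1\phi_t(x,e,e_x)=(x,\,e\,e^{tf},\,(e_x+t e f'(x))e^{tf})$, and in the connection coordinate $\Gamma=-e^{-1}e_x$ of $C(LM)$ one gets $\Gamma\mapsto \Gamma-tf'(x)$, i.e.\ $[j^1\phi_t(\rho)]_G\neq[\rho]_G$ unless $f'\equiv 0$. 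For the same reason your second claim, that $[T\phi_t]=\mathrm{id}$ on $(TLM)/G$, is also false: in this example $[\partial_x]\mapsto[\partial_x]+tf'(x)[e\,\partial_e]$. Since your computation of $(j^1\phi_t)^*\omega$ uses both of these identities, it is not a proof, even though its conclusion $(j^1\phi_t)^*\omega=\omega$ is correct (as one can check directly from $\omega^i_j=e^i_\mu(\dif e^\mu_j-e^\mu_{j\sigma}\dif x^\sigma)$ in the example above: the shift of the jet coordinates exactly cancels the derivative of the gauge factor).

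The paper's argument is shorter and avoids this trap: the flow of a $G$-invariant vector field is a one-parameter group of automorphisms of $LM$, and for \emph{any} automorphism $F$ of $LM$ (covering an arbitrary diffeomorphism of $M$, not only the identity) one has $\left(j^1F\right)^*\omega=\omega$; this is the naturality of the universal connection established in \cite{springerlink:10.1007/PL00004852}. The mechanism behind that identity is precisely a compensation: $j^1F$ moves the $C(LM)$-component, $[\rho]_G\mapsto[TF]\circ[\rho]_G\circ Tf^{-1}$, while $T\tau_{10}\circ T(j^1F)=TF\circ T\tau_{10}$ and the identification $V\tau\simeq LM\times\g$ is preserved by $G$-equivariance of $F$; neither factor is fixed separately, but their composition reproduces $\omega|_\rho$. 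To repair your write-up you should either prove this equivariance statement for general automorphisms (and then specialize to the flow of $U$), or verify $(j^1\phi_t)^*\omega=\omega$ by a direct computation that tracks the nontrivial change of the $C(LM)$-component rather than assuming it away.
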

\begin{proof}
  We know that $U$ is $G$-invariant if, and only if, its flow $\Psi_t^U:LM\rightarrow LM$ is an automorphism of $LM$. Furthermore, for every automorphism $F\colon LM\rightarrow LM$, we have that
  \[
    \left(j^1F\right)^*\omega=\omega \ ,
  \]
and the lemma follows from this fact.
\end{proof}

\begin{note}\label{rem:VertVectorProlong}
Given any element $u\in LM$, there exists a neighborhood $V$ containing $u$ and a set of $G$-invariant vector fields $\left\{U_j^i\right\}$  generating $\mathfrak{X}^{V\tau}\left(V\right)$ as a $C^\infty\left(V\right)$-module. 
\begin{comment}
Thus, the set of vector fields on $V_{10}:=\tau_{10}^{-1}\left(V\right)$
  \[
    \left\{j^1U^j_i,\left(\theta^k,U^i_j\right)^V\right\}
  \]
generates $\mathfrak{X}^{V\tau_1}\left(V_{10}\right)$ as a $C^\infty\left(V_{10}\right)$-module.
\end{comment}
\end{note}

Now we show how to construct infinitesimal symmetries starting from $G$-invariant vertical vector fields

\begin{lem}\label{lem:InfSymmetryLift}
Let $\left\{f^l_j\right\}$ be a family of arbitrary functions on $\tau\left(V\right)$ and let $\left\{U^i_j\right\}$ be a basis of $G$-invariant local vector fields generating $\mathfrak X^{V\tau}(V)$. Then there exists a (non unique) family of functions $\left\{F^i_{kl}\right\}$ on $\tau^{-1}_{10}\left(V\right)\subset T_0$ such that
  \[
    Z:=f^l_jj^1U^j_l+F^i_{kl}\left(\theta^k,\left(E^l_i\right)_{LM}\right)^V
  \]
  is an infinitesimal symmetry of $\cI_{\text{L}}$ tangent to $T_0$.
\end{lem}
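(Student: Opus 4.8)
The plan is to seek $Z$ of the stated shape and to turn the two conditions on it — being an infinitesimal symmetry of $\cI_{\text L}$, and being tangent to $T_0$ — into linear equations for the unknowns $F^i_{kl}$, which one then solves. Write $Z=f^l_j\,j^1U^j_l+W$ with $W:=F^i_{kl}\big(\theta^k,(E^l_i)_{LM}\big)^V$. Each $j^1U^j_l$ is $\tau_{10}$-projectable onto the $\tau$-vertical field $U^j_l$, hence $\tau_1$-vertical, and each $\big(\theta^k,(E^l_i)_{LM}\big)^V$ is $\tau_{10}$-vertical, hence $\tau_1$-vertical; so $Z\in\mathfrak{X}^{V\tau_1}$ for every $F$, and once tangency is arranged it descends to a vector field on $T_0$. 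The ansatz is moreover general: every $\tau_1$-vertical field projecting under $\tau_{10}$ onto $f^l_j U^j_l$ has this form, since $\{(\theta^k,(E^l_i)_{LM})^V\}$ is a frame for $V\tau_{10}$.

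First I would reduce the symmetry requirement. To compute $\cL_Z\omega_{\mathfrak p}$ I would use: (i) the Lemma just proved, $\cL_{j^1U}\omega=0$ for $G$-invariant vertical $U$, together with $\cL_{gX}\alpha=g\,\cL_X\alpha+(X\lrcorner\alpha)\dif g$ for a $1$-form $\alpha$ — and, since $f^l_j$ is a function on $M$, $\dif f^l_j=(Df^l_j)_k\theta^k$ (using $\dif x^\mu=e^\mu_k\theta^k$) — which gives $\cL_{f^l_j j^1U^j_l}\omega^a_b=\big(\omega^a_b(j^1U^j_l)\big)(Df^l_j)_k\,\theta^k$; and (ii) Proposition~\ref{contractions} ($W\lrcorner\theta=0$, $W\lrcorner\omega=0$, $W\lrcorner\Omega^a_b=F^a_{kb}\theta^k$) together with the structure equation $\Omega^a_b=\dif\omega^a_b+\omega^a_c\wedge\omega^c_b$, which gives $\cL_W\omega^a_b=F^a_{kb}\,\theta^k$. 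Hence $\cL_Z(\omega^{ab}+\omega^{ba})$ is a combination of the $\theta^k$'s, and because the $\theta^k$ are pointwise independent of the $\omega^a_b$ — hence of the degree-one generators $\omega^{cd}+\omega^{dc}$ of $\cI_{\text L}$, also on $T_0$ — being an infinitesimal symmetry is equivalent to the vanishing on $T_0$ of that coefficient:
\[
\eta^{ap}F^b_{kp}+\eta^{bp}F^a_{kp}=-\big(\eta^{ap}\,\omega^b_p(j^1U^j_l)+\eta^{bp}\,\omega^a_p(j^1U^j_l)\big)(Df^l_j)_k\qquad\text{for all }a,b,k .
\]
For fixed $k$, the left side is the image of the matrix $F^{\bullet}_{k\bullet}\in\g$ under the composition of the isomorphism $\g\cong\mathbb{R}^m\otimes\mathbb{R}^m$ (raising the lower index with $\eta^{-1}$) with the projection onto $S^2\mathbb{R}^m$; the right side is symmetric in $(a,b)$, so the equation is solvable — it fixes the $\pf$-component of $F^{\bullet}_{k\bullet}$ and leaves its $\kf$-component free.

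Next I would impose tangency to $T_0$ directly in coordinates. There $T_0=\{\phi^\mu_{\nu\sigma}:=e^\mu_{i\nu}e^i_\sigma-e^\mu_{i\sigma}e^i_\nu=0\}$, so $Z$ is tangent to $T_0$ iff $Z(\phi^\mu_{\nu\sigma})=0$ on $T_0$. Using $\big(\theta^k,(E^l_i)_{LM}\big)^V=e^k_\sigma e^\mu_i\,\partial/\partial e^\mu_{l\sigma}$ one gets $W(\phi^\rho_{\nu\sigma})=2F^i_{[kl]}\,e^\rho_i e^k_\nu e^l_\sigma$: only the part of $F^i_{kl}$ antisymmetric in its two lower indices enters. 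The $f^l_j j^1U^j_l$ piece contributes a definite expression — nonzero in general, since $j^1U$ need not be tangent to $T_0$ — so equating gives a linear equation fixing $F^i_{[kl]}$ on $\tau^{-1}_{10}(V)\cap T_0$. Since the symmetry equation constrains the $\pf$-part of $F$ in its ``$(E^l_i)$''-indices while the tangency equation constrains the part antisymmetric in the two lower indices, the two bear on different components of the $3$-index object $F^i_{kl}$ and can be imposed together without obstruction; solving them yields the asserted (non-unique, owing to the $\kf$-freedom left before tangency is imposed) family $\{F^i_{kl}\}$. Finally, with the antisymmetric part of $F$ fixed, $W$ differs from a combination of the symmetrised lifts $\big(\theta^k,(E^l_i)_{LM}\big)^{V_0}$ only by a $\kf$-valued term symmetric in the lower indices, which re-confirms tangency of $Z$.

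The hard part will be the bookkeeping inside these reductions: the coordinate computation of $W(\phi^\mu_{\nu\sigma})$ and of the $j^1U$-contribution to it, and checking that the symmetry- and tangency-conditions really do control complementary components of $F$, so that the combined linear system is solvable for every admissible datum $\{f^l_j\}$.
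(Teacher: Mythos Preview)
Your reduction is correct and tracks the paper's proof closely: both arrive at a linear system for $F^i_{kl}$ coming from (a) the $\theta^k$-coefficient of $\cL_Z\omega_{\pf}$ vanishing, and (b) tangency to $T_0$. Your computations of $\cL_W\omega^a_b=F^a_{kb}\theta^k$ and $W(\phi^\rho_{\nu\sigma})=2F^i_{[kl]}e^\rho_i e^k_\nu e^l_\sigma$ are right, and your observation that the degree-one part of $\cI_{\text L}$ is independent of $\theta^k$, so symmetry is equivalent to the coefficient vanishing, is exactly what the paper uses.

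The gap is the step you flag as ``the hard part'': you assert that the symmetry constraint (on the $\pf$-part of $F^\bullet_{k\bullet}$, i.e.\ $F_{(j|k|l)}$) and the tangency constraint (on $F^i_{[kl]}$, i.e.\ $F_{j[kl]}$) ``bear on different components'' and ``can be imposed together without obstruction'', but this requires proof. The paper supplies it by a different bookkeeping: it first imposes tangency, writing $F^i_{kj}=-f^s_tD_kM^{ti}_{sj}+G^i_{kj}$ with $G^i_{kj}=G^i_{jk}$, and then reduces the symmetry condition to an equation of the form $G^q_{ij}\eta_{kq}+G^q_{ik}\eta_{jq}=B_{ijk}$ (with $B$ symmetric in $(j,k)$), which it solves explicitly using the Lewis decomposition $T^0_3(V)=\Lambda^3V\oplus S^3V\oplus(S_{12}\cap\ker\mathrm{Sym})\oplus(S_{23}\cap\ker\mathrm{Sym})$. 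Your route can also be completed, by the following direct argument: the map $F\mapsto\big(F_{(j|k|l)},F_{j[kl]}\big)$ from $(\mR^m)^{\otimes 3}$ to the product of the two target spaces has trivial kernel (if $F_{jkl}=-F_{lkj}$ and $F_{jkl}=F_{jlk}$, then cycling gives $F_{jkl}=F_{jlk}=-F_{klj}=-F_{kjl}=F_{ljk}=F_{lkj}=-F_{jkl}$, so $F=0$), and a dimension count ($m^2(m+1)/2+m^2(m-1)/2=m^3$) shows it is bijective. Either way, this is the substantive step you must not leave as a heuristic. (Incidentally, once both constraints are imposed $F$ is in fact uniquely determined on $T_0$; the ``non-unique'' in the statement refers only to the irrelevance of extensions off $T_0$, not to residual $\kf$-freedom.)
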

\begin{proof}
Let $\left\{U_j^i\right\}$ be the basis of $G$-invariant local vector fields generating $\mathfrak{X}^{V\tau}\left(V\right)$. Since the set of infinitesimal generators $\left\{(E^k_l)_{LM}\right\}$ form another basis, there exist smooth functions $M^{il}_{jk},N^{il}_{jk}\in C^\infty(V)$ such that
\[
U_j^i=M^{il}_{jk}(E^k_l)_{LM}\qquad\text{and}\qquad M^{ip}_{jq}N_{pk}^{ql}=\delta^i_k\delta^l_j.
\]
Now, from the formula
\[
  j^1\left(fW\right)=fj^1W+\left(Df,W\right)^V,\qquad f\in C^\infty\left(LM\right),W\in\mathfrak{X}^{V\tau}\left(LM\right),
\]
we obtain
\begin{equation}
  j^1U_j^i=M_{jk}^{il}\left(E_l^k\right)_{J^1\tau}+\left(D_rM_{jk}^{il}\right)\left(\theta^r,\left(E_l^k\right)_{LM}\right)^V,\label{jotaU}
\end{equation}
where $\displaystyle{D_rM_{jk}^{il}=\frac{\partial M_{jk}^{il}}{\partial x^\mu}e^\mu_r}$. In consequence, in order for $Z$ to be tangent to $T_0$, we must take
\begin{equation}
F_{kj}^i=-f^s_tD_kM_{sj}^{ti}+G^i_{kj}\label{Efes}
\end{equation}
with the functions $G^i_{kj}$ fulfilling $G^i_{kj}=G^i_{jk}$.

To compute the Lie derivative, let us write
\[
\cL_{Z}\omega_{\pf}^{pq}=\cL_{f^j_ij^1U^i_j}\omega_{\pf}^{pq}+\cL_{F^i_{kj}\left(\theta^k,\left(E^j_i\right)_{LM}\right)^V}\omega_{\pf}^{pq}
\]
and compute separately.

First, 
\begin{align*}
\cL_{f^j_ij^1U^i_j}\left(\omega_{\pf}\right)^{pq}&=f^j_i\cL_{j^1U^i_j}\left(\omega_{\pf}\right)^{pq}+\left(j^1U^i_j\lrcorner\left(\omega_{\pf}\right)^{pq}\right)\dif f^j_i
=\mu_j^{ipq}D_kf_i^j\theta^k,
  \end{align*}
  where $\mu_j^{ipq}:=j^1U^i_j\lrcorner\left(\omega_{\pf}\right)^{pq}$ and $\displaystyle{D_kf_i^j=\frac{\partial f_i^j}{\partial x^\mu}e^\mu_k}$. 

On the other hand
  \begin{align*}
\cL_{F_{kj}^i\left(\theta^k,\left(E_i^j\right)_{LM}\right)^{V}}\omega_{\pf}=F_{kj}^i\left(\theta^k,\left(E_i^j\right)_{LM}\right)^{V}\lrcorner \dif \omega_{\pf}
 =F_{kj}^i\left(\theta^k,\left(E_i^j\right)_{LM}\right)^{V}\lrcorner\Omega_{\pf},
  \end{align*}
  where $\left(\Omega_{\pf}\right)_q^p=\frac{1}{2}\left(\Omega_q^p+\eta_{qa}\Omega_b^a\eta^{bp}\right)$. Using Proposition \ref{contractions}
  \begin{align*}
    \left(\theta^k,\left(E_i^j\right)_{LM}\right)^V\lrcorner\left(\Omega_q^p+\eta_{qa}\Omega_b^a\eta^{bp}\right)&=\left(\delta^p_i\delta_q^j+\eta_{qa}\delta^a_i\delta^j_b\eta^{bp}\right)\theta^k
    =\left(\delta^p_i\delta_q^j+\eta_{qi}\eta^{jp}\right)\theta^k,
  \end{align*}
from which we deduce [recall \eqref{Efes}]
\begin{align*}
F_{kj}^i\left(\theta^k,\left(E_i^j\right)_{LM}\right)^{V}\lrcorner\Omega_{\pf}^{pq}&=-\frac{1}{2}f^s_tD_kM_{sj}^{ti}\left(\eta^{jq}\delta^p_i+\delta_{i}^q\eta^{jp}\right)\theta^k+\frac{1}{2}G_{kj}^i\left(\eta^{jq}\delta^p_i+\delta_{i}^q\eta^{jp}\right)\theta^k\\
&=-\frac{1}{2}f^s_t\left(\eta^{jq}D_kM_{sj}^{tp}+D_kM_{sj}^{tq}\eta^{jp}\right)\theta^k+\frac{1}{2}\left(\eta^{jq}G_{kj}^p+G_{kj}^q\eta^{jp}\right)\theta^k.
\end{align*}

Thus, it is sufficient to take functions $G^i_{kj}$ fulfilling the equation
\begin{equation}
\eta^{jq}G_{kj}^p+G_{kj}^q\eta^{jp}=f^s_t\left(\eta^{jq}D_kM_{sj}^{tp}+D_kM_{sj}^{tq}\eta^{jp}\right)-2\mu_j^{ipq}D_kf_i^j.\label{Ges}
\end{equation}

This assures us that $\cL_Z\left(\omega_{\pf}\right)^{pq}=0$.

In order to look for a solution to \eqref{Ges}, consider the decomposition of the set of $\displaystyle {0}\choose{3}$-tensors of a vector space $V$ (introduced in  \cite[Lemma 4.3]{LewisDecomp}), i.e. 
$$
T^0_3(V)=\Lambda^3V\oplus S^3V\oplus (S_{12}V\cap \ker(\text{Sym}))\oplus (S_{23}V\cap \ker(\text{Sym})) \ ,
$$
where $A\in S_{12}V$ if, and only if, $A(u,v,w)=A(v,u,w)$, and $B\in S_{23}V$ if, and only if, $B(u,v,w)=B(u,w,v)$, for every $u,v,w\in V$ (here $\text{Sym}$ denotes the symmetrization projector). Such decomposition is given by $A=\Omega_A+S_A+R_A+T_A$, where $\Omega_A=\text{Alt}(A)\in \Lambda^3V$, $S_A=\text{Sym}(A)\in S^3V$ and
\[
R_A(u,v,w)=\frac{1}{3}\left(A(u,v,w)+A(v,u,w)-A(v,w,u)-A(u,w,v)\right)\in S_{12}V\cap \ker(\text{Sym}) \ ,
\]
\[
T_A(u,v,w)=\frac{1}{3}\left(A(u,v,w)+A(u,w,v)-A(v,u,w)-A(w,u,v)\right)\in S_{23}V\cap \ker(\text{Sym}) \ ,
\]
or using a basis for $V$
\[
(R_A)_{ijk}=\frac{1}{3}\left(A_{ijk}+A_{jik}-A_{jki}-A_{ikj}\right)\ ,
\]
\[
(T_A)_{ijk}=\frac{1}{3}\left(A_{ijk}+A_{ikj}-A_{jik}-A_{kij}\right)\ ,
\]

Now, we want to solve the equation
\begin{equation}
A_{ijk}=B_{ijk}\label{AesB}\ ,
\end{equation}
with $A_{ijk}=G^q_{ij}\eta_{kq}+G^q_{ik}\eta_{jq}$ and 
\[
B_{ijk}=f^s_t\left(\eta_{kp}D_iM_{sj}^{tp}+D_iM_{sk}^{tq}\eta_{qj}\right)-2\eta_{kp}\mu_r^{lpq}\eta_{qj}D_if_l^r, 
\]
for symmetric tensors $G^k_{ij}=G^k_{ji}$. We will use the above mentioned decomposition.

It is readily seen that $R_A=R_B=\Omega_A=\Omega_B=0$. Furthermore
\begin{align*}
(T_A)_{ijk}&=\frac{1}{3}\left(G^q_{ij}\eta_{kq}+G^q_{ik}\eta_{jq}+G^q_{ik}\eta_{jq}+G^q_{ij}\eta_{kq}-G^q_{ji}\eta_{kq}-G^q_{jk}\eta_{iq}-G^q_{ki}\eta_{jq}-G^q_{kj}\eta_{iq}\right)\\
&=\frac{1}{3}\left(G^q_{ik}\eta_{jq}+G^q_{ij}\eta_{kq}-G^q_{jk}\eta_{iq}-G^q_{kj}\eta_{iq}\right)
=\frac{1}{3}\left(A_{ijk}-2G^q_{jk}\eta_{iq}\right)
\end{align*}
and
\begin{align*}
(S_A)_{ijk}&=\frac{1}{3}\left(G^q_{ij}\eta_{kq}+G^q_{ik}\eta_{jq}+G^q_{ki}\eta_{jq}+G^q_{kj}\eta_{iq}+G^q_{jk}\eta_{iq}+G^q_{ji}\eta_{kq}\right)\\
&=\frac{2}{3}\left(G^q_{ij}\eta_{kq}+G^q_{ki}\eta_{jq}+G^q_{jk}\eta_{iq}\right)
=\frac{2}{3}\left(A_{ijk}+G^q_{jk}\eta_{iq}\right) \ ,
\end{align*}
and for $B$ we get
\begin{align*}
(T_B)_{ijk}&=\frac{1}{3}\left(B_{ijk}+B_{ikj}-B_{jik}-B_{kij}\right)\
=\frac{1}{3}\left(2B_{ijk}-B_{jik}-B_{kij}\right)\\
\end{align*}
and
\begin{align*}
(S_B)_{ijk}&=\frac{1}{3}\left(B_{ijk}+B_{kij}+B_{jki}\right)\ .
\end{align*}
Thus, equating each term, we find
\[
A_{ijk}-2G^q_{jk}\eta_{iq}=2B_{ijk}-B_{jik}-B_{kij} \ ,
\]
from $T_A=T_B$, and
\[
2A_{ijk}+2G^q_{jk}\eta_{iq}=B_{ijk}+B_{kij}+B_{jki} \ ,
\]
from $S_A=S_B$. But using the initial equation \eqref{AesB}, we get
\[
G^q_{jk}=\frac{1}{2}\eta^{iq}\left(B_{jik}+B_{kij}-B_{ijk}\right)
\]
and
\[
G^q_{jk}=\frac{1}{2}\eta^{iq}\left(B_{kij}+B_{jki}-B_{ijk}\right).
\]
In conclusion, Eq. \eqref{AesB} has solutions, one of them being given by the last equation.
\end{proof}

\begin{note}\label{rem:LocalizeGriffithsVarProblem}
A consequence of the previous Lemma is that, given a trivializing open set $V$ of $LM$, there exists an infinitesimal symmetry $Z$ of $\cI_{\text{L}}$ such that $\tau_1\left(\mathop{\text{supp}}{Z}\right)\subset\tau\left(V\right)$. It allows us to obtain local conditions for the extremals of the Griffiths variational problem.
\end{note}

\subsection{Field equations for Lovelock gravity from its Griffiths variational problem}
\label{sec:equat-moti-palat}

Let us compute the differential of the Lovelock Lagrangian:
\begin{eqnarray*}
\dif\lambda_{\text{L}}&=&\left(\dif\theta_{i_{1}\dots i_rl_1\dots l_{r}}\right)\wedge\Omega^{i_1l_1}\wedge\dots\wedge \Omega^{i_rl_r}
\\ & &
+(-1)^m\sum_{a=1}^{r}\theta_{i_{1}\dots i_rl_1\dots l_{r}}\wedge\Omega^{i_1l_1}\wedge\dots
\wedge\dif\Omega^{i_al_a}\wedge\dots\wedge \Omega^{i_rl_r}.
\end{eqnarray*}
The first term was computed in Proposition \ref{proptheta}, now let us work out the second term. If $\sigma\in S_r$ is the permutation that transpose $1$ and $a$, then 
\[
\theta_{i_{\sigma(1)}\dots i_{\sigma(r)}l_{\sigma(1)}\dots l_{\sigma(r)}}=\theta_{i_{1}\dots i_rl_1\dots l_{r}},
\]
and we can reorder every summand as follows
\begin{align*}
&\sum_{a=1}^{r}\theta_{i_{1}\dots i_rl_1\dots l_{r}}\wedge\Omega^{i_1l_1}\wedge\dots\wedge\dif\Omega^{i_al_a}\wedge\dots\wedge \Omega^{i_rl_r}\\
=&\sum_{a=1}^{r}\theta_{i_{\sigma(1)}\dots i_{\sigma(r)}l_{\sigma(1)}\dots l_{\sigma(r)}}\wedge\Omega^{i_{\sigma(1)}l_{\sigma(1)}}\wedge\dots\wedge\dif\Omega^{i_{\sigma(a)}l_{\sigma(a)}}\wedge\dots\wedge \Omega^{i_{\sigma(r)}l_{\sigma(r)}}\\
=&\sum_{a=1}^{r}\theta_{i_{1}\dots i_rl_1\dots l_{r}}\wedge\Omega^{i_al_a}\wedge\Omega^{i_2l_2}\wedge\dots\wedge\dif\Omega^{i_1l_1}\wedge\dots\wedge \Omega^{i_rl_r}\\
=&\sum_{a=1}^{r}\theta_{i_{1}\dots i_rl_1\dots l_{r}}\wedge\dif\Omega^{i_1l_1}\wedge\Omega^{i_2l_2}\wedge\dots\wedge \Omega^{i_rl_r}\\
=&\ r\theta_{i_{1}\dots i_rl_1\dots l_{r}}\wedge\dif\Omega^{i_1l_1}\wedge\Omega^{i_2l_2}\wedge\dots\wedge \Omega^{i_rl_r}.
\end{align*}
Hence
\begin{align*}
\dif\lambda_{\text{L}}=&\left(\dif\theta_{i_{1}\dots i_rj_1\dots j_{r}}\right)\wedge\Omega^{i_1j_1}\wedge\dots\wedge \Omega^{i_rj_r}+\\
\phantom{=}&(-1)^mr\theta_{i_{1}\dots i_rj_1\dots j_{r}}\wedge\dif\Omega^{i_1j_1}\wedge\Omega^{i_2j_2}\wedge\dots\wedge \Omega^{i_rj_r}\\
=&\left(T^{l}\wedge\theta_{i_{1}\dots i_rj_1\dots j_{r}l}+\sum_{s=1}^{r}(-1)^{s}\omega^{l}_{i_s}\wedge\theta_{i_1\dots \widehat{i_s}\dots i_{r}j_1\dots j_{r}l}\right.+\\
\phantom{=} &\left.\sum_{s=1}^{r}(-1)^{r+s}\omega^{l}_{j_s}\wedge\theta_{i_1\dots i_{r}j_1\dots\widehat{j_s}\dots  j_{r}l}-\omega^{l}_{l}\wedge\theta_{i_1\dots i_{r}j_1\dots j_{r}}\right)\wedge\Omega^{i_1j_1}\wedge\dots\wedge \Omega^{i_rj_r}+\\
\phantom{=}&r\dif\Omega^{i_1j_1}\wedge\theta_{i_{1}\dots i_rj_1\dots j_{r}}\wedge\Omega^{i_2j_2}\wedge\dots\wedge \Omega^{i_rj_r}\\
=&\left[\left(T^{l}\wedge\theta_{i_{1}\dots i_rj_1\dots j_{r}l}+\sum_{s=1}^{r}(-1)^{s}\omega^{l}_{i_s}\wedge\theta_{i_1\dots \widehat{i_s}\dots i_{r}j_1\dots j_{r}l}\right.\right.+\\
\phantom{=}&\left.\left.\sum_{s=1}^{r}(-1)^{r+s}\omega^{l}_{j_s}\wedge\theta_{i_1\dots i_{r}j_1\dots\widehat{j_s}\dots  j_{r}l}-\omega^{l}_{l}\wedge\theta_{i_1\dots i_{r}j_1\dots j_{r}}\right)\wedge\Omega^{i_1j_1}+\right.\\
&\left.r\left(\Omega^{i_1}_{q}\wedge\omega^{qj_1}-\omega^{i_1}_{q}\wedge\Omega^{qj_1}\right)\wedge\theta_{i_{1}\dots i_rj_1\dots j_{r}}\vphantom{\sum_{s=1}^{2r}}\right]\wedge\Omega^{i_2j_2}\wedge\dots\wedge \Omega^{i_rj_r}.
\end{align*}
Let us analyze the two sums in the brackets. As above, suppose that $\sigma$ is the permutation that transpose $1$ and $s$; then
\begin{align*}
&\sum_{s=1}^{r}(-1)^{s}\omega^{l}_{i_s}\wedge\theta_{i_1\dots \widehat{i_s}\dots i_{r}j_1\dots j_{r}l}\wedge\Omega^{i_1j_1}\wedge\dots\wedge \Omega^{i_rj_r}\\
=&\sum_{s=1}^{r}(-1)^{s}\omega^{l}_{i_{\sigma(s)}}\wedge\theta_{i_{\sigma(1)}\dots \widehat{i_{\sigma(s)}}\dots i_{\sigma(r)}j_{\sigma(1)}\dots j_{\sigma(r)}l}\wedge\Omega^{i_{\sigma(1)}j_{\sigma(1)}}\wedge\dots\wedge \Omega^{i_{\sigma(r)}j_{\sigma(r)}}\\
=&\sum_{s=1}^{r}(-1)^{s+1}\omega^{l}_{i_{1}}\wedge\theta_{i_{s}i_2\dots \widehat{i_{s}}\dots i_{r}j_{1}\dots j_{r}l}\wedge\Omega^{i_{1}j_{1}}\wedge\dots\wedge \Omega^{i_{r}j_{r}}\\
=&-\sum_{s=1}^{r}\omega^{l}_{i_{1}}\wedge\theta_{i_2\dots i_{r}j_{1}\dots j_{r}l}\wedge\Omega^{i_{1}j_{1}}\wedge\dots\wedge \Omega^{i_{r}j_{r}}\\
=&-r\omega^{l}_{i_{1}}\wedge\theta_{i_2\dots i_{r}j_{1}\dots j_{r}l}\wedge\Omega^{i_{1}j_{1}}\wedge\dots\wedge \Omega^{i_{r}j_{r}}.
\end{align*}
Furthermore, using a similar argument,
\begin{align*}
&\sum_{s=1}^{r}(-1)^{r+s}\omega^{l}_{j_s}\wedge\theta_{i_1\dots i_{r}j_1\dots\widehat{j_s}\dots  j_{r}l}\wedge \Omega^{i_1j_1}\wedge\dots\wedge \Omega^{i_rj_r}\\
=&\sum_{s=1}^{r}(-1)^{r+s}\omega^{l}_{j_{\sigma(s)}}\wedge\theta_{i_{\sigma(1)}\dots i_{\sigma(r)}j_{\sigma(1)}\dots \widehat{j_{\sigma(s)}}\dots j_{\sigma(r)}l}\wedge\Omega^{i_{\sigma(1)}j_{\sigma(1)}}\wedge\dots\wedge \Omega^{i_{\sigma(r)}j_{\sigma(r)}}\\
=&\sum_{s=1}^{r}(-1)^{r+s+1}\omega^{l}_{j_{1}}\wedge\theta_{i_1\dots i_{r}j_{s}j_2\dots \widehat{j_{s}}\dots j_{r}l}\wedge\Omega^{i_{1}j_{1}}\wedge\dots\wedge \Omega^{i_{r}j_{r}}\\
=&\sum_{s=1}^{r}(-1)^{r+1}\omega^{l}_{j_{1}}\wedge\theta_{i_1\dots i_{r}j_2\dots j_{r}l}\wedge\Omega^{i_{1}j_{1}}\wedge\dots\wedge \Omega^{i_{r}j_{r}}\\
=&\ r\omega^{l}_{j_{1}}\wedge\theta_{i_1\dots i_{r}lj_2\dots j_{r}}\wedge\Omega^{i_{1}j_{1}}\wedge\dots\wedge \Omega^{i_{r}j_{r}}\\
=&\ r\eta^{pq}\omega^{l}_{p}\wedge\theta_{i_1\dots i_{r}lj_2\dots j_{r}}\wedge\Omega^{i_{1}}_{q}\wedge\Omega^{i_{2}j_{2}}\wedge\dots\wedge \Omega^{i_{r}j_{r}}.
\end{align*}
Now let us simplify the terms $r\left(\Omega^{i_1}_{q}\wedge\omega^{qj_1}-\omega^{i_1}_{q}\wedge\Omega^{qj_1}\right)\wedge\theta_{i_{1}\dots i_rj_1\dots j_{r}}\wedge\Omega^{i_{2}j_{2}}\wedge\dots\wedge \Omega^{i_{r}j_{r}}$. First, renaming the dummy indices,
\begin{align*}
&-r\omega^{l}_{i_1}\wedge\Omega^{i_1j_1}\wedge\theta_{li_2\dots i_rj_1\dots j_{r}}\wedge\Omega^{i_{2}j_{2}}\wedge\dots\wedge \Omega^{i_{r}j_{r}}
%\\ =&
=r\omega^{l}_{i_1}\wedge\theta_{i_{2}\dots i_rj_1\dots j_{r}l}\wedge\Omega^{i_1j_1}\wedge\Omega^{i_{2}j_{2}}\wedge\dots\wedge \Omega^{i_{r}j_{r}},
\end{align*}
which cancels out the first sum. Second
\begin{align*}
r\Omega^{i_1}_{q}\wedge\omega^{qj_1}\wedge\theta_{i_1\dots i_rj_1\dots j_{r}}\wedge\Omega^{i_{2}j_{2}}\wedge\dots\wedge \Omega^{i_{r}j_{r}}
=&r\eta^{j_1p}\omega^{q}_p\wedge\theta_{i_1\dots i_rj_1\dots j_{r}}\wedge\Omega^{i_1}_{q}\wedge\Omega^{i_{2}j_{2}}\wedge\dots\wedge \Omega^{i_{r}j_{r}}\\
=&r\eta^{lp}\omega^{q}_p\wedge\theta_{i_1\dots i_rl\dots j_{r}}\wedge\Omega^{i_1}_{q}\wedge\Omega^{i_{2}j_{2}}\wedge\dots\wedge \Omega^{i_{r}j_{r}},
\end{align*}
and consequently
\begin{align*}
\dif\lambda_{\text{L}}&=\left(\eta^{j_1q}T^{l}\wedge\theta_{i_{1}\dots i_rj_1\dots j_{r}l}-\eta^{j_1q}\omega^{l}_{l}\wedge\theta_{i_1\dots i_{r}j_1\dots j_{r}}\right.+\\
&\left.
r\left(\eta^{j_1p}\omega^{q}_p+\eta^{qp}\omega^{j_1}_p\right)\wedge\theta_{i_1\dots i_rj_1\dots j_{r}}\right)\wedge\Omega^{i_1}_q\wedge\Omega^{i_2j_2}\wedge\dots\wedge \Omega^{i_rj_r}.
\end{align*}
These computations amounts to the Lagrangian form on $J^1\tau$, so we have to take its pullback to $T_0$, i.e.
\begin{align*}
\iota_0^*\dif\lambda_{\text{L}}=&\left[r\left(\eta^{j_1p}\omega^{q}_p+\eta^{qp}\omega^{j_1}_p\right)\wedge\theta_{i_1\dots i_rj_1\dots j_{r}}-\eta^{j_1q}\omega^{l}_{l}\wedge\theta_{i_1\dots i_{r}j_1\dots j_{r}}\right]\wedge\Omega^{i_1}_q\wedge\Omega^{i_2j_2}\wedge\dots\wedge \Omega^{i_rj_r}\\
=&2\left[r\eta^{j_1p}\left(\omega_{\mathfrak p}\right)^{q}_p\wedge\theta_{i_1\dots i_rj_1\dots j_{r}}-\frac{1}{2}\eta^{j_1q}\left(\omega_{\mathfrak p}\right)^{l}_{l}\wedge\theta_{i_1\dots i_{r}j_1\dots j_{r}}\right]\wedge\Omega^{i_1}_q\wedge\Omega^{i_2j_2}\wedge\dots\wedge \Omega^{i_rj_r} \ .
\end{align*}
Nevertheless, we will omit the pullback $\iota_0$ to simplify notation.

\begin{comment}
First, note that because of the skewsymmetry of $\theta_{ijk}$, we will have that
\[
\eta^{kp}T^i\wedge\theta_{kli}\wedge\Omega_p^l=2T^i\wedge\theta_{kli}\wedge\eta^{kp}\left(\Omega_\kf\right)_p^l;
\]
from the formula
\[
d\lambda_{\text{PG}}=\left[2\eta^{ik}\left(\omega_{\pf}\right)_k^p\wedge\theta_{il}-\left(\omega_\pf\right)_s^s\wedge\eta^{kp}\theta_{kl}+\eta^{kp}T^i\wedge\theta_{kli}\right]\wedge\Omega_p^l
\]
and the contraction with elements of the form $\left(\theta^k,B_{LM}\right)^V,B\in\kf$, we obtain that
\[
\eta^{kp}T^i\wedge\theta_{kli}=0,
\]
namely, $T=0$. Also, it means that
\[
  \left(\theta^k,B_{LM}\right)^V\lrcorner d\lambda_{\text{PG}}=0
\]
for any $B\in\pf$.
\end{comment}

Now we are ready to find the field equations associated with the Griffiths problem $\left(J^1\tau,\lambda_{\text{L}},\cI_{\text{L}}\right)$. First we state a lemma we will use later on.
\begin{lem}\label{lem:omegacomT}
If $\Omega$ takes values in $\mathfrak k$, then on $T_0$
\[
\Omega^q_{i_1}\wedge\theta_{qi_2\dots i_r J}\wedge\Omega^{I J}=-\Omega^q_{j_{1}}\wedge\theta_{Iq j_2\dots j_{r} }\wedge\Omega^{I J}.
\]
\end{lem}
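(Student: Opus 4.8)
The plan is to derive the identity from the complete antisymmetry of the Sparling forms together with the only algebraic input available, namely that $\Omega$ being $\kf$-valued forces $\Omega^{ij}=-\Omega^{ji}$ (recall $\Omega^{ij}=\eta^{jq}\Omega^i{}_q$, so this is exactly the metricity of $\Omega$). First I would split off the factor $\Omega^{i_1j_1}$ from $\Omega^{IJ}$ and use that curvature $2$-forms have even degree, hence commute with every form, to bring $\Omega^q{}_{i_1}$ next to $\Omega^{i_1j_1}$, rewriting the left-hand side as $\Omega^q{}_{i_1}\wedge\Omega^{i_1j_1}\wedge\theta_{qi_2\dots i_rj_1\dots j_r}\wedge\Omega^{i_2j_2}\wedge\dots\wedge\Omega^{i_rj_r}$. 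Then two moves do the work: replace $\Omega^{i_1j_1}$ by $-\Omega^{j_1i_1}$ (one sign), and relabel the summation indices $i_1\leftrightarrow j_1$; this turns $\Omega^q{}_{i_1}$ into $\Omega^q{}_{j_1}$ and carries the index into the $(r+1)$-th slot of the Sparling form while $q$ stays in the first slot. Reordering $\theta$ back into the order $\theta_{i_1\dots i_rq\,j_2\dots j_r}$ of the statement is a single transposition of two slots, which produces the remaining sign. Collecting these signs yields the asserted relation between the two sides.

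The hypothesis ``on $T_0$'' enters through the first Bianchi identity: since $\iota_0^*T\equiv0$ there, also $\iota_0^*\dif T=0$, and the structure equation $\dif T=\Omega\weddot\theta-\omega\weddot T$ gives $\Omega^l{}_k\wedge\theta^k=0$ on $T_0$. Combined with the second identity of Proposition \ref{proptheta} (in the form $\theta^k\wedge\theta_{i_1\dots i_p}=\sum_{s}(-1)^{p+s}\delta^k_{i_s}\theta_{i_1\dots\widehat{i_s}\dots i_p}$), this lets one absorb or cancel any residual term in which the relabelled index ends up contracted against a ``wrong'' slot of $\theta_{i_1\dots i_{2r}}$, leaving precisely the right-hand side. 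An alternative, more symmetric, way to organize the argument is to lower all indices with $\eta$ and observe that $\kf$-valuedness makes $\eta_{pq}\,\Omega^q{}_k\wedge\Omega^k{}_a$ symmetric in $p\leftrightarrow a$; contracting this against the pair of slots of $\theta_{i_1\dots i_{2r}}$ that are antisymmetric under the same interchange shows directly that the two sides of the claimed equality coincide.

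The step I expect to be the main obstacle is the sign bookkeeping: one must keep simultaneous control of the sign from $\Omega^{ij}=-\Omega^{ji}$, of the sign of each transposition of indices inside the completely antisymmetric $\theta_{i_1\dots i_{2r}}$, and of which of the two blocks of $r$ indices of $\theta$ a given summation index lands in after the relabelling. Moving all even-degree (curvature) factors to the left first, and only then manipulating $\theta_{i_1\dots i_{2r}}$, keeps the Koszul signs out of the way and reduces the whole computation to counting a handful of transpositions.
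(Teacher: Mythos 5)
Your central sign count is wrong, and with it the main argument. Write the left-hand side as $\Omega^q{}_{i_1}\wedge\Omega^{i_1j_1}\wedge\theta_{qi_2\dots i_rj_1j_2\dots j_r}\wedge\Omega^{i_2j_2}\wedge\dots\wedge\Omega^{i_rj_r}$ (even-degree factors commute, so no Koszul signs). Replacing $\Omega^{i_1j_1}$ by $-\Omega^{j_1i_1}$ costs one sign; relabelling the dummies $i_1\leftrightarrow j_1$ costs none; and restoring the Sparling indices to the order $\theta_{i_1\dots i_r q j_2\dots j_r}$ is a single transposition, a second sign. These two signs cancel, so your chain of equalities proves $\Omega^q_{i_1}\wedge\theta_{qi_2\dots i_rJ}\wedge\Omega^{IJ}=+\,\Omega^q_{j_1}\wedge\theta_{Iqj_2\dots j_r}\wedge\Omega^{IJ}$, i.e. the relation with the \emph{opposite} sign to the lemma (already visible at $r=1$, where the same moves give $\Omega^q{}_i\wedge\theta_{qj}\wedge\Omega^{ij}=\Omega^q{}_j\wedge\theta_{iq}\wedge\Omega^{ij}$). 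Moreover your manipulation is an exact identity with no leftover terms, so there is nothing for the Bianchi identity to ``absorb or cancel''; relegating it to cleanup misses the point that it is the only available source of the minus sign.

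The paper's proof is organized the other way around: on $T_0$ the structure equation together with $T=0$ gives $\Omega^q{}_l\wedge\theta^l=0$; one wedges this with $\theta_{qIJ}\wedge\Omega^{IJ}$, expands $\theta^l\wedge\theta_{qIJ}$ by Proposition \ref{proptheta}, collapses the two resulting sums by relabelling (using the evenness of the curvature and the skew-symmetry of the Sparling form), and uses $\Omega^l{}_l=0$ --- this trace condition, not the antisymmetry $\Omega^{ij}=-\Omega^{ji}$, is where $\kf$-valuedness actually enters --- to arrive at exactly the stated relation. Your one-sentence ``alternative'' (the symmetric matrix $\eta_{pq}\,\Omega^q{}_k\wedge\Omega^k{}_a$ contracted against an antisymmetric pair of Sparling slots) would, if developed, show that each side of the display vanishes separately, which would make the equality hold trivially; but that is a different statement from the one you claim it proves (``the two sides coincide'' is incompatible with the minus sign unless both are zero), it makes no use of the hypothesis ``on $T_0$'', and it is not the mechanism the paper relies on, nor the form in which the lemma is subsequently applied (there one curvature factor is replaced by $\eta^{j_1k}$ with free indices, and it is the Bianchi-derived relation that is used). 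As submitted, the proposal does not establish the lemma.
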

\begin{proof}
Using the structure equation, we have
\[
\dif T^q+\omega^q_l\wedge T^l=\Omega^q_l\wedge\theta^l \ ,
\]
and since the torsion form annihilates on $T_0$, we have
\[
\Omega^q_l\wedge\theta^l=0.
\]
Then multiplying both sides by $\theta_{qi_1\dots i_rj_1\dots j_r}\wedge\Omega^{IJ}$,
\[
\Omega^q_l\wedge\theta^l\wedge\theta_{qi_1\dots i_rj_1\dots j_r}\wedge\Omega^{IJ}=0,
\]
hence, using Proposition \ref{proptheta},
\begin{align*}
&\Omega^q_l\wedge\theta^l\wedge\theta_{qIJ}\wedge\Omega^{IJ} \\
&=(-1)^{2r+1}\Omega^q_l\wedge\left[-\delta^l_q\theta_{IJ}+{\sum_{a=1}^r}\sum_{a=1}^r(-1)^{a+1}\left(\delta^l_{i_a}\theta_{qi_1\dots\hat{i}_a\dots i_r J}+(-1)^r\delta^l_{j_a}\theta_{qIj_1\dots\hat{j}_a\dots j_r }\right)\right]\wedge\Omega^{IJ}\\
&=\Omega^q_l\wedge\left[\delta^l_q\theta_{IJ} +{\sum_{a=1}^r}\sum_{a=1}^r(-1)^{a}\left(\delta^l_{i_a}\theta_{qi_1\dots\hat{i}_a\dots i_r J}+(-1)^r\delta^l_{j_a}\theta_{qIj_1\dots\hat{j}_a\dots j_r }\right)\right]\wedge\Omega^{IJ}\\
&=\left[\Omega^l_l\wedge\theta_{IJ} +{\sum_{a=1}^r}\sum_{a=1}^r(-1)^{a}\left(\Omega^q_{i_a}\wedge\theta_{qi_1\dots\hat{i}_a\dots i_r J}+(-1)^r\Omega^q_{j_a}\wedge\theta_{qIj_1\dots\hat{j}_a\dots j_r }\right)\right]\wedge\Omega^{IJ}.
\end{align*}
Let us study both terms in the sum. If $\sigma\in S_r$ is the permutation transposing $1$ and $a$,
\begin{align*}
&\sum_{a=1}^r(-1)^{a}\Omega^q_{i_a}\wedge\theta_{qi_1\dots\hat{i}_a\dots i_r J}\wedge\Omega^{IJ}=\sum_{a=1}^r(-1)^{a}\Omega^q_{i_{\sigma(a)}}\wedge\theta_{qi_{\sigma(1)}\dots\hat{i}_{\sigma(a)}\dots i_{\sigma(r)} J_\sigma}\wedge\Omega^{I_\sigma J_\sigma}\\
=&\sum_{a=1}^r(-1)^{a+1}\Omega^q_{i_1}\wedge\theta_{qi_ai_2\dots\hat{i}_a\dots i_r J}\wedge\Omega^{I J}=\sum_{a=1}^r(-1)^{a+1}(-1)^{a-2}\Omega^q_{i_1}\wedge\theta_{qi_2\dots i_r J}\wedge\Omega^{I J}\\
=&\sum_{a=1}^r-\Omega^q_{i_1}\wedge\theta_{qi_2\dots i_r J}\wedge\Omega^{I J}=-r\Omega^q_{i_1}\wedge\theta_{qi_2\dots i_r J}\wedge\Omega^{I J},
\end{align*}
where we use the skew-symmetry of the Sparling form and the fact that the curvature is a two-form. On the other hand
\begin{align*}
&\sum_{a=1}^r(-1)^{a+r}\Omega^q_{j_a}\wedge\theta_{qIj_1\dots\hat{j}_a\dots j_r }\wedge\Omega^{IJ}=\sum_{a=1}^r(-1)^{a+r}\Omega^q_{j_{\sigma(a)}}\wedge\theta_{qI_\sigma j_{\sigma(1)}\dots\hat{j}_{\sigma(a)}\dots j_{\sigma(r)} }\wedge\Omega^{I_\sigma J_\sigma}\\
=&\sum_{a=1}^r(-1)^{a+r+1}\Omega^q_{j_{1}}\wedge\theta_{qI j_{a}j_2\dots\hat{j}_{1}\dots j_{r} }\wedge\Omega^{I J}=\sum_{a=1}^r(-1)^{a+r+1}(-1)^{a-2}\Omega^q_{j_{1}}\wedge\theta_{qI j_2\dots j_{r} }\wedge\Omega^{I J}\\
=&(-1)^{r+1}r\Omega^q_{j_{1}}\wedge\theta_{qI j_2\dots j_{r} }\wedge\Omega^{I J}\ .
\end{align*}
Thus, since $\Omega^l_l=0$, we get
\[
\Omega^q_{i_1}\wedge\theta_{qi_2\dots i_r J}\wedge\Omega^{I J}=(-1)^{r+1}\Omega^q_{j_{1}}\wedge\theta_{qI j_2\dots j_{r} }\wedge\Omega^{I J}\ ,
\]
or equivalently
\[
\Omega^q_{i_1}\wedge\theta_{qi_2\dots i_r J}\wedge\Omega^{I J}=-\Omega^q_{j_{1}}\wedge\theta_{Iq j_2\dots j_{r} }\wedge\Omega^{I J} \ .\qedhere
\]
\end{proof}

As we did in the proof of Lemma \ref{lem:InfSymmetryLift}, consider a basis $\left\{U_j^i\right\}$ of $G$-invariant local vector fields generating $\mathfrak X^{V\tau}(V)$ on some trivializing open set $V\subset LM$ and let $\left\{M_{ij}^{kl}\right\}$ be smooth functions on $V$ such that
\[
  U_j^i=M_{jk}^{il}\left(E_l^k\right)_{LM}.
\]

Let us compute the contraction $j^1U_r^t\lrcorner \dif\lambda_{\text{L}}$ by using \eqref{jotaU}. First
\begin{align*}
M_{rk}^{ti}\left(E_i^k\right)_{J^1\tau}\lrcorner \dif\lambda_{\text{L}}=&j^1U_r^t\lrcorner2\left[r\eta^{j_1p}\left(\omega_{\mathfrak p}\right)^{q}_p\wedge\theta_{IJ}-\frac{1}{2}\eta^{j_1q}\left(\omega_{\mathfrak p}\right)^{l}_{l}\wedge\theta_{IJ}\right]\wedge\Omega^{i_1}_q\wedge
%\\ &\wedge
\Omega^{i_2j_2}\wedge\dots\wedge \Omega^{i_rj_r}\\
=&2M^{ti}_{rk}\left[r\eta^{j_1p}\frac{1}{2}\left(\delta^q_i\delta^k_p+\eta_{pa}\delta^{k}_b\delta^a_i\eta^{bq}\right)\wedge\theta_{IJ}-\frac{1}{2}\eta^{j_1q}\delta^{k}_{i}\theta_{IJ}\right]\wedge\Omega^{i_1}_q\wedge
%\\ &\wedge
\Omega^{i_2j_2}\wedge\dots\wedge \Omega^{i_rj_r}\\
=&2M^{ti}_{rk}\left[r\frac{1}{2}\left(\eta^{j_1k}\delta^q_i+\delta^{j_1}_i\eta^{kq}\right)\wedge\theta_{IJ}-\frac{1}{2}\eta^{j_1q}\delta^{k}_{i}\theta_{IJ}\right]\wedge\Omega^{i_1}_q\wedge
%\\ &\wedge
\Omega^{i_2j_2}\wedge\dots\wedge \Omega^{i_rj_r}\\
=&2M^{ti}_{rk}\left[r\frac{1}{2}\left(\Omega^{i_1}_i\eta^{j_1k}+\delta^{j_1}_i\eta^{kq}\Omega^{i_1}_q\right)\wedge\theta_{IJ}-\frac{1}{2}\eta^{j_1q}\delta^{k}_{i}\theta_{IJ}\wedge\Omega^{i_1}_q\right]%\wedge\\ &
\wedge\Omega^{i_2j_2}\wedge\dots\wedge \Omega^{i_rj_r}
\end{align*}
Let us study the terms in brackets. First
\begin{align*}
&\left(\Omega^{i_1}_i\eta^{j_1k}+\delta^{j_1}_i\eta^{kq}\Omega^{i_1}_q\right)\wedge\theta_{IJ}\wedge\Omega^{I'J'}
%\\ =&
=\eta^{j_1k}\Omega^{i_1}_i\wedge\theta_{IJ}\wedge\Omega^{I'J'}+\delta^{j_1}_i\eta^{kq}\Omega^{i_1}_q\wedge\theta_{IJ}\wedge\Omega^{I'J'},
\end{align*}
and using Lemma \ref{lem:omegacomT} and renaming indices,
\begin{align*}
&\eta^{j_1k}\Omega^{i_1}_i\wedge\theta_{IJ}\wedge\Omega^{I'J'}+\delta^{j_1}_i\eta^{kq}\Omega^{i_1}_q\wedge\theta_{IJ}\wedge\Omega^{I'J'}\\
=&\ \eta^{j_1k}\Omega^{q}_i\wedge\theta_{qi_2\dots i_rj_1\dots j_r}\wedge\Omega^{I'J'}+\eta^{kq}\Omega^{i_1}_q\wedge\theta_{i_1\dots i_rij_2\dots j_r}\wedge\Omega^{I'J'}\\
=&\ -\eta^{j_1k}\Omega^{q}_{j_1}\wedge\theta_{ii_2\dots i_rqj_2\dots j_r}\wedge\Omega^{I'J'}+\eta^{kq}\Omega^{i_1}_q\wedge\theta_{i_1\dots i_rij_2\dots j_r}\wedge\Omega^{I'J'}\\
=&\ \eta^{kq}\Omega^{i_1}_{q}\wedge\theta_{i_1\dots i_rij_2\dots j_r}\wedge\Omega^{I'J'}+\eta^{kq}\Omega^{i_1}_q\wedge\theta_{i_1\dots i_rij_2\dots j_r}\wedge\Omega^{I'J'}\\
=&\ 2\eta^{kq}\Omega^{i_1}_{q}\wedge\theta_{i_1\dots i_rij_2\dots j_r}\wedge\Omega^{I'J'}.
\end{align*}
Thus
\begin{align*}
M_{rk}^{ti}\left(E_i^k\right)_{J^1\tau}\lrcorner \dif\lambda_{\text{L}}=&2M^{ti}_{rk}\left[r\eta^{kq}\Omega^{i_1}_{q}\wedge\theta_{Iij_2\dots j_r}-\frac{1}{2}\eta^{j_1q}\delta^{k}_{i}\theta_{IJ}\wedge\Omega^{i_1}_q\right]\wedge\Omega^{I'J'}\\
=&2rM^{ti}_{rk}\left[\eta^{kq}\theta_{Iij_2\dots j_r}-\frac{1}{2r}\eta^{j_1q}\delta^{k}_{i}\theta_{IJ}\right]\wedge\Omega^{i_1}_{q}\wedge\Omega^{I'J'}\ .
\end{align*}
Now let us compute the other contraction:
\begin{align*}
&D_jM_{rk}^{ti}\left(\theta^j,\left(E_i^k\right)_{LM}\right)^V\lrcorner \dif\lambda_{\text{L}}\\
=&D_jM_{rk}^{ti}\left(\theta^j,\left(E_i^k\right)_{LM}\right)^V\lrcorner 2\left[r\eta^{j_1p}\left(\omega_{\mathfrak p}\right)^{q}_p\wedge\theta_{IJ}-\frac{1}{2}\eta^{j_1q}\left(\omega_{\mathfrak p}\right)^{l}_{l}\wedge\theta_{IJ}\right]\wedge\Omega^{i_1}_q\wedge\Omega^{I'J'}\\
=&\left(-1\right)^{m+1}D_jM_{rk}^{ti}2\left[r\eta^{j_1p}\left(\omega_{\mathfrak p}\right)^{q}_p\wedge\theta_{IJ}-\frac{1}{2}\eta^{j_1q}\left(\omega_{\mathfrak p}\right)^{l}_{l}\wedge\theta_{IJ}\right]\wedge \widetilde{\Omega}^ {jk,i_1I'J'}_{i,q},
\end{align*}
where $\widetilde{\Omega}^{jk,i_1I'J'}_{i,q}=\left(\theta^j,\left(E_i^k\right)_{LM}\right)^V\lrcorner\left(\Omega^{i_1}_q\wedge\Omega^{I'J'}\right)$. 
So, gathering both terms together:
\begin{align*}
j^1U_r^t\lrcorner \dif\lambda_{\text{L}}=&2rM^{ti}_{rk}\left[\eta^{kq}\theta_{Iij_2\dots j_r}-\frac{1}{2r}\eta^{j_1q}\delta^{k}_{i}\theta_{IJ}\right]\wedge\Omega^{i_1}_{q}\wedge\Omega^{I'J'}\\
&+\left(-1\right)^{m+1}D_jM_{rk}^{ti}2\left[r\eta^{j_1p}\left(\omega_{\mathfrak p}\right)^{q}_p\wedge\theta_{IJ}-\frac{1}{2}\eta^{j_1q}\left(\omega_{\mathfrak p}\right)^{l}_{l}\wedge\theta_{IJ}\right]\wedge \widetilde{\Omega}^ {jk,i_1I'J'}_{i,q} \ .
\end{align*}
Thus if $\Sigma:M\rightarrow T_0$ is a section such that $\Sigma^*\omega_\pf=0$, we can conclude that
\begin{equation}\label{eq:VariationHorizontal}
 \Sigma^*\left(j^1U_r^t\lrcorner \dif\lambda_{\text{L}}\right)=2r\left(M^{ti}_{rk}\circ\Sigma\right)\Sigma^*\left[\left(\eta^{kq}\theta_{Iij_2\dots j_r}-\frac{1}{2r}\eta^{j_1q}\delta^{k}_{i}\theta_{IJ}\right)\wedge\Omega^{i_1}_{q}\wedge\Omega^{I'J'}\right].
\end{equation}

From Lemma \ref{lem:InfSymmetryLift} and Equation \eqref{eq:VariationHorizontal}, we obtain the following result:

\begin{thm}\label{thm:EinsteinEqsNeccesary}
  Let $\Sigma:M\rightarrow T_0$ be an extremal for the variational problem associated with the Griffiths triple $\left(T_0,\lambda_{\text{L}},\cI_{\text{L}}\right)$. Then
  \[
    \Sigma^*\left[\left(\eta^{kq}\theta_{Iij_2\dots j_r}-\frac{1}{2r}\eta^{j_1q}\delta^{k}_{i}\theta_{IJ}\right)\wedge\Omega^{i_1}_{q}\wedge\Omega^{I'J'}\right]=0.
  \]
\end{thm}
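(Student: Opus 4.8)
The plan is to feed the extremality criterion for Griffiths problems recalled in Section~\ref{varLov} with the infinitesimal symmetries constructed in Lemma~\ref{lem:InfSymmetryLift}. Concretely, $\Sigma$ is an extremal if and only if it is integral for $\cI_{\text{L}}$ --- so that $\Sigma^*\omega_{\pf}=0$ --- and
\[
\Sigma^*\left(Z\lrcorner\dif\lambda_{\text{L}}\right)=0
\]
for every $Z\in\mathfrak X^{V\tau_1}(T_0)\cap\text{Symm}(\cI_{\text{L}})$. By Remark~\ref{rem:LocalizeGriffithsVarProblem} it suffices to test this on the vector fields
\[
Z=f^l_j\,j^1U^j_l+F^i_{kl}\left(\theta^k,(E^l_i)_{LM}\right)^V
\]
supplied by Lemma~\ref{lem:InfSymmetryLift}, where $V\subset LM$ is a trivializing open set, $\{U^j_l\}$ is a basis of $G$-invariant generators of $\mathfrak X^{V\tau}(V)$ with $U^j_l=M^{ji}_{lk}(E^k_i)_{LM}$ and $M$ pointwise invertible, and $\{f^l_j\}$ is an arbitrary family of functions compactly supported in $\tau(V)$; extended by zero, each such $Z$ is a globally defined admissible test field tangent to $T_0$ (note that $j^1U^j_l$ is $\tau_1$-vertical since it $\tau_{10}$-projects onto the $\tau$-vertical field $U^j_l$).

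First I would split
\[
Z\lrcorner\dif\lambda_{\text{L}}=f^l_j\left(j^1U^j_l\lrcorner\dif\lambda_{\text{L}}\right)+F^i_{kl}\left(\theta^k,(E^l_i)_{LM}\right)^V\lrcorner\dif\lambda_{\text{L}}
\]
and discard the second summand after pulling back by $\Sigma$. Indeed, on $T_0$ the form $\dif\lambda_{\text{L}}$ is a sum of terms each carrying a factor of $\omega_{\pf}$, while by Proposition~\ref{contractions} the vector field $\left(\theta^k,(E^l_i)_{LM}\right)^V$ annihilates $\omega$ and every $\theta^a$ and sends each curvature factor $\Omega$ to a multiple of $\theta^k$; hence $\left(\theta^k,(E^l_i)_{LM}\right)^V\lrcorner\dif\lambda_{\text{L}}$ is again a sum of terms each containing a factor $\omega_{\pf}$, whose $\Sigma$-pullback vanishes because $\Sigma^*\omega_{\pf}=0$. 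For the first summand, Equation~\eqref{eq:VariationHorizontal} gives directly
\[
\Sigma^*\!\left(j^1U^j_l\lrcorner\dif\lambda_{\text{L}}\right)=2r\left(M^{ji}_{lk}\circ\Sigma\right)\Sigma^*\!\left[\left(\eta^{kq}\theta_{Iij_2\dots j_r}-\frac{1}{2r}\eta^{j_1q}\delta^k_i\theta_{IJ}\right)\wedge\Omega^{i_1}_q\wedge\Omega^{I'J'}\right].
\]

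Using $\tau_1\circ\Sigma=\mathrm{id}_M$ and combining the two pieces, the extremality condition $\Sigma^*\left(Z\lrcorner\dif\lambda_{\text{L}}\right)=0$ becomes
\[
2r\,f^l_j\left(M^{ji}_{lk}\circ\Sigma\right)\Sigma^*\!\left[\left(\eta^{kq}\theta_{Iij_2\dots j_r}-\frac{1}{2r}\eta^{j_1q}\delta^k_i\theta_{IJ}\right)\wedge\Omega^{i_1}_q\wedge\Omega^{I'J'}\right]=0
\]
for every admissible family $\{f^l_j\}$. Choosing the $f^l_j$ to isolate each pair $(j,l)$ (for instance bump functions on $\tau(V)$) yields $\left(M^{ji}_{lk}\circ\Sigma\right)\Sigma^*[\,\cdots\,]=0$ on $\tau(V)$ for all $j,l$; then, since $M$ is invertible with inverse $N$ satisfying $M^{ip}_{jq}N^{ql}_{pk}=\delta^i_k\delta^l_j$, contracting with $N$ over the indices $j,l$ removes the factor $M$ and gives
\[
\Sigma^*\!\left[\left(\eta^{kq}\theta_{Iij_2\dots j_r}-\frac{1}{2r}\eta^{j_1q}\delta^k_i\theta_{IJ}\right)\wedge\Omega^{i_1}_q\wedge\Omega^{I'J'}\right]=0
\]
on $\tau(V)$. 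Since the projections $\tau(V)$ of trivializing open sets of $LM$ cover $M$, this holds globally, which is the theorem.

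The heavy lifting --- the computation of $\dif\lambda_{\text{L}}$, Lemma~\ref{lem:omegacomT}, and the explicit interior product producing \eqref{eq:VariationHorizontal} --- is already in place, so the remaining work is essentially bookkeeping. The one point deserving care is the final ``stripping'' step: one cannot feed the variational criterion directly with the fundamental fields $(E^k_i)_{LM}$, which are not $G$-invariant and hence need not generate infinitesimal symmetries of $\cI_{\text{L}}$, and this is precisely why Lemma~\ref{lem:InfSymmetryLift} is phrased through a $G$-invariant basis $\{U^j_l\}$ together with an invertible transition matrix $M$. It is the invertibility of $M$, combined with the arbitrariness of the functions $f^l_j$, that upgrades the single relation coming from each $Z$ to the vanishing of every component of the tensor-valued $m$-form in the statement; this is the main (mild) obstacle.
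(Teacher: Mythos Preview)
Your proof is correct and follows essentially the same route as the paper's: both use the infinitesimal symmetries $Z=f^l_j\,j^1U^j_l+F^i_{kl}(\theta^k,(E^l_i)_{LM})^V$ from Lemma~\ref{lem:InfSymmetryLift}, invoke Equation~\eqref{eq:VariationHorizontal} for the $j^1U$-part, observe that the $F$-part pulls back to zero through $\Sigma$ because every surviving term in $\iota_0^*\dif\lambda_{\text{L}}$ carries a factor $\omega_{\pf}$, and then conclude from the arbitrariness of the $f^l_j$ together with the invertibility of $(M^{il}_{jk})$. The only cosmetic difference is that the paper phrases the final step via the integral identity $\int_{\tau(V)} f^s_r\,\Sigma^*[M^{ri}_{sk}(\cdots)]=0$ and the fundamental lemma of the calculus of variations, whereas you work directly with the pointwise extremality criterion $\Sigma^*(Z\lrcorner\dif\lambda_{\text{L}})=0$ stated in Section~\ref{varLov}; both routes are equivalent and yield the same conclusion.
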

\begin{proof}
  We consider infinitesimal symmetries $Z\in\mathfrak{X}^{V\tau_1}\left(T_0\right)$ of $\cI_{\text{L}}$ as in Lemma \ref{lem:InfSymmetryLift}, namely
  \[
    Z=f^l_jj^1U^j_l+F^i_{kl}\left(\theta^k,\left(E^l_i\right)_{LM}\right)^V\ ,
  \]
  where $(f^l_j)$ is a family of arbitrary functions on (an open set of) $M$ such that
  \[
    \mathop{\text{supp}}{f_j^i}\subset\tau\left(V\right).
  \]
  Then by performing the variation induced by $Z$, we have the formula
  \[
    \int_{\tau\left(V\right)}f_r^s\Sigma^*\left[M^{ri}_{sk}\left(\eta^{kq}\theta_{Iij_2\dots j_r}-\frac{1}{2r}\eta^{j_1q}\delta^{k}_{i}\theta_{IJ}\right)\wedge\Omega^{i_1}_{q}\wedge\Omega^{I'J'}\right]=0,
  \]
  and the result follows from the fact that the functions $f^i_k$ are arbitrary and the matrix $\left(M_{jk}^{il}\right)$ is invertible.
\end{proof}

\begin{note}
  It is useful to compare this with the Einstein case. In \cite{doi:10.1142/S0219887818500445} it is seen that the Einstein equations in vacuum are
  \[
    \theta_{il}\wedge\Omega^l_k+\theta_{kl}\wedge\Omega^l_i-\eta_{ik}\left(\eta^{pq}\theta_{ql}\wedge\Omega^l_p\right)=0,
  \]
  together with the constraints $T=0=\omega_\pf$; the previous theorem, on the other hand, gives us the set of equations
  \[
    \left(\eta^{kp}\theta_{il}-\frac{1}{2}\delta^k_i\eta^{qp}\theta_{ql}\right)\wedge\Omega_p^l=0
  \]
  under the same constraints. Nevertheless, it can be proved (see Corollary $21$ in \cite{doi:10.1142/S0219887818500445}) that under the constraints $T=0=\omega_\pf$, it is true that
  \[
    \omega_{ik}\wedge\Omega^k_p-\omega_{pk}\wedge\Omega^k_i=0
  \]
  as consequence of a Bianchi identity. Therefore, these sets of equations are equivalent.
\end{note}

Theorem \ref{thm:EinsteinEqsNeccesary} gives us a set of necessary conditions for a section $\Sigma:M\rightarrow T_0$ to be extremal of the Griffiths variational problem associated with the triple $\left(T_0,\lambda_{\text{L}},\cI_{\text{L}}\right)$. Our next task is to set the sufficiency of these conditions.

\begin{prop}
  Let $\Sigma:M\rightarrow T_0$ be a section such that $\Sigma^*\omega_\pf=0$ and
  \[
    \Sigma^*\left[\left(\eta^{kq}\theta_{Iij_2\dots j_r}-\frac{1}{2r}\eta^{j_1q}\delta^{k}_{i}\theta_{IJ}\right)\wedge\Omega^{i_1}_{q}\wedge\Omega^{I'J'}\right]=0.
  \]
  Then
  \[
    \Sigma^*\left(Z\lrcorner \dif\lambda_{\text{L}}\right)=0\ ,
  \]
  for every $Z\in\mathfrak{X}^{V\tau_1}\left(T_0\right)$.
\end{prop}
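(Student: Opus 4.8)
The plan is to reduce the assertion to a pointwise identity on a local frame of the vertical bundle $V(\tau_1|_{T_0})$ and, for one of the frame directions, to re-use the contraction already computed in the proof of Theorem \ref{thm:EinsteinEqsNeccesary}. Since the interior product is $C^\infty(T_0)$-linear in its vector-field argument, $\Sigma^*\big((fZ')\lrcorner\dif\lambda_{\text L}\big)=(f\circ\Sigma)\,\Sigma^*\big(Z'\lrcorner\dif\lambda_{\text L}\big)$, so it suffices to prove $\Sigma^*\big(Z'\lrcorner\dif\lambda_{\text L}\big)=0$ when $Z'$ runs over the local basis $\big\{(E^s_t)_{J^1\tau},\ (\theta^r,(E^s_t)_{LM})^{V_0}\big\}$ of $V(\tau_1|_{T_0})$ introduced above; the general case then follows by covering $M$ with preimages of trivializing neighbourhoods. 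The starting point is the expression of $\dif\lambda_{\text L}$ restricted to $T_0$ obtained earlier, which is a sum of terms each carrying exactly one factor extracted from $\omega_{\mathfrak p}$ (namely $(\omega_{\mathfrak p})^q_p$ or $(\omega_{\mathfrak p})^l_l$), the remaining factors being the Sparling form $\theta_{IJ}$ and curvature $2$-forms $\Omega^a_b$.

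For the vertical lifts $Z'=(\theta^r,(E^s_t)_{LM})^{V_0}$ the verification is immediate: by Proposition \ref{contractions} these vector fields contract to zero with every $\omega^a_b$ and every $\theta^a$, hence with $\omega_{\mathfrak p}$ and with $\theta_{IJ}$, their only nonzero interior products being with the $\Omega^a_b$; thus $Z'\lrcorner\dif\lambda_{\text L}$ is again a sum of terms each retaining an $\omega_{\mathfrak p}$-factor, and $\Sigma^*\big(Z'\lrcorner\dif\lambda_{\text L}\big)=0$ since $\Sigma^*\omega_{\mathfrak p}=0$. For the fundamental vector fields $Z'=(E^s_t)_{J^1\tau}$ one uses instead $(E^s_t)_{J^1\tau}\lrcorner\theta^a=0$, $(E^s_t)_{J^1\tau}\lrcorner\Omega^a_b=0$ (these are $q$-vertical while $\Omega$ is $q$-horizontal), and $(E^s_t)_{J^1\tau}\lrcorner\omega^a_b=\delta^a_t\delta^b_s$; contracting into $\dif\lambda_{\text L}$ then reproduces precisely the term $M^{ti}_{rk}(E^k_i)_{J^1\tau}\lrcorner\dif\lambda_{\text L}$ handled in the proof of Theorem \ref{thm:EinsteinEqsNeccesary}, with the prefactor $M^{ti}_{rk}$ omitted. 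Pulling back by $\Sigma$ one obtains
\[
\Sigma^*\big((E^s_t)_{J^1\tau}\lrcorner\dif\lambda_{\text L}\big)=2r\,\Sigma^*\!\left[\Big(\eta^{sq}\theta_{It j_2\dots j_r}-\tfrac{1}{2r}\,\eta^{j_1q}\,\delta^s_t\,\theta_{IJ}\Big)\wedge\Omega^{i_1}_q\wedge\Omega^{I'J'}\right],
\]
which is the hypothesis of the proposition with the free indices $k,i$ set equal to $s,t$, hence zero.

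The one point that needs care — and the only place where anything could go wrong — is that the simplification in the second case invokes Lemma \ref{lem:omegacomT}, whose hypothesis is that $\Omega$ take values in the subalgebra $\mathfrak{k}$. This is false for the universal curvature on $T_0$, but it holds for $\Sigma^*\Omega$: from $\Sigma^*\omega_{\mathfrak p}=0$ the form $\Sigma^*\omega$ is $\mathfrak{k}$-valued, and since $\mathfrak{k}$ is a Lie subalgebra of $\mathfrak g$, the structure equation $\Sigma^*\Omega=\dif(\Sigma^*\omega)+[\Sigma^*\omega\wedcol\Sigma^*\omega]$ forces $\Sigma^*\Omega$ to be $\mathfrak{k}$-valued too; hence the manipulation is legitimate once $\Sigma^*$ has been applied, which is exactly the situation here. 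Combining the two cases and expanding an arbitrary $Z\in\mathfrak X^{V\tau_1}(T_0)$ in the local frame then gives $\Sigma^*\big(Z\lrcorner\dif\lambda_{\text L}\big)=0$. I do not expect a genuine obstacle: the whole content is the bookkeeping of which interior products survive, plus the single observation that every non-surviving contribution is proportional to $\omega_{\mathfrak p}$ and therefore annihilated by $\Sigma^*$.
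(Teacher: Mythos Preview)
Your proof is correct and follows essentially the same route as the paper's: reduce to a spanning set for $V(\tau_1|_{T_0})$ and invoke the contractions already computed in the lead-up to Theorem \ref{thm:EinsteinEqsNeccesary}. The only difference is cosmetic: you work with the frame $\big\{(E^s_t)_{J^1\tau},\ (\theta^r,(E^s_t)_{LM})^{V_0}\big\}$, whereas the paper appeals to $\big\{j^1U^j_i,\ (\theta^k,(E^i_j)_{LM})^V\big\}$; since $j^1U^i_j=M^{il}_{jk}(E^k_l)_{J^1\tau}+(D_rM^{il}_{jk})(\theta^r,(E^k_l)_{LM})^V$ with $(M^{il}_{jk})$ invertible, the two spans coincide and the same pre-computed contractions apply. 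Your explicit remark that Lemma \ref{lem:omegacomT} is legitimate only after pullback by $\Sigma$ (because $\Sigma^*\omega_{\mathfrak p}=0$ forces $\Sigma^*\Omega$ to be $\mathfrak k$-valued) is a point the paper leaves implicit, and it is good that you flagged it.
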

\begin{proof}
  It is a consequence of the fact that every $Z\in\mathfrak{X}^{V\tau_1}\left(T_0\right)$ can be written in terms of the vector fields
  \[
    \left\{j^1U^j_i,\left(\theta^k,(E^i_j)_{LM}\right)^{V}\right\}.\qedhere
  \]
\end{proof}

Thus, in particular, for every $\tau_1$-vertical infinitesimal symmetry $Z$ of the exterior differential system $\cI_{\text{L}}$ and any section $\Sigma:M\rightarrow J^1\tau$ fulfilling the hypotheses in the previous proposition, we have that
\[
  \Sigma^*\left(Z\lrcorner \dif\lambda_{\text{L}}\right)=0\ ;
\]
therefore, $\Sigma$ is an extremal for the Griffiths variational problem $\left(T_0,\lambda_{\text{L}},\cI_{\text{L}}\right)$, as required.

\begin{comment}

{\color{red}
\section{Consistent Levi-Civita truncation: a geometric approach}
\label{LCtrunc}

To be completed
}
\end{comment}

\section{Unified formalism}
\label{unified}

\subsection{Tautological form on a bundle of forms}\label{sec:tauto}

The next definitions are quoted from \cite{doi:10.1142/S0219887818500445}. 
Let $\pi:P\rightarrow N$ be a principal fiber bundle with structure group $H$ and assume that $q$ and $p$ are surjective submersions fitting the diagram
\[
\begin{tikzcd}
  P\ar[dr,"\psi"']\ar[rr,"\pi"]&&N\ar[dl,"\chi"]\\
&M&
\end{tikzcd}
\]

Let $V$ be a finite dimensional real vector space $V$ and define the bundle $\bar\tau^k_{n,q}:\bigwedge^k_{n,q}T^*P\otimes V\rightarrow P$ of $V$-valued $k$-forms that annihilates when contracted with $n$ $q$-vertical vectors. This bundle has a canonical $V$-valued $k$-form $\Theta^k_{n,q}$ defined through the formula
\[
\left.\Theta^k_{n,q}\right|_{\alpha}(Z_1,\dots,Z_k):=\alpha(T_\alpha \bar\tau^k_{n,q}(Z_1),\dots,T_\alpha \bar\tau^k_{n,q}(Z_k)).
\]

Given a $H$-representation $(V,\rho)$, it is readily seen that $\bigwedge^k_{n,q}T^*P\otimes V$ is a $H$-space with action given by
\[
\Phi^k_g(\alpha)(X_1,\dots,X_k):=\rho(\alpha)\cdot \left(\alpha(T_{u\cdot h}R_{h^{-1}}X_1,\dots,T_{u\cdot h}R_{h^{-1}}X_k)\right),
\]
where $R$ is the right action in $P$ and $h\in H$. It can be proved that the tautological form $\Theta^k_{n,q}$ is then a $H$-equivariant map.

We point out two instances that will be used in the next section. If $H=G=Gl(m)$, $P=J^1\tau$, $N=C(LM)$, $\psi=\tau_1$, $\chi=p$ and $\pi=q$ (that is, the left triangle in the diagram of Section \ref{sec:basic}), we have
\begin{enumerate}
\item Set $k=m-2$ and $n=r+1$, and consider $V_1=(\mathbb R^m)^*$ and $\rho_1$ the natural representation of $G$ on this vector space. Then, we denote the space $E_1:=\bigwedge^{m-2}_{2,\tau_1}J^1\tau\otimes (\mathbb R^m)^*$ and the projection
\[
p_1:E_1\rightarrow J^1\tau\ .
\]
\item Set $k=m-1$ and $n=r$, and consider $V_2=(\mathbb R^m)^*\odot (\mathbb R^m)^*$ and $\rho_2$  the natural representation of $G$ on this vector space.
(The symbol $\odot$ denotes the symmetrized tensor product). Then, we denote the space $E_2:=\bigwedge^{m-1}_{1,\tau_1}J^1\tau\otimes \left((\mathbb R^m)^*\odot (\mathbb R^m)^*\right)$ and the projection
\[
p_2:E_2\rightarrow J^1\tau\ .
\]
To simplify notation we denote by $\Theta_1$ and $\Theta_2$ the corresponding tautological forms on these bundles and, when using the component forms with respect to the canonical bases, we simply write $\Theta_1=\Theta_{l}\mathbf e^l$ and $\Theta_2=\Theta_{ij}\mathbf e^i\odot \mathbf e^j$, indicating that a single index corresponds to $\Theta_1$ and two indices to $\Theta_2$.
\end{enumerate}

\subsection{The multimomentum bundle}

The unified formalism for a Griffiths variational problem is built from the idea of a Lepage equivalent problem. Briefly, the construction goes as follows (this idea is inspired in the work of \cite{GotayCartan} and was proposed in \cite{2013arXiv1309.4080C}): assume that the differential ideal is locally generated by a subbundle $I\subset\Lambda^\bullet T^*J^1\tau$
(this means that there is an open cover $\{U_\lambda\}$ such that every $\alpha\in\mathcal I$ can be generated by sections of $\left.I\right|_{U_\lambda}$ when pulled back to $U_\lambda$). Consider an integer $k$ such that $\lambda_{\text{L}}(u)\in\Lambda^m_k(T^*_uJ^1\tau)$ and $I^m_u:=I\cap \Lambda^m_k(T^*_uJ^1\tau)\subset \Lambda^m_k(T^*_uJ^1\tau)$, where $\Lambda^m_k(T^*J^1\tau)$ is the bundle of $m$-forms that annihilate when contracted with $k$ $\tau_1$-vertical vectors. Then define the multimomentum bundle $W_\lambda$ by the equation

\[
\left.W_\lambda\right|_u=\lambda_{\text{L}}\left(u\right)+I^m_u ,
\]
which is an affine subbundle of $\Lambda^m_k(T^*J^1\tau)$. In the case of Lovelock gravity for a polynomial Lagrangian of degree $r$ in the curvature, it suffices to take $k=r+1$.
(Notice that the form $\theta_{i_1j_1\dots i_rj_r}$ is $\tau_1$-horizontal whereas the form 
$\Omega^{i_1j_1}\wedge\dots\wedge\Omega^{i_rj_r}$ is only $q$-horizontal, which implies that more than $r$ $\tau_1$-vertical vectors are needed to annihilate $\lambda_{\text{L}}$). In this case we can write any $\rho\in W_{\lambda}|_u$ as
\[
\rho=\lambda_{\text{L}}|_{j^1_xs}+\gamma_l\wedge T^l|_{j^1_xs}+\beta_{ij}\wedge \omega^{ij}_{\mathfrak p}|_{j^1_xs},
\]
where $\beta_{ij}\in\Lambda^{m-1}_r\left(T^*_{j^1_xs}J^1\tau\right)$ 
%\simeq \tau_1|_u^*\left(\Lambda^{m-1}(T^*_{\tau_1(u)}M)\right)$
 is symmetric in $ij$ and $\gamma_l\in\Lambda^{m-2}_{r+1}\left(T^*_{j^1_xs}J^1\tau\right)$. 

Observe that an element $\rho$ in $W_\lambda$ is completely determined by an element $j^1_xs\in J^1\tau$ and the forms $\gamma_l$ and $\beta_{ij}$ projecting onto $j^1_xs$. Hence we have the following identification:

\begin{lem}\label{lem:MultiDecomposition}
The map
\[
\Gamma:\rho\mapsto (\gamma_l\mathbf e^l,\beta_{ij}\mathbf e^i\odot\mathbf e^j)\simeq (j^1_xs,\gamma_l\mathbf e^l,\beta_{ij}\mathbf e^i\odot\mathbf e^j) \ ,
\]
where $j^1_xs:=p_1(\gamma_l\mathbf e^l)=p_2(\beta_{ij}\mathbf e^i\odot\mathbf e^j)$, induces an isomorphism of the bundles $\tau_\lambda: W_\lambda\rightarrow J^1\tau$ and $pr_0: \widehat W\rightarrow J^1\tau$, 
with $\widehat W_\lambda:=E_1\times_{J^1\tau} E_2\simeq J^1\tau\times_{J^1\tau}E_1\times_{J^1\tau} E_2$, such that
\[
\begin{tikzcd}
  W_\lambda\ar[dr,"\tau_\lambda"']\ar[rr,"\Gamma"]&&\widehat W_\lambda\ar[dl,"pr_0=p_1\circ pr_1=p_2\circ pr_2"]\\
&J^1\tau&
\end{tikzcd}
\]
\end{lem}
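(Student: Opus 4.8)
The plan is to unwind the definitions and check that the proposed map $\Gamma$ is a well-defined fiber-bundle isomorphism over $J^1\tau$. The key observation, already established in the discussion preceding the lemma, is that a generic element $\rho\in W_\lambda|_u$ has the canonical form
\[
\rho=\lambda_{\text{L}}|_{j^1_xs}+\gamma_l\wedge T^l|_{j^1_xs}+\beta_{ij}\wedge \omega^{ij}_{\mathfrak p}|_{j^1_xs},
\]
so first I would verify that this decomposition is \emph{unique}, i.e. that the only way to write the zero element of $I^m_u$ in the form $\gamma_l\wedge T^l+\beta_{ij}\wedge\omega^{ij}_{\mathfrak p}$ (with $\beta_{ij}$ symmetric in $ij$ and of the correct order) is $\gamma_l=0$, $\beta_{ij}=0$. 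This is where the linear independence of the forms $\theta^k$ and $\omega^i_j$ (equivalently $T^l$ and $\omega^{ij}_{\mathfrak p}$, using that the torsion two-form and the $\mathfrak p$-part of the connection form are among the algebraically independent generators) enters: it guarantees that $\gamma_l$ and $\beta_{ij}$ are recovered from $\rho$ by reading off the coefficients of $T^l$ and $\omega^{ij}_{\mathfrak p}$. Once uniqueness is in hand, the assignment $\rho\mapsto(\gamma_l\mathbf e^l,\beta_{ij}\mathbf e^i\odot\mathbf e^j)$ is a genuine map, and $j^1_xs=p_1(\gamma_l\mathbf e^l)=p_2(\beta_{ij}\mathbf e^i\odot\mathbf e^j)=\tau_\lambda(\rho)$ holds by construction, so $\Gamma$ covers the identity on $J^1\tau$ and the triangle commutes.

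Next I would check that $\Gamma$ lands in $\widehat W_\lambda=E_1\times_{J^1\tau}E_2$, which amounts to confirming the order and contraction constraints: $\gamma_l$ annihilates when contracted with $r+1$ $\tau_1$-vertical vectors, so $\gamma_l\mathbf e^l\in\bigwedge^{m-2}_{r+1,\tau_1}J^1\tau\otimes(\mathbb R^m)^*=E_1$, and $\beta_{ij}$ (symmetric in $ij$, annihilating with $r$ $\tau_1$-vertical vectors) gives $\beta_{ij}\mathbf e^i\odot\mathbf e^j\in\bigwedge^{m-1}_{r,\tau_1}J^1\tau\otimes((\mathbb R^m)^*\odot(\mathbb R^m)^*)=E_2$; these are exactly the orders recorded in the wording just above the lemma. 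Here one should note the small index discrepancy between the definitions in Section~\ref{sec:tauto} ($E_1$ defined with $n=2$, $E_2$ with $n=1$) and the orders $r+1$, $r$ appearing in the description of $\rho$; I would reconcile this by taking the bundles $E_1,E_2$ with the contraction orders dictated by the Lagrangian (as the text does when it says "it suffices to take $k=r+1$"), or equivalently by observing that the relevant subbundles sit inside the ones of Section~\ref{sec:tauto}. The inverse map is then transparently $(j^1_xs,\gamma_l\mathbf e^l,\beta_{ij}\mathbf e^i\odot\mathbf e^j)\mapsto \lambda_{\text{L}}|_{j^1_xs}+\gamma_l\wedge T^l+\beta_{ij}\wedge\omega^{ij}_{\mathfrak p}$, which is well-defined because $T^l$ and $\omega^{ij}_{\mathfrak p}$ are globally defined forms on $J^1\tau$ (pulled back from the universal torsion and connection), and smoothness of $\Gamma$ and $\Gamma^{-1}$ follows since in adapted fiber coordinates both are given by linear formulas in the fiber coordinates.

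Finally I would observe that the second identification $\widehat W_\lambda\simeq J^1\tau\times_{J^1\tau}E_1\times_{J^1\tau}E_2$ and the factorization $pr_0=p_1\circ pr_1=p_2\circ pr_2$ are immediate from the fibered-product structure, since a point of $E_1\times_{J^1\tau}E_2$ already carries its common image $j^1_xs$ in $J^1\tau$. The main obstacle, and the only step requiring genuine care rather than bookkeeping, is the uniqueness of the decomposition of $\rho$: one must make sure that $\gamma_l\wedge T^l$ and $\beta_{ij}\wedge\omega^{ij}_{\mathfrak p}$ cannot conspire to cancel, and that the symmetry imposed on $\beta_{ij}$ (forced because $\omega^{ij}_{\mathfrak p}$ is itself symmetric in $ij$) does not cause further loss of injectivity. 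This is handled by the algebraic independence of $\{\theta^k,\omega^i_j\}$ established earlier together with the fact that wedging against $T^l$ (a combination of $\theta$'s via $T^l=\tfrac12 e^l_\mu(e^\mu_{i\nu}e^i_\sigma-e^\mu_{i\sigma}e^i_\nu)\dif x^\sigma\wedge\dif x^\nu$... more precisely, using $T^l=\dif\theta^l+\omega^l_i\wedge\theta^i$) versus against $\omega^{ij}_{\mathfrak p}$ produces terms with distinct types, so the coefficient forms are separated uniquely.
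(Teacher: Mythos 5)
Your overall strategy (read $(\gamma_l,\beta_{ij})$ off from $\rho$, exhibit the inverse, check smoothness) is the natural one, and the paper in fact gives no proof at all of this lemma -- it is presented as an immediate consequence of the assertion that $\rho$ is ``completely determined'' by $(j^1_xs,\gamma_l,\beta_{ij})$ -- so everything hinges on the step you yourself single out: uniqueness of the decomposition. Your justification of that step does not work. Linear independence of the one-forms $\{\theta^k,\omega^i_j\}$ only controls degree-one data and cannot separate the two summands, because $\gamma_l\in\Lambda^{m-2}_{r+1}(T^*_uJ^1\tau)$ is allowed to contain $\omega_{\mathfrak p}$-factors (it only has to vanish on $r+1$ vertical vectors, and $r\geq 1$). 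Concretely, pick $i_0\neq j_0$, a horizontal $(m-3)$-form $\alpha_l$ at $u$, and set
\[
\gamma_l:=\omega^{i_0j_0}_{\mathfrak p}\wedge\alpha_l,\qquad
\beta_{i_0j_0}=\beta_{j_0i_0}:=\frac{(-1)^m}{2}\,\alpha_l\wedge T^l,\qquad \beta_{ij}:=0\ \text{otherwise}.
\]
These satisfy all the stated requirements (degrees, contraction orders, symmetry of $\beta_{ij}$), and $\gamma_l\wedge T^l+\beta_{ij}\wedge\omega^{ij}_{\mathfrak p}=0$; at any point $u$ with $T|_u\neq0$ one can choose $\alpha_l$ so that both entries are nonzero. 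Hence distinct points of $\widehat W_\lambda$ map to the same covector $\rho$, i.e.\ the ``conspiracy'' you worry about genuinely occurs, and the reading-off map is not well defined by the argument you give. The situation is even worse along $T_0$: there the coefficients $e^k_\mu(e^\mu_{i\nu}e^i_\sigma-e^\mu_{i\sigma}e^i_\nu)$ vanish, so $T^l|_u=0$ and the entire $\gamma$-block is invisible in $\rho$.

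So the gap is real and sits exactly at the step you called the crux; it is not closable by bookkeeping, and it is also a gap in the lemma as literally stated rather than only in your write-up. To repair it one must either (i) shrink the fibres, e.g.\ take $\gamma_l$ in a complement of the subspace of forms divisible by $\omega_{\mathfrak p}$ (equivalently work modulo the algebraic ideal generated by $\omega_{\mathfrak p}$) and restrict to the locus where the resulting linear map is injective, or (ii) weaken the conclusion to what the rest of Section \ref{unified} actually uses: the well-defined affine bundle map $(j^1_xs,\gamma_l\mathbf e^l,\beta_{ij}\mathbf e^i\odot\mathbf e^j)\mapsto\lambda_{\text{L}}|_{j^1_xs}+\gamma_l\wedge T^l+\beta_{ij}\wedge\omega^{ij}_{\mathfrak p}$ from $\widehat W_\lambda$ into $\Lambda^m_{r+1}(T^*J^1\tau)$, surjective onto $W_\lambda$, through which $\Theta_\lambda$ is pulled back -- injectivity is never invoked in the subsequent computations. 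Your remaining steps (membership of $(\gamma,\beta)$ in $E_1\times_{J^1\tau}E_2$ with the contraction orders dictated by the Lagrangian, the commuting triangle, fibrewise linearity and smoothness, and the reconciliation of the index conventions of Section \ref{sec:tauto}) are fine and are indeed just bookkeeping.
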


To build the canonical form on $\widehat W_\lambda$ we need the tautological forms $\Theta_1$ and $\Theta_2$, as well as the forms $\theta, \omega,T$ and $\Omega$. We use the three projections $pr_0$, $pr_1$, and $pr_2$ to pull these forms back to $\widehat W_\lambda$, but we do not change the symbols so as to keep notation as simple as possible, e.g. if $\rho\in\widehat W_\lambda$, then $\omega|_\rho$ and $\Theta_2|_\rho$ must be understood as $pr_0^*(\omega)|_\rho=\omega|_{u=pr_0(\rho)}(T_\rho pr_0(\cdot),\dots,T_\rho pr_0(\cdot))$ and $pr_2^*(\Theta_2)|_\rho=\Theta_2|_{\beta=p_2(\rho)}(T_\rho pr_2(\cdot),\dots,T_\rho pr_2(\cdot))$, respectively.

Let us now denote by $\Theta_\lambda$ the pullback of the tautological form $\Theta\in\Omega^m(\Lambda^m(J^1\tau))$ to $\widehat W_\lambda$. Then, at any $\rho=\lambda_{\text{L}}|_{j^1_xs}+\gamma_l\wedge T^l|_{j^1_xs}+\beta_{ij}\wedge \omega^{ij}_{\mathfrak p}|_{j^1_xs},$
\[
\Theta_\lambda|_\rho=\lambda_{\text{L}}|_\rho+\Theta_l|_\rho\wedge T^l|_\rho+\Theta_{ij}|_\rho\wedge \omega^{ij}_{\mathfrak p}|_\rho \ ,
\]
or, omitting the symbol $\rho$ and recalling the expression for $\lambda_{\text{L}}$,
\[
\Theta_\lambda=\theta_{IJ}\wedge \Omega^{IJ}+\Theta_l\wedge T^l+\Theta_{ij}\wedge \omega^{ij}_{\mathfrak p}.
\]

In consequence, the differential is given by
\begin{align*}
\Omega_\lambda&=\dif\Theta_\lambda=\dif\lambda_{\text{L}}+T^l\wedge\dif\Theta_l +\dif T^l\wedge \Theta_l+\dif\Theta_{ij}\wedge \omega^{ij}_{\mathfrak p}+(-1)^{m-1}\Theta_{ij}\wedge \dif\omega^{ij}_{\mathfrak p}\\
&=\dif\lambda_{\text{L}}+T^l\wedge\dif\Theta_l +(\Omega^l_k\wedge\theta^k-\omega^l_k\wedge T^k)\wedge \Theta_l+\dif\Theta_{ij}\wedge \omega^{ij}_{\mathfrak p}\\
&\phantom{=}+(-1)^{m-1}\Theta_{ij}\wedge (-\omega^i_k\wedge\omega^{kj}+\Omega^{ij})\ ,
\end{align*}
and recalling the expression of $\dif\lambda_{\text{L}}$,
\begin{align*}
\dif\lambda_{\text{L}}&=\left[\eta^{j_1q}T^{l}\wedge\theta_{IJl}-\eta^{j_1q}\omega^{l}_{l}\wedge\theta_{IJ}+r\left(\eta^{j_1p}\omega^{q}_p+\eta^{qp}\omega^{j_1}_p\right)\wedge\theta_{IJ}\right]\wedge\Omega^{i_1}_q\wedge\Omega^{I'J'}\\
&=\left[\eta^{tq}T^{l}\wedge\theta_{stI'J'l}\wedge\Omega^{I'J'}-\eta^{tq}\omega^{l}_{l}\wedge\theta_{sj_1I'J'}\wedge\Omega^{I'J'}+\right.\\
&\phantom{=}\left.r\left(\eta^{tp}\omega^{q}_p+\eta^{qp}\omega^{t}_p\right)\wedge\theta_{stI'J'}\wedge\Omega^{I'J'}\right]\wedge\Omega^{s}_q\ ,
\end{align*}
we can rewrite
\begin{align*}
\Omega_{\lambda}&=\left[\eta^{tq}T^{l}\wedge\theta_{stI'J'l}\wedge\Omega^{I'J'}-\eta^{tq}(\omega_{\mathfrak p})^{l}_{l}\wedge\theta_{stI'J'}\wedge\Omega^{I'J'}+\theta^q\wedge\Theta_s+(-1)^{m-1}\eta^{tq}\Theta_{st}\right.\\
&\phantom{=}+2r\eta^{qp}(\omega_{\mathfrak p})^{t}_p\left.\wedge\theta_{stI'J'}\wedge\Omega^{I'J'}\right]\wedge\Omega^{s}_q+T^l\wedge(\dif\Theta_l-\omega^k_l\wedge \Theta_k)\\
&\phantom{=}+(-1)^{m}\eta^{jl}\left[(-1)^m\dif\Theta_{ij}+\Theta_{pj}\wedge(\omega_{\mathfrak k})^p_i-\eta^{qk}\eta_{pj}\Theta_{iq}\wedge(\omega_{\mathfrak k})^{p}_k\right]\wedge(\omega_{\mathfrak p})^i_l\ .
\end{align*}

\subsection{Field equations}

To find the field equations we have to find the contraction of $\Theta_\lambda$ with vertical vectors. We divide this task considering vertical vectors of the form $X+Y+Z$, where $X$ is $pr_0$-projectable and $\tau_1$-vertical, $Y$ is $p_1$-vertical, and $Z$ is $p_2$-vertical. Let us now give a useful description for the vectors $Y$ and $Z$. 

Going back to the notation of Section \ref{sec:tauto}, consider a cross section $\xi\in\Gamma(\bar\tau^k_{n,q})$. Then, since this is a vector bundle, we have the associated vertical lift, which is a $\bar\tau^k_{n,q}$-vertical vector field $\delta\xi\in\mathfrak X(\Lambda^k_{n,q}T^*P\otimes V)$ given by
\[
\delta\xi(\alpha_u)=\left.\frac{d}{dt}\right|_{t=0}(\alpha_u+t\xi(u)).
\]
In other words, this is the vector field whose flow is given by $\varphi^\xi_t(\alpha_u)=\alpha_u+t\xi(u)$, for every $\alpha_u\in E:=\Lambda^k_{n,q}T^*_uP\otimes V$. It is clear that $\delta\xi$ is $\bar\tau^k_{n,q}$-vertical, therefore it annihilates the tautological form $\Theta^k_{n,q}$ (because this is a horizontal form by definition). Now, let us see the contraction of this type of vectors with the differential $\dif\Theta^k_{n,q}$.
\begin{lem}
The following identity holds
\[
\delta\xi\lrcorner\dif\Theta^k_{n,q}=(\bar\tau^k_{n,q})^*(\xi)\ .
\]
\end{lem}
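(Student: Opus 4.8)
The plan is to unwind both sides of the claimed identity pointwise at an arbitrary element $\alpha_u \in E := \Lambda^k_{n,q}T^*_uP \otimes V$ and to evaluate on $k$ tangent vectors $W_1,\dots,W_k \in T_{\alpha_u}E$. The right-hand side $(\bar\tau^k_{n,q})^*(\xi)$ evaluated at $\alpha_u$ on these vectors is, by definition of pullback, $\xi(u)\big(T_{\alpha_u}\bar\tau^k_{n,q}(W_1),\dots,T_{\alpha_u}\bar\tau^k_{n,q}(W_k)\big)$, i.e. the value of the $V$-valued $k$-form $\xi(u)$ on the projected vectors. So the content of the lemma is an explicit formula for $\delta\xi \lrcorner \dif\Theta^k_{n,q}$.

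First I would compute $\dif\Theta^k_{n,q}$ via Cartan's formula for the exterior derivative, or more efficiently compute $\delta\xi \lrcorner \dif\Theta^k_{n,q}$ directly using the magic formula $\cL_{\delta\xi} = \dif\circ\iota_{\delta\xi} + \iota_{\delta\xi}\circ\dif$. Since $\Theta^k_{n,q}$ is $\bar\tau^k_{n,q}$-horizontal and $\delta\xi$ is $\bar\tau^k_{n,q}$-vertical, we have $\iota_{\delta\xi}\Theta^k_{n,q} = 0$, hence $\delta\xi \lrcorner \dif\Theta^k_{n,q} = \cL_{\delta\xi}\Theta^k_{n,q}$. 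The flow of $\delta\xi$ is the explicit affine translation $\varphi^\xi_t(\alpha_u) = \alpha_u + t\xi(u)$, so the Lie derivative is computable as $\left.\frac{d}{dt}\right|_{t=0}(\varphi^\xi_t)^*\Theta^k_{n,q}$. Evaluating $(\varphi^\xi_t)^*\Theta^k_{n,q}$ at $\alpha_u$ on vectors $W_1,\dots,W_k$: by the defining formula of the tautological form, $\Theta^k_{n,q}|_{\beta}(Z_1,\dots,Z_k) = \beta\big(T_\beta\bar\tau^k_{n,q}(Z_1),\dots\big)$; applying it at $\beta = \varphi^\xi_t(\alpha_u) = \alpha_u + t\xi(u)$ and using that $\bar\tau^k_{n,q}\circ\varphi^\xi_t = \bar\tau^k_{n,q}$ (the translation is $\bar\tau^k_{n,q}$-vertical), so $T(\bar\tau^k_{n,q}\circ\varphi^\xi_t) = T\bar\tau^k_{n,q}$, we get $(\varphi^\xi_t)^*\Theta^k_{n,q}|_{\alpha_u}(W_1,\dots,W_k) = (\alpha_u + t\xi(u))\big(T\bar\tau^k_{n,q}(W_1),\dots\big) = \alpha_u(\dots) + t\,\xi(u)(\dots)$. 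This is affine in $t$, so differentiating at $t=0$ leaves exactly $\xi(u)\big(T\bar\tau^k_{n,q}(W_1),\dots,T\bar\tau^k_{n,q}(W_k)\big)$, which is precisely $(\bar\tau^k_{n,q})^*(\xi)|_{\alpha_u}(W_1,\dots,W_k)$.

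The main subtlety — and the step I expect to require the most care — is the interchange of the vertical lift derivative with pushforward under $\bar\tau^k_{n,q}$: one must check that $T_{\alpha_u}\varphi^\xi_t$ is compatible with the projection in the sense that the vectors $W_i$ and their $\varphi^\xi_t$-pushforwards project to the same tangent vectors on $P$. This is exactly the statement that $\bar\tau^k_{n,q}$ is invariant under the affine translation flow, which holds because $\xi$ is a section of $\bar\tau^k_{n,q}$ so $\alpha_u + t\xi(u)$ still lies in the fiber over $u$. Once this is in place the computation is just linearity of evaluation in the form argument, and the identity follows. I would phrase the final write-up by fixing $\alpha_u$, fixing a basis or arbitrary $W_1,\dots,W_k$, and presenting the two-line evaluation above.
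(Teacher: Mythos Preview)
Your proof is correct and follows essentially the same approach as the paper: reduce to $\cL_{\delta\xi}\Theta^k_{n,q}$ via Cartan's formula using that $\delta\xi$ is vertical and $\Theta^k_{n,q}$ is horizontal, then compute the Lie derivative via the explicit affine flow $\varphi^\xi_t(\alpha_u)=\alpha_u+t\xi(u)$ using $\bar\tau^k_{n,q}\circ\varphi^\xi_t=\bar\tau^k_{n,q}$. The subtlety you flag about compatibility of the flow with the projection is exactly the point the paper handles, and your justification is correct.
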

\begin{proof}
Let us compute
\begin{align*}
\delta\xi\lrcorner\dif\Theta^k_{n,q}&=\mathcal L_{\delta\xi}\Theta^k_{n,q}-\dif\delta\xi\lrcorner\Theta^k_{n,q}=\mathcal L_{\delta\xi}\Theta^k_{n,q}\\
&=\lim_{t\rightarrow 0}\frac{1}{t}\left[\left.\Theta^k_{n,q}\right|_{\varphi^\xi_t(\alpha_u)}\left(T_{\alpha_u}\varphi^\xi_t(\cdot),\dots,T_{\alpha_u}\varphi^\xi_t(\cdot)\right)-\left.\Theta^k_{n,q}\right|_{\alpha_u}(\cdot,\dots,\cdot)\right].
\end{align*}
Now, if $X_1,\dots,X_k\in T_{\alpha_u}E$,
\begin{align*}
&\left.\Theta^k_{n,q}\right|_{\varphi^\xi_t(\alpha_u)}\left(T_{\alpha_u}\varphi^\xi_t(X_1),\dots,T_{\alpha_u}\varphi^\xi_t(X_m)\right)\\
&=\varphi^\xi_t(\alpha_u)\left(T_{\varphi^\xi_t(\alpha_u)}\bar\tau^k_{n,q}\circ T_{\alpha_u}\varphi^\xi_t(X_1),\dots,T_{\varphi^\xi_t(\alpha_u)}\bar\tau^k_{n,q}\circ T_{\alpha_u}\varphi^\xi_t(X_m)\right)\\
&=(\alpha_u+t\xi(u))\left(T_{\alpha_u}(\bar\tau^k_{n,q}\circ \varphi^\xi_t)(X_1),\dots,T_{\alpha_u}(\bar\tau^k_{n,q}\circ \varphi^\xi_t)(X_m)\right)\\
&=(\alpha_u+t\xi(u))\left(T_{\alpha_u}\bar\tau^k_{n,q}(X_1),\dots,T_{\alpha_u}\bar\tau^k_{n,q}(X_m)\right).
\end{align*}
Then
\begin{align*}
\left.\mathcal L_{\delta\xi}\Theta^k_{n,q}\right|_{\alpha_u}(X_1,\dots,X_m)&=\lim_{t\rightarrow 0}\frac{1}{t}\left[(\alpha_u+t\xi(u))\left(T_{\alpha_u}\bar\tau^k_{n,q}(X_1),\dots,T_{\alpha_u}\bar\tau^k_{n,q}(X_m)\right)-\right.\\
&\left.\alpha_u\left(T_{\alpha_u}\bar\tau^k_{n,q}(X_1),\dots,T_{\alpha_u}\bar\tau^k_{n,q}(X_m)\right)\right]\\
&=\xi(u)\left(T_{\alpha_u}\bar\tau^k_{n,q}(X_1),\dots,T_{\alpha_u}\bar\tau^k_{n,q}(X_m)\right)\\
&=\left.(\bar\tau^k_{n,q})^*(\xi)\right|_{\alpha_u}(X_1,\dots,X_m)\qedhere
\end{align*}

\end{proof}

Let us start with the $p_2$-vertical vector fields. Let $Z=\delta\beta$ for some section $\beta\in\Gamma(p_2)$. Observe that the unique non-vanishing contraction for this kind of vectors is with the form $\dif\Theta_{ij}$, so
\begin{align*}
\delta\beta\lrcorner\Omega_\lambda&=\eta^{jl}(\delta\beta\lrcorner\dif\Theta_{ij}\wedge(\omega_{\mathfrak p})^i_l)
%\\ &
=\eta^{jl}pr_0^*(\beta_{ij})\wedge(\omega_{\mathfrak p})^i_l
%\\&
=\eta^{jl}pr_0^*\left(\beta_{ij}\wedge(\omega_{\mathfrak p})^i_l\right).
\end{align*}
Now, if $\sigma\in\Gamma (\tau_\lambda)$ is a cross section, 
in order to be an extremal for our variational problem it must fulfill
\begin{equation}
(pr_0\circ\sigma)^*\left(\eta^{jl}\beta_{ij}\wedge(\omega_{\mathfrak p})^i_l\right)=0\ ,
\label{metric1}
\end{equation}
for every section $\beta$. Therefore, it must fulfill the equation
\[
(pr_0\circ\sigma)^*(\omega_{\mathfrak p})^i_l=0\ ,
\]
which is none other than the metricity condition.

\begin{note}
This can be seen locally as follows. Since Eq. \eqref{metric1} must hold for every $\beta_{ij}$, in particular it must hold for horizontal $\beta_{ij}$. Then, writing $\sigma^*\left[(\omega_{\mathfrak p})^i_l\eta^{jl}\right]=f^{ij}_\lambda\dif x^\lambda$ and $\beta_{ij}=p^\mu_{ij}\dif^{m-1}x_\mu$, for some functions $p^\mu_{ij}\in C^\infty(J^1\tau)$ and $f^{ij}_\lambda\in C^\infty(M)$ (both symmetric in the indices $ij$), we have
\[
\sigma^*\left(\beta_{ij}\wedge(\omega_{\mathfrak p})^i_l\eta^{jl}\right)=(-1)^{m-1}(p^\mu_{ij}\circ \sigma)f^{ij}_\lambda\dif x^\lambda\wedge\dif^{m-1}x_\mu=(-1)^{m-1}(p^\lambda_{ij}\circ \sigma)f^{ij}_\lambda=0 \ ,
\]
which implies
\[
(p^\lambda_{ij}\circ \sigma)f^{ij}_\lambda=0.
\]
then, varying $p^\mu_{ij}$ we conclude $f^{ij}_\lambda=0$ which means $\sigma^*(\omega_{\mathfrak p})^i_l\eta^{jl}=0$ and then $\sigma^*(\omega_{\mathfrak p})^i_l=0$.

\end{note}

Now consider $Y=\delta\gamma$ for some section $\gamma\in\Gamma( p_1)$. Similarly to the previous case, the unique non-vanishing contraction for this kind of vectors is that with the form $\dif\Theta_{l}$, so
\begin{align*}
\delta\gamma\lrcorner\Omega_\lambda&=T^l\wedge(\delta\gamma\lrcorner\dif\Theta_l)
%\\ &
=T^l\wedge pr_0^*(\gamma_l)
%\\ &
=pr_0^*(T^l\wedge \gamma_l).
\end{align*}
If $\sigma\in\Gamma (\tau_\lambda)$ is a cross section, then in order to be an extremal for the variational problem, it must fulfill
\[
(pr_0\circ\sigma)^*\left(T^l\wedge \gamma_l\right)=0\ ,
\]
for every section $\gamma$. Therefore, it must fulfill the equation
\[
(pr_0\circ\sigma)^*T^l=0\ ,
\]
which is in turn the zero torsion condition.

Moving on, we separate several cases for the vector fields $X$. Every vector field can be written as a sum of a $\tau_{10}$-vertical vector field plus a $\tau_{10}$-projectable $\tau$-vertical vector field. The first kind is generated by the vector fields $(\theta^a,(E^b_c)_{LM})^V$, while the second is generated by the set of infinitesimal generators of the action. Consider now a vector field $X=g^c_{ab}(\theta^a,(E^b_c)_{LM})^V$, for some functions $g^c_{ab}\in C^\infty(J^1\tau)$. We only consider those contractions that do not vanish when taking the pullback by a section fulfilling the previous conditions we have found so far. Then
\begin{align*}
g^c_{ab}(\theta^a,(E^b_c)_{LM})^V\lrcorner \Omega_\lambda&=g^c_{ab}\left(\theta^q\wedge\Theta_s+(-1)^{m-1}\eta^{tq}\Theta_{st}\right)\delta^s_c\delta^b_q\wedge\theta^a\\
&=g^c_{aq}\left(\theta^q\wedge\Theta_c+(-1)^{m-1}\eta^{tq}\Theta_{ct}\right)\wedge\theta^a \ .
%&=(-1)^{m-1}g^c_{aq}\left(\theta^a\wedge\theta^q\wedge\Theta_c+\eta^{tq}\Theta_{ct}\wedge\theta^a\right).
\end{align*}
Therefore, taking the pullback by a solution to the previous equations and varying the functions $g^c_{ab}$, we conclude that it must also fulfill
\begin{align*}
\sigma^*\left(\Theta_c\wedge\theta^q-\eta^{tq}\Theta_{ct}\right)=0 \ ,
\end{align*}
or
\begin{align*}
\sigma^*\left(\eta_{qt}\Theta_c\wedge\theta^q-\Theta_{ct}\right)=0 \ ,
\end{align*}
and given the symmetry of $\Theta_{ct}$, we have
\[
\sigma^*\left(\eta_{qt}\Theta_c\wedge\theta^q\right)=\sigma^*\left(\Theta_{ct}\right)=0 \ ,
\]
which in turn gives us
\[
\sigma^*\left(\Theta_c\right)=0 \ .
\]
Now assume that $X=A_{J^1\tau}$. If $A\in\mathfrak k$, then it is easy to see that
\[
A_{J^1\tau}\lrcorner\Omega_\lambda=0 \ .
\]
so we have no new equations.

Finally, consider now $A\in\mathfrak p$. In this case
\begin{align*}
A_{J^1\tau}\lrcorner\Omega_\lambda&=\left[2r\eta^{qp}A^{t}_p\theta_{stI'J'}-\eta^{tq}A^{l}_{l}\theta_{stI'J'}\right]\wedge\Omega^{s}_q\wedge\Omega^{I'J'}\\
&=2rA^{t}_p\left[\eta^{qp}\theta_{stI'J'}-\frac{1}{2r}\eta^{kq}\delta^{p}_{t}\wedge\theta_{skI'J'}\right]\wedge\Omega^{s}_q\wedge\Omega^{I'J'} \ ,
\end{align*}
and simplifying this expression using the symmetry properties we arrive to the last equation
\[
\left[\theta_{stI'J'}\wedge\Omega^s_q+\theta_{sqI'J'}\wedge\Omega^s_t-\frac{1}{r}\eta_{tq}\left(\eta^{kl}\theta_{skI'J'}\wedge\Omega^s_l\right)\right]\wedge\Omega^{I'J'}=0 \ ,
\]
which can be written as
\[
\left(\Omega^{sa}\eta^{tb}+\Omega^{sb}\eta^{ta}-\frac{1}{r}\eta^{ab}\Omega^{st}\right)\wedge\theta_{stI'J'}\wedge\Omega^{I'J'}=0 \ .
\]
This is equivalent to the Euler-Lagrange equations associated with the Lovelock Lagrangian of order $r$ \cite{Lovelock1970} as the following lemma shows:
\begin{lem}
The vanishing of the pullback of the $m$-form
\[
\Psi^{ab}:=\left(\Omega^{sa}\eta^{tb}+\Omega^{sb}\eta^{ta}-\frac{1}{r}\eta^{ab}\Omega^{st}\right)\wedge\theta_{stI'J'}\wedge\Omega^{I'J'}
\]
by a local section $\sigma\in\Gamma \tau_1$ is equivalent to the Euler-Lagrange equations associated with the Lovelock Lagrangian.
\end{lem}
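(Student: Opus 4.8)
The plan is to evaluate $\sigma^*\Psi^{ab}$ explicitly and to recognise it, up to an overall nonzero constant, as $\sqrt{-g}$ times the components of the classical Lovelock tensor of order $r$, whose vanishing is precisely the Euler--Lagrange equation of $\lambda^{(r)}_{\text{L}}$ by Lovelock's theorem \cite{Lovelock1970}. Since $\det(e)=\sqrt{-g}\neq 0$ and the frame components $e^a_\mu$ are invertible, the vanishing of $\sigma^*\Psi^{ab}$ for all $a,b$ is then equivalent to the vanishing of that tensor, which gives the claimed equivalence.

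For the first step I would repeat, now with the curvature carrying free indices, the computation already carried out before Section \ref{fieqs} relating $\lambda_{\text{L}}$ to the metric--affine Lagrangian. The form $\theta_{stI'J'}$ is an $(m-2r)$-form with $2r$ lower indices; expanding it through its definition in terms of the solder forms, substituting $\sigma^*\theta^k=e^k_\mu\,\dif x^\mu$ and $\sigma^*\Omega^{ab}=\Omega^{ab}_\Gamma$ with components $\Omega^{ab}_{\mu\nu}=R^{\sigma\tau}_{\mu\nu}e^a_\sigma e^b_\tau$ as in \eqref{omegar}, the wedge of all the solder forms yields $\dif^m x$ together with a Levi--Civita symbol, and the identity $\varepsilon_{i_1\dots i_m}e^{i_1}_{\mu_1}\cdots e^{i_m}_{\mu_m}=\det(e)\,\varepsilon_{\mu_1\dots\mu_m}$ turns the frame $\varepsilon$ into a coordinate $\varepsilon$ at the price of $\det(e)=\sqrt{-g}$. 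Contracting the two resulting Levi--Civita symbols into a generalized Kronecker delta (Appendix \ref{sec:levi}) and absorbing the surviving frame factors into the curvature via \eqref{omegar}, one obtains
\[
\sigma^*\Psi^{ab}=C\,\sqrt{-g}\;e^a_\mu e^b_\nu\;\mathcal E^{\mu\nu}\,\dif^m x,
\]
where $C$ is a nonzero combinatorial constant and $\mathcal E^{\mu\nu}$ is a $(2,0)$-tensor built as a contraction of $r$ copies of the Riemann tensor against a generalized Kronecker delta, with the symmetrised curvature term $\Omega^{sa}\eta^{tb}+\Omega^{sb}\eta^{ta}$ producing the ``$\partial\mathcal L/\partial g_{\mu\nu}$''-part and the trace term $-\tfrac1r\eta^{ab}\Omega^{st}$ producing the ``$-\tfrac12 g^{\mu\nu}\mathcal L$''-part.

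For the second step I would identify $\mathcal E^{\mu\nu}$ with the Euler--Lagrange tensor of $\lambda^{(r)}_{\text{L}}$. Using the expression $\lambda^{(r)}_{\text{L}}=\alpha\sqrt{-g}\,\delta^{\mu_1\nu_1\dots\mu_r\nu_r}_{\alpha_1\beta_1\dots\alpha_r\beta_r}R^{\alpha_1\beta_1}_{\mu_1\nu_1}\cdots R^{\alpha_r\beta_r}_{\mu_r\nu_r}\,\dif^m x$, one varies $\int\sigma^*\lambda^{(r)}_{\text{L}}$ with respect to the metric: the $r$ curvature variations reorganise — using the total antisymmetry of the generalized Kronecker delta in each of its two groups of indices together with the contracted Bianchi identity, exactly as in the remark preceding this lemma where the case $r=1$ is reduced to the Einstein equations via Corollary $21$ of \cite{doi:10.1142/S0219887818500445} — into the symmetrised curvature term, the variation of $\sqrt{-g}$ contributes the trace term with coefficient exactly $-\tfrac1r$, and the terms involving derivatives of the metric drop out because for the Lovelock Lagrangian the field equations are of second order. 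This matches $\mathcal E^{\mu\nu}$ with a constant multiple of the Lovelock tensor, completing the proof. Alternatively, and more economically, one may avoid the explicit variation and invoke Lovelock's theorem \cite{Lovelock1970} directly: the Euler--Lagrange equations of $\lambda^{(r)}_{\text{L}}$ are the vanishing of the (up to normalisation unique) symmetric divergence-free tensor polynomial of this degree in the curvature, so it suffices to check agreement in one coordinate chart, which for $r=1$ is the already-established Einstein case. The main obstacle is the sign and combinatorial bookkeeping in the first step — carrying the correct factors through the two $\varepsilon$--$\varepsilon$ contractions and the remaining $\Omega^{I'J'}$ factors, and confirming that the trace coefficient comes out precisely $-\tfrac1r$ rather than some other multiple.
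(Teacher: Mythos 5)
Your overall strategy is the same as the paper's: contract $\Psi^{ab}$ with the (invertible) frame components, expand the Sparling form and the curvature in fibered coordinates via \eqref{omegar}, trade the frame Levi--Civita symbol for a coordinate one at the cost of $\det(e)$, and contract the two $\varepsilon$'s into a generalized Kronecker delta. But what you have written is a plan, and the step you set aside as ``sign and combinatorial bookkeeping'' is essentially the whole content of the paper's proof. After the $\varepsilon$--$\varepsilon$ contraction one does \emph{not} yet have a recognizable Lovelock tensor: the paper must still use the symmetry of $R^{\lambda_1\theta_1}_{\alpha_1\beta_1}\cdots R^{\lambda_r\theta_r}_{\alpha_r\beta_r}$ under permutation of the $r$ pairs to average the free index over the $r$ possible slots and then recombine the symmetrized-curvature part with the $-\tfrac1r$ trace part into a \emph{single} generalized Kronecker delta of rank $2r+1$, arriving at $e_a^\mu\Psi^{ab}e_b^\nu=-\frac{\det(e)}{2r}\bigl(\delta^{\mu\alpha_1\beta_1\dots\alpha_r\beta_r}_{\rho\lambda_1\theta_1\dots\lambda_r\theta_r}g^{\rho\nu}+\delta^{\nu\alpha_1\beta_1\dots\alpha_r\beta_r}_{\rho\lambda_1\theta_1\dots\lambda_r\theta_r}g^{\rho\mu}\bigr)R^{\lambda_1\theta_1}_{\alpha_1\beta_1}\cdots R^{\lambda_r\theta_r}_{\alpha_r\beta_r}\,\dif^m x$. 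Only in this compact form is the expression literally the system of equations written by Lovelock in \cite{Lovelock1970}, and this recombination, together with the fact that the overall constant is nonzero, is exactly what your proposal leaves open.

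Both of your fallback identifications of $\mathcal E^{\mu\nu}$ are also problematic. In the variational route, the assertion that the metric-derivative terms ``drop out because for the Lovelock Lagrangian the field equations are of second order'' presupposes the property you would be using it to establish (or else requires carrying out, or citing and matching, the full second-order variation), and the claim that the trace coefficient comes out ``exactly $-\tfrac1r$'' is precisely the point settled only by the computation above. In the uniqueness route, checking ``one coordinate chart, which for $r=1$ is the already-established Einstein case'' does not cover general $r$; moreover, Lovelock's uniqueness statement would identify your (manifestly symmetric) tensor with the degree-$r$ Lovelock tensor only after you also show it is divergence-free and that the proportionality constant is nonzero, neither of which you address. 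The paper sidesteps all of this by computing the explicit coordinate expression and recognizing it directly as the equations of \cite{Lovelock1970}; to make your argument complete you should either do the same or supply the missing verifications in one of your two shortcuts.
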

\begin{proof}
Take fibered coordinates and remember the relation between the coefficients of the curvature form and the curvature tensor [see \eqref{omegar}]. Then
\begin{align*}
e_a^\mu\Psi^{ab}e_b^\nu=&\left(e_a^\mu\Omega^{sa}\eta^{tb}e_b^\nu+e_a^\mu\Omega^{sb}\eta^{ta}e_b^\nu-\frac{1}{r}e_a^\mu\eta^{ab}e_b^\nu\Omega^{st}\right)\wedge\theta_{stI'J'}\wedge\Omega^{I'J'}\\
=&\left(e_a^\mu\Omega^{sa}e_\rho^t g^{\rho\nu}+\Omega^{sb}e_\rho^tg^{\rho\mu}e_b^\nu-\frac{1}{r}g^{\mu\nu}\Omega^{st}\right)\wedge\theta_{stI'J'}\wedge\Omega^{I'J'}\\
=&\left( e_\sigma^se_\tau^ae_\rho^t e_a^\mu R^{\sigma\tau}_{\theta\gamma}g^{\rho\nu}+e_\sigma^s e_\tau^ae_\rho^te_a^\nu R^{\sigma\tau}_{\theta\gamma}g^{\rho\mu}%\phantom{\frac{1}{r}}\right.\\ &\left.
-\frac{1}{r}g^{\mu\nu}e_\sigma^s e_\tau^tR^{\sigma\tau}_{\theta\gamma}\right)\wedge\dif x^\theta\wedge\dif x^\gamma\wedge\theta_{stI'J'}\wedge\Omega^{I'J'}\\
=&\ e_\sigma^sR^{\sigma\tau}_{\theta\gamma}\left(  e_\rho^t \delta_\tau^\mu g^{\rho\nu}+ e_\rho^t \delta_\tau^\nu g^{\rho\mu}-\frac{1}{r}g^{\mu\nu} e_\tau^t\right)\wedge\dif x^\theta\wedge\dif x^\gamma\wedge\theta_{stI'J'}\wedge\Omega^{I'J'}\ ,
\end{align*}
and renaming some indices
\begin{align*}
e_a^\mu\Psi^{ab}e_b^\nu=&e_{\lambda_1}^{i_1}R^{\lambda_1\theta_1}_{\alpha_1\beta_1}\left(  e_\rho^{j_1} \delta_{\theta_1}^\mu g^{\rho\nu}+ e_\rho^{j_1} \delta_{\theta_1}^\nu g^{\rho\mu}-\frac{1}{r}g^{\mu\nu} e_{\theta_1}^{j_1}\right)\wedge\dif x^{\alpha_1}\wedge\dif x^{\beta_1}\wedge\theta_{IJ}\wedge\Omega^{I'J'}\\
=&\frac{1}{(m-2r)!}\varepsilon_{i_1j_1\dots i_rj_rs_1\dots s_p}\varepsilon^{\alpha_1\beta_1\dots \alpha_r\beta_r\kappa_1\dots\kappa_p}e_{\lambda_1}^{i_1}\left(  e_\rho^{j_1} \delta_{\theta_1}^\mu g^{\rho\nu}+ e_\rho^{j_1} \delta_{\theta_1}^\nu g^{\rho\mu}-\frac{1}{r}g^{\mu\nu} e_{\theta_1}^{j_1}\right)\\
& e^{i_2}_{\lambda_2}e^{j_2}_{\theta_2}\dots e^{i_r}_{\lambda_r}e^{j_r}_{\theta_r}e^{s_1}_{\kappa_1}\dots e^{s_p}_{\kappa_p}R^{\lambda_1\theta_1}_{\alpha_1\beta_1}\dots R^{\lambda_r\theta_r}_{\alpha_r\beta_r}\dif^mx\\
=&\frac{\det(e)}{(m-2r)!}\varepsilon^{\alpha_1\beta_1\dots \alpha_r\beta_r\kappa_1\dots\kappa_p}\left[  \phantom{\frac{1}{r}}\varepsilon_{\lambda_1\rho\lambda_2\theta_2\dots\lambda_r\theta_r\kappa_1\dots \kappa_p} (\delta_{\theta_1}^\mu g^{\rho\nu}+ \delta_{\theta_1}^\nu g^{\rho\mu})\right.\\
&\left.-\frac{1}{r}g^{\mu\nu} \varepsilon_{\lambda_1\theta_1\lambda_2\theta_2\dots\lambda_r\theta_r\kappa_1\dots \kappa_p}\right]R^{\lambda_1\theta_1}_{\alpha_1\beta_1}\dots R^{\lambda_r\theta_r}_{\alpha_r\beta_r}\dif^mx\\
=&\det(e)\left[ \delta^{\alpha_1\beta_1\dots \alpha_r\beta_r}_{\lambda_1\rho\lambda_2\theta_2\dots\lambda_r\theta_r} (\delta_{\theta_1}^\mu g^{\rho\nu}+ \delta_{\theta_1}^\nu g^{\rho\mu})-\frac{1}{2r}\delta^{\alpha_1\beta_1\dots \alpha_r\beta_r}_{\lambda_1\theta_1\dots\lambda_r\theta_r}(g^{\mu\rho}\delta_\rho^\nu+g^{\nu\rho}\delta_\rho^\mu) \right]R^{\lambda_1\theta_1}_{\alpha_1\beta_1}\dots R^{\lambda_r\theta_r}_{\alpha_r\beta_r}\dif^mx.
\end{align*}

Now, as a consequence of the symmetry of $R^{\lambda_1\theta_1}_{\alpha_1\beta_1}\dots R^{\lambda_r\theta_r}_{\alpha_r\beta_r}$, we can write
\begin{align*}
e_a^\mu\Psi^{ab}e_b^\nu=&\frac{\det(e)}{r}\left[ \sum_{k=1}^r\delta^{\alpha_2\beta_2\dots \alpha_1\beta_1\dots \alpha_r\beta_r}_{\lambda_2\theta_2\dots\underbrace{\scriptstyle\lambda_1\rho}_{k\text{ position}}\dots\lambda_r\theta_r} (\delta_{\theta_1}^\mu g^{\rho\nu}+ \delta_{\theta_1}^\nu g^{\rho\mu})\right.\\
&\left.
\vphantom{\sum_{k=1}^r\delta^{\alpha_2\beta_2\dots \alpha_1\beta_1\dots \alpha_r\beta_r}_{\lambda_2\theta_2\dots\underbrace{\scriptstyle\lambda_1\rho}_{k\text{ position}}\dots\lambda_r\theta_r}}
-\frac{1}{2}\delta^{\alpha_1\beta_1\dots \alpha_r\beta_r}_{\lambda_1\theta_1\dots\lambda_r\theta_r}(g^{\mu\rho}\delta_\rho^\nu+g^{\nu\rho}\delta_\rho^\mu) \right]R^{\lambda_1\theta_1}_{\alpha_1\beta_1}\dots R^{\lambda_r\theta_r}_{\alpha_r\beta_r}\dif^mx \ .
\end{align*}
Observe that, renaming indices and using the symmetries of the curvature tensor,
\begin{align*}
&\sum_{k=1}^r\delta_{\theta_1}^\mu\delta^{\alpha_2\beta_2\dots \alpha_1\beta_1\dots \alpha_r\beta_r}_{\lambda_2\theta_2\dots\underbrace{\scriptstyle\lambda_1\rho}_{k\text{ position}}\dots\lambda_r\theta_r}  g^{\rho\nu}R^{\lambda_1\theta_1}_{\alpha_1\beta_1}\dots R^{\lambda_r\theta_r}_{\alpha_r\beta_r}=
%\\ =&
\sum_{k=1}^r\delta_{\theta_k}^\mu\delta^{\alpha_1\beta_1\dots \alpha_k\beta_k\dots \alpha_r\beta_r}_{\lambda_1\theta_1\dots\underbrace{\scriptstyle\lambda_k\rho}_{k\text{ position}}\dots\lambda_r\theta_r}  g^{\rho\nu}R^{\lambda_1\theta_1}_{\alpha_1\beta_1}\dots R^{\lambda_r\theta_r}_{\alpha_r\beta_r}\\
=&\frac{1}{2}\sum_{k=1}^r\left(\delta_{\theta_k}^\mu\delta^{\alpha_1\beta_1\dots \alpha_k\beta_k\dots \alpha_r\beta_r}_{\lambda_1\theta_1\dots\underbrace{\scriptstyle\lambda_k\rho}_{k\text{ position}}\dots\lambda_r\theta_r} +\delta_{\lambda_k}^\mu\delta^{\alpha_1\beta_1\dots \alpha_k\beta_k\dots \alpha_r\beta_r}_{\lambda_1\theta_1\dots\underbrace{\scriptstyle\rho\theta_k}_{k\text{ position}}\dots\lambda_r\theta_r} \right) g^{\rho\nu}R^{\lambda_1\theta_1}_{\alpha_1\beta_1}\dots R^{\lambda_r\theta_r}_{\alpha_r\beta_r} \ .
\end{align*}
Hence, we get
\begin{align*}
e_a^\mu\Psi^{ab}e_b^\nu=&\frac{\det(e)}{2r}\left[ \sum_{k=1}^r\left(\delta_{\theta_k}^\mu\delta^{\alpha_1\beta_1\dots \alpha_k\beta_k\dots \alpha_r\beta_r}_{\lambda_1\theta_1\dots\underbrace{\scriptstyle\lambda_k\rho}_{k\text{ position}}\dots\lambda_r\theta_r} +\delta_{\lambda_k}^\mu\delta^{\alpha_1\beta_1\dots \alpha_k\beta_k\dots \alpha_r\beta_r}_{\lambda_1\theta_1\dots\underbrace{\scriptstyle\rho\theta_k}_{k\text{ position}}\dots\lambda_r\theta_r} \right) g^{\rho\nu}\right.\\
&\left.
\vphantom{\sum_{k=1}^r\delta^{\alpha_2\beta_2\dots \alpha_1\beta_1\dots \alpha_r\beta_r}_{\lambda_2\theta_2\dots\underbrace{\scriptstyle\lambda_1\rho}_{k\text{ position}}\dots\lambda_r\theta_r}}
-\delta_\rho^\mu\delta^{\alpha_1\beta_1\dots \alpha_r\beta_r}_{\lambda_1\theta_1\dots\lambda_r\theta_r}g^{\nu\rho}+\left(\mu\leftrightarrow \nu\right) \right]R^{\lambda_1\theta_1}_{\alpha_1\beta_1}\dots R^{\lambda_r\theta_r}_{\alpha_r\beta_r}\dif^mx\\
=&-\frac{\det(e)}{2r}\left( \delta^{\mu\alpha_1\beta_1\dots \alpha_r\beta_r}_{\rho\lambda_1\theta_1\dots\lambda_r\theta_r}g^{\rho\nu} +\delta^{\nu\alpha_1\beta_1\dots \alpha_r\beta_r}_{\rho\lambda_1\theta_1\dots\lambda_r\theta_r}g^{\rho\mu}\right)R^{\lambda_1\theta_1}_{\alpha_1\beta_1}\dots R^{\lambda_r\theta_r}_{\alpha_r\beta_r}\dif^mx.
\end{align*}
Therefore, the vanishing of $\Psi^{ab}$ is equivalent to the equations
\[
\left( \delta^{\mu\alpha_1\beta_1\dots \alpha_r\beta_r}_{\rho\lambda_1\theta_1\dots\lambda_r\theta_r}g^{\rho\nu} +\delta^{\nu\alpha_1\beta_1\dots \alpha_r\beta_r}_{\rho\lambda_1\theta_1\dots\lambda_r\theta_r}g^{\rho\mu}\right)R^{\lambda_1\theta_1}_{\alpha_1\beta_1}\dots R^{\lambda_r\theta_r}_{\alpha_r\beta_r}=0\ ,
\]
which are indeed the Euler-Lagrange equations described by Lovelock in \cite{Lovelock1970}.
\end{proof}

\begin{note}
The tensor
\[
A^{\mu\nu}=\left( \delta^{\mu\alpha_1\beta_1\dots \alpha_r\beta_r}_{\rho\lambda_1\theta_1\dots\lambda_r\theta_r}g^{\rho\nu} +\delta^{\nu\alpha_1\beta_1\dots \alpha_r\beta_r}_{\rho\lambda_1\theta_1\dots\lambda_r\theta_r}g^{\rho\mu}\right)R^{\lambda_1\theta_1}_{\alpha_1\beta_1}\dots R^{\lambda_r\theta_r}_{\alpha_r\beta_r}
\]
is the unique tensor (up to a constant) fulfilling the following properties:
\begin{itemize}
\item It depends only on the metric tensor $g_{\alpha\beta}$ and its first two derivatives,
\item it is symmetric in $\mu\nu$, and
\item it has vanishing covariant derivative with respect to the Levi-Civita connection.
\end{itemize}
This was introduced by Lovelock \cite{Lovelock1970,lovel} as a generalization of the Einstein tensor for dimensions higher than 4. In this sense, the form $\Psi$ is a global expression for such a tensor.
\end{note}

Gathering all the equations found so far, we conclude that the solutions to the field equations are those cross sections that are integral for the following exterior differential system
\begin{equation}\label{eq:EDSForLovelock}
  \mathcal J:=\left<\Theta_l,\Theta_{ij},T^l,(\omega_{\mathfrak p})^i_j,(\Omega_{\mathfrak p})^i_j,\Omega^k_l\wedge\theta^l,\Psi^{ij}\right>.
\end{equation}

Notice that the forms $\Omega^k_l\wedge\theta^l$ and $(\Omega_{\mathfrak p})^i_j$ come as a consequence of the first Bianchi identity and the structure equation for the connection $\omega$, respectively.

\begin{note}
It is interesting to note that the forms $\Psi^{ij}$ can be written in terms of the $(m-1)$-forms $\theta_{lIJ}\wedge\Omega^{IJ}$. Indeed
\begin{align*}
(\eta^{il}\theta^j)\wedge \theta_{lIJ}\wedge\Omega^{IJ}=&(\eta^{il}\theta^j)\wedge\Omega^{st}\wedge\theta_{lstI'J'}\wedge\Omega^{I'J'}=\eta^{il}\wedge\Omega^{st}\wedge(\theta^j\theta_{lstI'J'})\wedge\Omega^{I'J'}\\
=&-\Omega^{st}\wedge\eta^{il}\left[
\vphantom{\sum_{a=1}^{r-1}}
-\delta^j_l\theta_{stI'J'}+\delta^j_s\theta_{ltI'J'}-\delta^j_t\theta_{slI'J'}
\right.\\ &\left.
-\sum_{a=1}^{r-1}\left(\delta^j_{i_a}\theta_{lstI'_aJ'}-\delta^j_{j_a}\theta_{lstI'J'_a}\right)\right]\wedge\Omega^{I'J'},
\end{align*}
where the symbol $I'_a$ has been used to indicate that the index $i_a$ has been removed from the multiindex $I'$. Then rearranging terms and using the symmetry of the product $\Omega^{st}\wedge\Omega^{I'J'}$,
\begin{align*}
(\eta^{il}\theta^j)\wedge \theta_{lIJ}\wedge\Omega^{IJ}=&\eta^{ij}\theta_{IJ}\wedge\Omega^{IJ}-\eta^{il}\left[\delta^j_s\theta_{ltI'J'}-\delta^j_t\theta_{lsI'J'}
\vphantom{\sum_{a=1}^{r-1}}
\right.\\&\left.
-\sum_{a=1}^{r-1}\left(\delta^j_{i_a}\theta_{lstI'_aJ'}-\delta^j_{j_a}\theta_{lstI'J'_a}\right)\right]\wedge\Omega^{st}\wedge\Omega^{I'J'}\\
=&\eta^{ij}\theta_{IJ}\wedge\Omega^{IJ}-r\eta^{il}\left(\delta^j_s\theta_{ltI'J'}-\delta^j_t\theta_{lsI'J'}\right)\wedge\Omega^{st}\wedge\Omega^{I'J'}\\
=&\eta^{ij}\theta_{IJ}\wedge\Omega^{IJ}-r\eta^{il}\left(\theta_{ltI'J'}\wedge\Omega^{jt}-\theta_{lsI'J'}\wedge\Omega^{sj}\right)\wedge\Omega^{I'J'},
\end{align*}
and renaming some indices we get
\[
(\eta^{il}\theta^j)\wedge \theta_{lIJ}\wedge\Omega^{IJ}=\eta^{ij}\theta_{IJ}\wedge\Omega^{IJ}-2r\eta^{it}\theta_{stI'J'}\wedge\Omega^{sj}\wedge\Omega^{I'J'}.
\]
Therefore
\[
-\frac{1}{2r}\left(\eta^{il}\theta^j+\eta^{jl}\theta^i\right)\wedge \theta_{lIJ}\wedge\Omega^{IJ}=\left(\eta^{it}\Omega^{sj}+\eta^{jt}\Omega^{si}-\frac{1}{r}\eta^{ij}\Omega^{st}\right)\wedge\theta_{stI'J'}\wedge\Omega^{I'J'}
\]
from where it follows
\[
\Psi^{ij}=-\frac{1}{2r}\left(\eta^{il}\theta^j+\eta^{jl}\theta^i\right)\wedge \theta_{lIJ}\wedge\Omega^{IJ}.
\]
\end{note}

\begin{note}
In particular, in \cite{Capriotti:2012gg,doi:10.1142/S0219887818500445}, two exterior differential systems where introduced to describe Palatini gravity, the first as a Griffiths variational problem takes the forms $\theta_{lIJ}\wedge\Omega^{IJ}$ among its generators, and the second one by means of the unified formalism associated with the first one, where the forms $\Psi^{ij}$ are used instead.

%If $\sigma^*(\theta_{lIJ}\wedge\Omega^{IJ})=0$, this clearly implies that $\sigma^*\Psi^{ij}=0$. On the other hand, if $\sigma^*\Psi^{ij}=0$, it can be seen (by taking coordinates) that $\sigma^*(\eta^{il}\theta^j+\eta^{jl}\theta^i)$ cannot vanish, which implies $\sigma^*(\theta_{lIJ}\wedge\Omega^{IJ})=0$.
\end{note}

%%%%%%%%%%%%%%%%%%%%%%%%%%%%%%%%%

\section{Conclusions and outlook}
\label{concl}

We have defined the Lovelock Lagrangian in the context of the multisymplectic framework for classical field theories,
and we have used this geometric formulation 
to characterize the properties of this Lagrangian,
to establish its Griffiths variational problem and derive the
corresponding field equations,
and to study the infinitesimal symmetries
of the system.
As this Lagrangian is singular, this is a (pre-multisymplectic) field theory with constraints and then we have developed the
Lagrangian--Hamiltonian unified formalism 
which is very suitable for its analysis.

Furthermore, if a variational problem has constraints, one can consider applying the constraints before or after performing the variations. In general, these two procedures lead to different sets of equations \cite{BK-86}. When both sets of equations are equal, we say that the variational problem has a \emph{consistent truncation} (by the constraints). In \cite{DP2} the authors claim that the Lovelock Lagrangians can be characterized by the consistency of the Levi-Civita truncation; that is, replacing the arbitrary connection by the Levi-Civita connection associated to the metric. We hope that the formalism presented in this paper could be very appropriate to analyze this property of gravitational theories, and to study the concept of consistent truncation for variational principles in a geometrical way. 
This would be a topic for further research.

Finally, the methods and results obtained in this paper are suitable to describe other gravity theories, such as
the $f(R)$ and $f(T)$ \cite{CdL-2011,Fe-12} models or the BF-gravity \cite{CGM-16}. 
Thus, the multisymplectic formulation of these theories and, eventually,
the development of their Lagrangian-Hamiltonian unified formalism 
are lines of further research.

%%%%%%%%%%%%%%%%%%%%%%%%%%%%%%%%%%

\appendix
\label{apendice}

\section{Geometric elements}

\subsection{Levi-Civita symbol and generalized Kronecker delta}\label{sec:levi}

We denote the Levi-Civita symbol in $k$ indices by $\varepsilon^{i_1\dots i_k}$ and $\varepsilon_{i_1\dots i_k}$
\[
\varepsilon^{i_1\dots i_k}=
\begin{cases}
1 & \text{if }(i_1,\dots, i_k)\text{ is an even permutation of }(1,\dots, k),\\
-1 & \text{if }(i_1,\dots, i_k)\text{ is an odd permutation of }(1,\dots, k),\\
0 & \text{ in other case}
\end{cases}
\]
On the other hand, the generalized Kronecker delta \cite{book:7577} in $k$ indices $\delta^{i_1\dots i_k}_{j_1\dots j_k}$ is given by
\[
\delta^{i_1\dots i_k}_{j_1\dots j_k}=
\begin{cases}
1 & \text{if }(i_1,\dots, i_k)\text{ is an even permutation of }(j_1,\dots, j_k),\\
-1 & \text{if }(i_1,\dots, i_k)\text{ is an odd permutation of }(j_1,\dots, j_k),\\
0 & \text{ in other case}
\end{cases}
\]
and can also be expressed as
\[
\delta^{i_1\dots i_k}_{j_1\dots j_k}=
\begin{vmatrix*}
\delta^{i_1}_{j_1} & \dots & \delta^{i_k}_{j_1}\\
\vdots & \ddots & \vdots \\
\delta^{i_1}_{j_k} & \dots & \delta^{i_k}_{j_k}
\end{vmatrix*} \ .
\]

Both tensors are completely antisymmetric in their indices and they are seen to fulfill the following properties
\begin{enumerate}
\item $\varepsilon_{i_1\dots i_ki_{k+1}\dots i_m}\varepsilon^{i_1\dots i_kj_{k+1}\dots j_m}=k!\delta_{i_{k+1}\dots i_m}^{j_{k+1}\dots j_m}$.
\item For any tensor $a^{\nu_1\dots \nu_p}$,
\[
\frac{1}{p!}\delta^{\mu_1\dots \mu_p}_{\nu_1\dots \nu_p}a^{\nu_1\dots \nu_p}=a^{[\nu_1\dots \nu_p]}.
\]
\item If $A$ is a matrix with entries $a^i_j$,
\[
\varepsilon_{i_1\dots i_m}a^{i_1}_{j_1}\dots a^{i_m}_{j_m}=\det(A)\varepsilon_{j_1\dots j_m}.
\]
\end{enumerate}
Similar properties hold interchanging lower and upper indices.

\subsection{Hodge star operator}\label{sec:hodge}

Let $V$ be an $m$-dimensional real vector space and $\eta$ a non-degenerate bilinear symmetric form on $V$. For each $k\leq m$, we can define another non-degenerate bilinear and symmetric form $\hat\eta$ on $\Lambda^kV$ extending $\eta$ as the unique bilinear form such that on elementary $k$-vectors $\alpha=\alpha_1\wedge\dots\wedge\alpha_k$ and $\beta=\beta_1\wedge\dots\wedge\beta_k$,
\[
\hat\eta(\alpha,\beta):=\det[\eta(\alpha_i,\beta_j)].
\]

Given that $\Lambda^mV$ is one-dimensional, it follows that there are exactly two $m$-vectors $v$ fulfilling $\hat\eta(v,v)=1$. Let $\omega$ be a preferred unit $m$-vector (observe that fixing such a vector amounts to choosing an orientation for $V$). Then, the star Hodge operator $\star:\Lambda^kV\rightarrow\Lambda^{n-k}V$ related to $\eta$ is defined by requiring
\[
\alpha\wedge(\star\beta)=\hat\eta(\alpha,\beta)\omega.
\]
It is customary to refer to $\star\beta$ as the Hodge dual of $\beta$. 

Given an ordered basis $\{e_1,\dots,e_m\}$ of $V$ such that $\omega=e_1\wedge\dots\wedge e_m$, it is clear that $$\star(e_{i_1}\wedge\dots\wedge e_{i_k})=e_{i_{k+1}}\wedge\dots\wedge e_{i_m}$$ if and only if, $(i_1,\dots,i_m)$ is an even permutation of $(1,\dots,m)$.

\begin{lem}
If $\beta=\beta^{i_1\dots i_k}e_{i_1}\wedge\dots\wedge e_{i_k}$ ($k\leq m$). Then
\[
\star\beta=\frac{1}{(m-k)!}\beta^{i_1\dots i_k}e_{i_1}\eta_{i_1j_1}\dots\eta_{i_kj_k}\varepsilon^{j_{1}\dots j_m}e_{j_{k+1}}\wedge\dots\wedge e_{j_m}.
\]
\end{lem}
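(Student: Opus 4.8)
Both sides of the claimed identity are linear in $\beta$, and $\star$ is linear, so it suffices to verify the formula when $\beta$ is a decomposable $k$-vector $\beta=\beta_1\wedge\dots\wedge\beta_k$; moreover only the totally antisymmetric part of the coefficient array $\beta^{i_1\dots i_k}$ enters either side. Since $\star\beta$ is characterized by the defining relation $\alpha\wedge(\star\beta)=\hat\eta(\alpha,\beta)\,\omega$ for every $\alpha\in\Lambda^kV$, and both sides of that relation are linear in $\alpha$, the plan is to take $\alpha=e_{a_1}\wedge\dots\wedge e_{a_k}$, to call $\gamma$ the $(m-k)$-vector on the right-hand side of the statement, and to check directly that $\alpha\wedge\gamma=\hat\eta(\alpha,\beta)\,\omega$; uniqueness in the defining relation then forces $\gamma=\star\beta$.

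First I would compute $\alpha\wedge\gamma$. Wedging $e_{a_1}\wedge\dots\wedge e_{a_k}$ into $e_{j_{k+1}}\wedge\dots\wedge e_{j_m}$ and using $\omega=e_1\wedge\dots\wedge e_m$ gives $e_{a_1}\wedge\dots\wedge e_{a_k}\wedge e_{j_{k+1}}\wedge\dots\wedge e_{j_m}=\varepsilon_{a_1\dots a_k j_{k+1}\dots j_m}\,\omega$, whence
\[
\alpha\wedge\gamma=\frac{1}{(m-k)!}\,\beta^{i_1\dots i_k}\eta_{i_1j_1}\dots\eta_{i_kj_k}\,\varepsilon^{j_1\dots j_m}\,\varepsilon_{a_1\dots a_k j_{k+1}\dots j_m}\,\omega.
\]
The next step is to contract the two Levi-Civita symbols over the $m-k$ repeated indices $j_{k+1},\dots,j_m$: after moving those indices to the front of each symbol (the two accompanying signs multiply to $1$), the first contraction property in Appendix~\ref{sec:levi} yields $\varepsilon^{j_1\dots j_m}\varepsilon_{a_1\dots a_k j_{k+1}\dots j_m}=(m-k)!\,\delta^{j_1\dots j_k}_{a_1\dots a_k}$, which cancels the factor $1/(m-k)!$ and leaves $\alpha\wedge\gamma=\beta^{i_1\dots i_k}\eta_{i_1j_1}\dots\eta_{i_kj_k}\,\delta^{j_1\dots j_k}_{a_1\dots a_k}\,\omega$.

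Finally I would identify the scalar coefficient with $\hat\eta(\alpha,\beta)$. Expanding the generalized Kronecker delta as $\delta^{j_1\dots j_k}_{a_1\dots a_k}=\sum_{\sigma\in S_k}\text{sgn}(\sigma)\,\delta^{j_1}_{a_{\sigma(1)}}\dots\delta^{j_k}_{a_{\sigma(k)}}$ and contracting, the coefficient becomes $\sum_{\sigma\in S_k}\text{sgn}(\sigma)\,\beta^{i_1\dots i_k}\eta_{i_1 a_{\sigma(1)}}\dots\eta_{i_k a_{\sigma(k)}}$, which for $\beta=\beta_1\wedge\dots\wedge\beta_k$ is precisely the determinant $\det[\eta(e_{a_p},\beta_q)]=\det[\eta(\alpha_p,\beta_q)]$, i.e.\ $\hat\eta(\alpha,\beta)$ by the definition recalled in Appendix~\ref{sec:hodge}. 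Comparing, $\alpha\wedge\gamma=\hat\eta(\alpha,\beta)\,\omega$ for every basis $\alpha$, hence for all $\alpha$, so $\gamma=\star\beta$.

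The calculation is routine; the delicate point, and the place I would be most careful, is the combinatorial bookkeeping — the $(m-k)!$ produced by the double Levi-Civita contraction, the $k!$'s implicit in passing between the decomposable $k$-vector $\beta$ and its fully antisymmetric component array $\beta^{i_1\dots i_k}$ (so that the normalization written in the statement is the correct one), and the sign incurred by permuting the contracted indices — together with keeping the orientation convention $\omega=e_1\wedge\dots\wedge e_m$ fixed throughout, so that the result is consistent with the remark made just before the lemma, that $\star(e_{i_1}\wedge\dots\wedge e_{i_k})=e_{i_{k+1}}\wedge\dots\wedge e_{i_m}$ exactly when $(i_1,\dots,i_m)$ is an even permutation of $(1,\dots,m)$.
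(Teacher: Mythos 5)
Your proof is correct and follows essentially the same route as the paper's: wedge the claimed $(m-k)$-vector with an arbitrary (basis) $\alpha$, contract the two Levi-Civita symbols to produce $(m-k)!\,\delta^{j_1\dots j_k}_{a_1\dots a_k}$, and identify the resulting scalar with $\hat\eta(\alpha,\beta)$ so that the defining relation $\alpha\wedge(\star\beta)=\hat\eta(\alpha,\beta)\,\omega$ pins down $\star\beta$. The only difference is cosmetic — you specialize to basis $\alpha$ and decomposable $\beta$ by linearity and spell out the final determinant identification, which the paper leaves implicit.
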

\begin{proof}
Let $\alpha=\alpha^{l_1\dots l_k}e_{l_{1}}\wedge\dots\wedge e_{l_k}$. Then
\begin{align*}
\alpha\wedge(\star\beta)&=\alpha^{l_1\dots l_k}e_{l_{1}}\wedge\dots\wedge e_{l_k}\wedge\left[\frac{1}{(m-k)!}\beta^{i_1\dots i_k}e_{i_1}\eta_{i_1j_1}\dots\eta_{i_kj_k}\varepsilon^{j_1\dots j_m}
%\right.\\ &\phantom{=}\left.
e_{j_{k+1}}\wedge\dots\wedge e_{j_m}\vphantom{\frac{1}{(m-k)!}}\right]\\
&=\frac{1}{(m-k)!}\alpha^{l_1\dots l_k}e_{l_{1}}\beta^{i_1\dots i_k}\eta_{i_1j_1}\dots\eta_{i_kj_k}\varepsilon^{j_1\dots j_m}
%\\ &\phantom{=}
e_{l_{1}}\wedge\dots\wedge e_{l_k}\wedge e_{j_{k+1}}\wedge\dots\wedge e_{j_m}\\
&=\frac{1}{(m-k)!}\alpha^{l_1\dots l_k}e_{l_{1}}\beta^{i_1\dots i_k}\eta_{i_1j_1}\dots\eta_{i_kj_k}\varepsilon^{j_1\dots j_kj_{k+1}\dots j_m}\varepsilon_{l_1\dots l_kj_{k+1}\dots j_m}
%\\ &\phantom{=}
e_1\wedge\dots\wedge e_m\\
&=\alpha^{l_1\dots l_k}e_{l_{1}}\beta^{i_1\dots i_k}\eta_{i_1j_1}\dots\eta_{i_kj_k}\delta^{j_1\dots j_k}_{l_1\dots l_k}e_1\wedge\dots\wedge e_m\\
&=\hat\eta(\alpha,\beta)e_1\wedge\dots\wedge e_m \ .
\end{align*}
\end{proof}

\subsection{Cartan decomposition of $\mathfrak{gl}(m)$}\label{sec:cartan}

% Poner aqui definicion de $\mathfrak k$ {\color{red}\Huge COMPLETAR}
Let us use the matrix $\eta$ in order to decompose $\mathfrak{gl}\left(m\right)$; in order to get it, consider the involution
\[
  \theta:\mathfrak{gl}\left(m,\mC\right)\rightarrow\mathfrak{gl}\left(m,\mC\right):A\mapsto-\eta A^\dagger\eta;
\]
the eigenspaces of $\theta$, associated to the eigenvalues $\pm1$, induce the decomposition 
\[
\mathfrak{gl}\left(m,\mC\right)=\mathfrak{u}\left(m-s,s\right)\oplus\mathfrak{s}\left(m-s,s\right)
\]
where $s$ is the signature of $\eta$. It should be noted that
\[
\left[\mathfrak{u}\left(m-s,s\right),\mathfrak{u}\left(m-s,s\right)\right]\subset\mathfrak{u}\left(m-s,s\right),\qquad\left[\mathfrak{s}\left(m-s,s\right),\mathfrak{s}\left(m-s,s\right)\right]\subset\mathfrak{u}\left(m-s,s\right),
\]
and that $\mathfrak{s}\left(m-s,s\right)$ is an invariant subspace under the adjoint action of $\mathfrak{u}\left(m-s,s\right)$.

This decomposition descends to $\mathfrak{gl}\left(m\right)\subset\mathfrak{gl}\left(m,\mC\right)$, namely
\[
\mathfrak{gl}\left(m\right)=\kf\oplus\pf,
\]
where
\[
\kf:=\mathfrak{u}\left(m-s,s\right)\cap\mathfrak{gl}\left(m\right),\qquad\pf:=\mathfrak{s}\left(m-s,s\right)\cap\mathfrak{gl}\left(m\right).
\]
Denoting $s:=\theta_{|\mathfrak{gl}\left(m\right)}$, we have that $\kf$ (resp. $\pf$) is the eigenspace corresponding to the eigenvalue $+1$ (resp. $-1$) for $s$. The projectors in every of these eigenspaces become
\[
\pi_\kf\left(A\right):=\frac{1}{2}\left(A-\eta A^T\eta\right),\qquad\pi_\pf\left(A\right):=\frac{1}{2}\left(A+\eta A^T\eta\right).
\]

%There exists some facts related to this decomposition which could be useful when dealing with $\mathfrak{gl}\left(m\right)$-valued forms. First of all, 
Given  $N$ a manifold and $\gamma\in\Omega^p\left(N,\mathfrak{gl}\left(m\right)\right)$, we define
\[
\gamma_\kf:=\pi_\kf\circ\gamma,\qquad\gamma_\pf:=\pi_\pf\circ\gamma.
\]

If $\gamma=\gamma^i_jE^j_i$ is the expression of $\gamma$ in terms of the canonical basis of $\mathfrak{gl}\left(m\right)$, then we have
\[
\left(\gamma_\kf\right)^i_j=\frac{1}{2}\left(\gamma^i_j-\eta_{jp}\gamma^p_q\eta^{qi}\right)\qquad\text{and}\qquad \left(\gamma_\pf\right)^i_j=\frac{1}{2}\left(\gamma^i_j+\eta_{jp}\gamma^p_q\eta^{qi}\right).
\]
These previous considerations are useful when dealing with $\mathfrak{gl}\left(m\right)$-valued forms.

\subsection{Vector-valued and Lie-algebra-valued differential forms}\label{sec:vvalform}

Let $U,V$ and $W$ be finite dimensional real vector spaces and $M$ a smooth manifold and let $B:U\times V\rightarrow W$ be a bilinear map. If $\alpha\in\Omega^k(M,U)$ and $\beta \in\Omega^l(M,V)$ are differential forms with values in $U$ and $V$, respectively; we can define a new differential form with values in $W$, $B\left(\alpha\wedcol\beta\right)\in \Omega^{k+l}(M,W)$, as
\[
B\left(\alpha\wedcol\beta\right)(X_1,\dots,X_{k+l})=\frac{1}{k!l!}\sum_{\sigma\in S_{k+l}}\text{sgn}(\sigma)B\left(\alpha(X_{\sigma(1)},\dots,X_{\sigma(k)}),\beta(X_{\sigma(k+1)},\dots,X_{\sigma(k+l)})\right).
\]

We are interested in a series of particular instances of this definition:
\begin{description}
\item[Pairing.] Consider $U=V^*$, $W=\mR$ and $B=\left<\cdot,\cdot\right>$ the natural pairing between $V^*$ and $V$. In this particular case we denote 
\[
B\left(\alpha\wedcol\beta\right)=\left<\alpha\wedcol\beta\right>.
\]

\item[Linear representation.] Consider $U=\text{End}(V)$, $W=V$, and denote $B$ the natural action of $\text{End}(V)$ on $V$. In this particular case we denote 
\[
B\left(\alpha\wedcol\beta\right)=\alpha\weddot\beta.
\]

\item[Lie bracket.] Consider $U=V=W=\mathfrak g$, where $\mathfrak g$ is a Lie algebra and $B=[\cdot,\cdot]$ is the related Lie bracket. In this particular case we denote 
\[
B\left(\alpha\wedcol\beta\right)=[\alpha\wedcol\beta].
\]

\item[Wedge product.] Consider $U=V$, $W=\Lambda^2V$ and let $B=\cdot\wedge\cdot$ be the usual wedge product. In this particular case we denote 
\[
B\left(\alpha\wedcol\beta\right)=\alpha\wedge\beta.
\]

\item[Constant linear map.] Consider a zero form  $\alpha$ assigning to each $x\in M$ the same linear map $A\in \text{Lin}(V,W)$, in this particular case we denote
\[
B\left(\alpha\wedcol\beta\right)=A(\beta).
\]
\end{description}

Similar definitions can be given if $V$ is a vector bundle over $M$ and $\alpha$ and $\beta$ are vector bundle valued differential forms.

%%%%%%%%%%%%%%%%%%%%%%%%%%%%%
\section*{Acknowledgments}

J. Gaset and N. Rom\'an-Roy acknowledge
the financial support from the 
{\sl Spanish Ministerio de Econom\'{\i}a y Competitividad}
project MTM2014--54855--P, 
the {\sl  Spanish Ministerio de Ciencia, Innovaci\'on y Universidades} project
PGC2018-098265-B-C33, and
the {\sl Secretary of University and Research of the Ministry of Business and Knowledge of
the Catalan Government} project
2017--SGR--932.

\noindent S. Capriotti and L.M. Salomone thank CONICET for its financial support.

\noindent L.M. Salomone thanks Prof. N. Rom\'{a}n-Roy and the staff of the Department of Mathematics at UPC for their generous hospitality during his stay in Barcelona.

\noindent Finally, we thank the referee for his constructive comments and suggestions.

%%%%%%%%%%%%%%%%%%%%%%%%%%%%%
%\section*{Data availability}

%The data that supports the findings of this study are available within the article.

%%%%%%%%%%%%%%%%%%%%%%%%%%%%%%%%

%\bibliographystyle{plain}
%\bibliography{BibTHarm}
%\bibliography{/home/santi/Dropbox/Trabajos/Bibliography/iniciotesis}

%\printbibliography

%\bibliographystyle{abbrv}
%%%%%%%%%%%%%%%%%%%%%%%%%%%%%%%%%%%%%%%%%%%%%%%%%%%%%%%%%%%%%%%%
%%%%%%%%%%%%%%%%%%%%%%%%%%%%%%%%%%%%%%%%%%%%%%%%%%%%%%%%%%%%%%%%
\addcontentsline{toc}{section}{References}
\itemsep 0pt plus 1pt
{\small

}

\end{document}